\providecommand{\U}[1]{\protect\rule{.1in}{.1in}}
\newtheorem{theorem}{Theorem}
\newtheorem{lemma}[theorem]{Lemma}
\newtheorem{proposition}[theorem]{Proposition}
\newtheorem{remark}[theorem]{Remark}
\newenvironment{proof}[1][Proof]{\noindent\textbf{#1.} }{\ \rule{0.5em}{0.5em}}
\begin{document}

\title{On monotone metric of classical channel and distribution spaces: asymptotic theory}
\author{Keiji Matsumoto}
\maketitle

\begin{abstract}
The aim of the manuscript is to characterize monotone metric in the space of
Markov map. Here, metric is not necessarily Riemanian, i.e., may not be the
inner product of the vector with itself.

So far, there have been plenty of literatures on the metric in the space of
probability distributions and quantum states. Among them, Cencov and
Petz\thinspace\ characterized all the monotone metrics in the classical and
quantum state space. \ As for channels, however, only a little is known about
its geometrical structures.

In that author's previous manuscript , the upper and the lower bound of
monotone channel metric was derived using resource conversion theory, and it
is proved that any monotone metric cannot be Riemanian.

Due to the latter result, we cannot rely on Cencov's theory, to build a
geometric theory consistent across probability distributions and channels. To
dispense with the assumption that a metric is Riemanian, we introduce some
assumptions on asymptotic behavior, \textit{weak asymptotic additivity} and
\textit{lower asymptotic continuity}. The proof utilizes resource conversion
technique. In the end of the paper, an implication on quantum state metrics is discussed.

\end{abstract}

\section{Introduction}

The aim of the manuscript is to characterize monotone metric in the space of
Markov map. Here, metric means the square of the norm defined on the tangent
space, and not necessarily Riemanian, nor induced from an inner product.

So far, there have been plenty of literatures on the metric in the space of
probability distributions and quantum states. Cencov, sometime in 1970s,
proved the monotone Riemanian metric in probability distribution space is
unique up to constant multiple, and identical to Fisher information metric
\cite{Cencov}. He also discussed invariant connections in the same space.
Amari and others independently worked on the same objects, especially from
differential geometrical view points, and applied to number of problems in
mathematical statistics, learning theory, time series analysis, dynamical
systems, control theory, and so on\cite{Amari}\cite{AmariNagaoka}. Quantum
mechanical states are discussed in literatures such as \cite{AmariNagaoka}%
\cite{Fujiwara}\cite{Matsumoto}\cite{Matsumoto}\cite{Petz}. Among them,
Petz\thinspace\cite{Petz} characterized all the monotone Riemanian metrics in
the quantum state space using operator mean theory.

As for channels, however, much less is known. To my knowledge, there had been
no study about axiomatic characterization of distance measures in the
classical or quantum channel space, except for the author's
manuscript\thinspace\cite{Matsumoto:2010-1}. In that manuscript, the upper and
the lower bound of monotone channel metric was derived using resource
conversion theory, and it is proved that any monotone metric cannot be Riemanian.

The latter result has some impact on the axiomatic theory of the monotone
metric in the space of classical and quantum states, since both
Cencov\thinspace\cite{Cencov} and Petz\thinspace\cite{Petz} assumed metrics
are Riemanian. Since classical and quantum states can be viewed as channels
with the constant output, it is preferable to dispense with this assumption.
Recalling that the Fisher information is useful in asymptotic theory, it would
be natural to introduce some assumptions on their asymptotic behavior. Hence,
we introduced \textit{weak asymptotic additivity} and \textit{lower asymptotic
continuity}. By these additional assumptions, we not only recovers uniqueness
result of Cencov \thinspace\cite{Cencov}, but also proves uniqueness of the
monotone metric in the channel space.

In this proof, again, we used resource conversion technique. A difference from
usual resource conversion technique is that asymptotic continuity is replaced
by a bit weaker \textit{lower asymptotic continuity.} The reason is that the
former condition is not satisfied by Fisher information.

In the end, there is an implication on quantum state metrics.

\section{Notations and conventions}

In discussing probability distributions, the underlying set is denoted by
$\mathcal{\Omega}$. In discussing channels, $\mathcal{\Omega}_{\mathrm{in}}$
($\mathcal{\Omega}_{\mathrm{out}}$) denotes the totality of the inputs
(outputs). In this paper, they are either $\left\{  1,\cdots,k\right\}  $ or $%
\mathbb{R}
^{d}$. $x$,$y$, etc. denotes an element of $\mathcal{\Omega}_{\mathrm{in}}$
,$\mathcal{\Omega}_{\mathrm{out}}$, $\mathcal{\Omega}$. Also, $x^{n}=\left(
x_{1},x_{2},\cdots,x_{n}\right)  $, $y^{n}=\left(  y_{1},y_{2},\cdots
,y_{n}\right)  $, etc. denotes an element of $\mathcal{\Omega}^{\times n}$,
$\mathcal{\Omega}_{\mathrm{in}}^{\times n}$ ,$\mathcal{\Omega}_{\mathrm{out}%
}^{\times n}$ .

Random variable taking values in $\mathcal{\Omega}$, $\mathcal{\Omega
}_{\mathrm{in}}$ ,$\mathcal{\Omega}_{\mathrm{out}}$ are denoted by $X$, $Y$,
while random variable taking values in $\mathcal{\Omega}^{\times n}$,
$\mathcal{\Omega}_{\mathrm{in}}^{\times n}$ ,$\mathcal{\Omega}_{\mathrm{out}%
}^{\times n}$ are denoted by $X^{n}$, $Y^{n}$. The dsitribution of $X$ is
denoted by $\mathbb{P}_{X}$, while its density (with respect to lebesgue
measure or counting measure depending on the underlying set) is denoted by
$\mathrm{p}_{X}$. $\ \mathbb{P}_{X|Y}$ and $\mathrm{p}_{X|Y}$ denotes the
onditional distribution and its density, respectively. In this paper, the
existence of density with respect to a standard underlying measure $\mu$
(counting measure for $\left\{  1,\cdots,k\right\}  $, and Lebesgue measure
for $%
\mathbb{R}
^{d}$ ) is always assumed. Hence, by abusing the term, we sometimes say
`distribution $p$'. By $\mathcal{P}$, $\mathcal{P}_{\mathrm{in}}$, and
$\mathcal{P}_{\mathrm{out}}$ we denote the totality of the probability density
functions over $\mathcal{\Omega}$, $\mathcal{\Omega}_{\mathrm{in}}$, and
$\mathcal{\Omega}_{\mathrm{out}}$, respectively.

Channel $\Phi$ is a linear map from probability distributions over to
$\mathcal{\Omega}_{\mathrm{in}}$ to those over $\mathcal{\Omega}%
_{\mathrm{out}}$, but also considered as a map from $L_{1}\left(
\mathcal{\Omega}_{\mathrm{in}}\right)  $ to $L_{1}\left(  \mathcal{\Omega
}_{\mathrm{out}}\right)  $. Hence, we use notation such as $\Phi\left(
\mathbb{P}_{X}\right)  $ , as well as $\Phi\left(  \mathrm{p}_{X}\right)  $.
The totality of channels is denoted by $\mathcal{C}$. If there is a need to
indicate input and output space, we use the notation such as $\mathcal{C}%
\left(  \mathcal{P}_{\mathrm{in}},\mathcal{P}_{\mathrm{out}}\right)  $.
$\Phi^{\ast}$ denotes the dual map of $\Phi$,%
\[
\int f\left(  x\right)  \Phi\left(  \mathrm{p}_{X}\right)  \left(  x\right)
\mathrm{d}\mu\left(  x\right)  =\int\Phi^{\ast}\left(  f\right)  \left(
x\right)  \mathrm{p}_{X}\left(  x\right)  \mathrm{d}\mu\left(  x\right)  .
\]

A tangent space is denoted by a notation $\mathcal{T}_{\cdot}\left(
\cdot\right)  $. $\ \delta$, $\delta^{\prime}$ etc. denotes an element of
$\mathcal{T}_{p}\left(  \mathcal{P}\right)  $ (the tangent space to the set
$\mathcal{P}$ at the point $p$) etc, w hile $\Delta$, $\Delta^{\prime}$ etc
denotes an element of $\mathcal{T}_{\Phi}\left(  \mathcal{C}\right)  $ etc.
\ In the paper, we identify $\delta\in\mathcal{T}_{p}\left(  \mathcal{P}%
\right)  $ with an element of $L^{1}$ in the form of $c\left(  p_{1}%
-p_{2}\right)  $, where $p_{1}$, $p_{2}\in\mathcal{P}$ . Hence, the
differential map of $\Phi$ is also denoted by $\Phi$, by abusing the notation.
$L$ is a random variable defined by
\[
L\left(  x\right)  =\frac{\delta\left(  x\right)  }{p\left(  x\right)
},\text{\thinspace\ }x\in\Omega.\text{\thinspace\thinspace}%
\]
and its low is under $p$, unless otherwise mentioned. Also, $\Delta$ is
identified with a linear map in the form of $c\left(  \Psi_{1}-\Psi
_{2}\right)  $, where $\Psi_{1}$, $\Psi_{2}\in\mathcal{C}$.

A pair $\left\{  p,\delta\right\}  $ and $\left\{  \Phi,\Delta\right\}  $ is
called \textit{local data at }$p$ and $\Phi$, respectively, since it decides
local behaviour of one-parameter family of distributions at the point $p$ and
$\Phi$, respectively. We denote by $\mathrm{N}\left(  a,\sigma^{2}\right)  $
and $\delta\mathrm{N}\left(  a,\sigma^{2}\right)  $ the Gaussian distribution
with mean $a$ and variance $\sigma^{2}$ and singed measure defined by
\[
\delta\mathrm{N}\left(  a,\sigma^{2}\right)  \left(  B\right)  :=\frac
{1}{\sqrt{2\pi}\sigma}\int_{B}\frac{x-a}{\sigma^{2}}\exp\left[  -\frac
{1}{2\sigma^{2}}\left(  x-a\right)  ^{2}\right]  \mathrm{d}x,
\]
respectively. Thus, the local data $\left\{  \mathrm{N}\left(  a,\sigma
^{2}\right)  ,\delta\mathrm{N}\left(  a,\sigma^{2}\right)  \right\}  $
describes local behaviour of Gaussian shift family $\left\{  \mathrm{N}\left(
\theta,\sigma^{2}\right)  \right\}  _{\theta\in%
\mathbb{R}
}$ at $\theta=a$.

The symbol `$\otimes$' means direct product of vectors. Given $f\in
L_{1}\left(  \mathcal{\Omega}_{1}\right)  $ and $g\in L_{1}\left(
\mathcal{\Omega}_{2}\right)  $, $f\otimes g$ is defined by
\[
f\otimes g\left(  x_{1},x_{2}\right)  =f\left(  x_{1}\right)  f\left(
x_{2}\right)  .
\]
The linear span of $\left\{  f\otimes g\right\}  $ is denoted by $L_{1}\left(
\mathcal{\Omega}_{1}\right)  \otimes L_{2}\left(  \mathcal{\Omega}_{2}\right)
$($=L_{1}\left(  \mathcal{\Omega}_{1}\times\Omega_{2}\right)  $). Also, given
$\Phi_{1}\in\mathcal{C}\left(  \mathcal{P}_{\mathrm{in,1}},\mathcal{P}%
_{\mathrm{out,1}}\right)  $, $\Phi_{2}\in\mathcal{C}\left(  \mathcal{P}%
_{\mathrm{in,2}},\mathcal{P}_{\mathrm{out,2}}\right)  $, $\Phi_{1}\otimes
\Phi_{2}\in\mathcal{C}\left(  \mathcal{P}_{\mathrm{in,1}}\otimes
\mathcal{P}_{\mathrm{in,2}},\mathcal{P}_{\mathrm{out,1}}\otimes\mathcal{P}%
_{\mathrm{out,2}}\right)  $ is defined by the relation
\[
\Phi_{1}\otimes\Phi_{2}\left(  f\otimes g\right)  =\Phi_{1}\left(  f\right)
\otimes\Phi_{2}\left(  g\right)
\]
and linearity. For a real valued random variable $F_{1}$ and $F_{2}$ over
$\Omega_{1}$ and $\Omega_{2}$, respectively, $F_{1}\otimes F_{2}$ is a random
variable over $\Omega_{1}\times\Omega_{2}$ with
\[
F_{1}\otimes F_{2}\left(  x_{1},x_{2}\right)  =F\left(  x_{1}\right)  F\left(
x_{2}\right)  .
\]

We use abbraviations such as $f^{\otimes n}:=f\otimes f\otimes\cdots\otimes
f$, and%
\begin{align*}
\delta^{\left(  n\right)  }  &  :=\delta\otimes p^{\otimes n-1}+p\otimes
\delta\otimes p^{\otimes n-2}+\cdots+p^{\otimes n-1}\otimes\delta
\in\mathcal{T}_{p}\left(  \mathcal{P}^{\otimes n}\right)  ,\\
L^{\left(  n\right)  }  &  :=L\otimes1^{\otimes n-1}+1\otimes L\otimes
1^{\otimes n-2}+\cdots+1^{\otimes n-1}\otimes L,\\
\Delta^{\left(  n\right)  }  &  :=\Delta\otimes\Phi^{\otimes n-1}+\Phi
\otimes\Delta\otimes\Phi^{\otimes n-2}+\cdots+\Phi^{\otimes n-1}\otimes
\Delta\in\mathcal{T}_{\Phi}\left(  \mathcal{C}^{\otimes n}\right)  ,\\
\left\{  p,\delta\right\}  ^{\otimes n}  &  :=\left\{  p^{\otimes n}%
,\delta^{\left(  n\right)  }\right\}  ,\\
\left\{  \Phi,\Delta\right\}  ^{\otimes n}  &  :=\left\{  \Phi^{\otimes
n},\Delta^{\left(  n\right)  }\right\}  ,\\
\left\{  p_{1},\delta_{1}\right\}  \otimes\left\{  p_{2},\delta_{2}\right\}
&  :=\left\{  p_{1}\otimes p_{2},\delta_{1}\otimes p_{2}+p_{1}\otimes
\delta_{2}\right\}  .
\end{align*}
$\ \ \ \left\Vert \cdot\right\Vert _{1}$ denotes, for a (singed) measure,
total variation, and for a function, $L_{1}$-norm. $\left\Vert \cdot
\right\Vert _{\mathrm{cb}}$ denotes completely bounded norm: for a linear map
$\Lambda$ form signed measures ($L^{1}$-functions ) to signed measures
($L^{1}$-functions),
\[
\left\Vert \Lambda\right\Vert _{\mathrm{cb}}=\max_{p\text{: probability
distributions}}\left\Vert \Lambda\otimes\mathbf{I}\left(  p\right)
\right\Vert _{1}.
\]
(Here note $\Lambda$ may not be a Markov map, i.e., may not map a probability
distirbution to another dsitribution.)

$g_{p}\left(  \delta\right)  $ and $G_{\Phi}\left(  \Delta\right)  $ denotes
\ a metric, or square of a norm in $\mathcal{T}_{p}\left(  \mathcal{P}\right)
$ and $\mathcal{T}_{\Phi}\left(  \mathcal{C}\right)  $, respectively. In the
present paper, they are not necessarily Riemanian. A probability distribution
$p$ is identified with the Markov map which sends all the input probability
distributions to $p$, so that notations such as $G_{p}\left(  \delta\right)  $
makes sense. $J_{p}\left(  \delta\right)  $ denotes Fisher information,%
\[
J_{p}\left(  \delta\right)  :=\mathbb{E}\left\{  L\right\}  ^{2}=\int\left\{
L\left(  x\right)  \right\}  ^{2}p\left(  x\right)  \mathrm{d}\mu\left(
x\right)  =\int\frac{\left\{  \delta\left(  x\right)  \right\}  ^{2}}{p\left(
x\right)  }\mathrm{d}\mu\left(  x\right)
\]
$\ \ $\ $\ $Finally, $\Phi\left(  \cdot|x\right)  \in\mathcal{P}%
_{\mathrm{out}}$ is the distribution (, or its density) of the output when the
input is $x$. Also, with $\Delta=c\left(  \Phi_{1}-\Phi_{2}\right)  ,$%

\[
\Delta\left(  \cdot|x\right)  :=c\left(  \Phi_{1}\left(  \cdot|x\right)
-\Phi_{2}\left(  \cdot|x\right)  \right)  \in\mathcal{T}_{p}\left(
\mathcal{P}_{\mathrm{out}}\right)  .
\]

\section{Probability distributions}

Cencov had proven uniqueness (up to the constant multiple) of the monotone
metric in the space of classical probability distributions defined over the
finite set. In the proof, it is essential that the metric is Riemanian, i.e.,
induced from an inner product. As will be noted in Theorem\thinspace
\ref{th:no-inner-product}, however, this assumption is not compatible with
monotonicity in case of channels. Hence, we dispense with this assumption,
and, instead, introduce new axioms which rules asymptotic behaviour of a metric.

\subsection{Axioms for the metrics of probability distributions}

\begin{description}
\item[(M0)] $g_{p}\left(  \delta\right)  \geq g_{\Psi\left(  p\right)
}\left(  \Psi\left(  \delta\right)  \right)  $.

\item[(A0)] $\lim_{n\rightarrow\infty}\frac{1}{n}$ $g_{p^{\otimes n}}\left(
\delta^{\left(  n\right)  }\right)  =g_{p}\left(  \delta\right)  $.

\item[(C0)] If $\left\Vert q^{n}-p^{\otimes n}\right\Vert _{1}\rightarrow0$
and $\frac{1}{\sqrt{n}}\left\Vert \delta^{\prime n}-\delta^{\left(  n\right)
}\right\Vert _{1}\rightarrow0$ then
\[
\varliminf_{n\rightarrow\infty}\frac{1}{n}\left(  g_{q^{n}}\left(
\delta^{\prime n}\right)  -g_{p^{\otimes n}}\left(  \delta^{\left(  n\right)
}\right)  \right)  \geq0.
\]

\item[(N0)] (Normalization) In case $\left\{  p,\delta\right\}  =\left\{
\mathrm{N}\left(  0,1\right)  ,\delta\mathrm{N}\left(  0,1\right)  \right\}  $
(a Gaussian shift family),
\[
g_{p}\left(  \delta\right)  =1.
\]

\end{description}

\subsection{Simulation and asymptotic tangent smulation:definition}

\textit{Simulation} of $\left\{  p_{\theta}\right\}  $ is the pair $\left\{
q_{\theta},\Lambda\right\}  $ with%
\[
p_{\theta}=\Lambda\left(  q_{\theta}\right)  ,\,\,\forall\theta\in\Theta,
\]
and \textit{tangent simulation} of the local data $\left\{  p,\delta\right\}
$ is the pair $\left\{  q,\delta^{\prime},\Lambda\right\}  $ with
\[
p=\Lambda\left(  q\right)  ,\,\,\,\delta=\Lambda\left(  \delta^{\prime
}\right)  .
\]
If in addition there is $\Lambda^{\prime}$ with
\[
q=\Lambda^{\prime}\left(  p\right)  ,\,\,\delta^{\prime}=\Lambda^{\prime
}\left(  \delta\right)  ,
\]
we say $\left\{  p,\delta\right\}  $ and $\left\{  q,\delta^{\prime}\right\}
$ are \textit{equivalent}, and express this relation by the notaton
\[
\left\{  p,\delta\right\}  \equiv\left\{  q,\delta^{\prime}\right\}  .
\]

An \textit{asymptotic tangent simulation} of $\left\{  p^{\otimes n}%
,\delta^{\left(  n\right)  }\right\}  $ means a sequence $\left\{
q^{n},\delta^{\prime n},\Lambda^{n}\right\}  _{n=1}^{\infty}$ of triplet of a
probability density $q^{n}$, an $L^{1}$-function $\delta^{\prime n}$ with
$\int\delta^{\prime n}\mathrm{d}\mu=0$, and \ a Markov map $\Lambda^{n}$, such
that%
\begin{align}
\lim_{n\rightarrow\infty}\left\Vert p^{\otimes n}-\Lambda^{n}\left(
q^{n}\right)  \right\Vert _{1}  &  =0,\\
\lim_{n\rightarrow\infty}\frac{1}{\sqrt{n}}\left\Vert \delta^{\left(
n\right)  }-\Lambda^{n}\left(  \delta^{\prime n}\right)  \right\Vert _{1}  &
=0.
\end{align}
\ We call $\max\left\{  \left\Vert p^{\otimes n}-\Lambda^{n}\left(
q^{n}\right)  \right\Vert _{1},\left\Vert p^{\otimes n}-\Lambda^{n}\left(
q^{n}\right)  \right\Vert _{1}\right\}  $ the \textit{error} of the asymptotic
tangent simulation. In all the cases treated int the present paper, the
following stronger conditions are satisfied:%
\begin{align}
\left\Vert p^{\otimes n}-\Lambda^{n}\left(  q^{n}\right)  \right\Vert _{1}  &
\leq\frac{1}{\sqrt{n}}C\left(  \left\{  p,\delta\right\}  \right)
,\label{simulation-tangent-prob-1}\\
\frac{1}{\sqrt{n}}\left\Vert \delta^{\left(  n\right)  }-\Lambda^{n}\left(
\delta^{\prime n}\right)  \right\Vert _{1}  &  \leq\frac{1}{n^{1/4}}C\left(
\left\{  p,\delta\right\}  \right)  . \label{simulation-tangent-prob-2}%
\end{align}
Below, $C\left(  \left\{  p,\delta\right\}  \right)  $ is sometimes denoted by
$C$, as long as no confusion is likely to arise.

\begin{proposition}
\label{prop:sufficient}Let $L\left(  x\right)  :=\delta\left(  x\right)
/p\left(  x\right)  $. Then,
\[
\left\{  p,\delta\right\}  \equiv\left\{  \mathrm{p}_{L}\left(  l\right)
,\,l\mathrm{p}_{L}\left(  l\right)  \right\}  .
\]

\end{proposition}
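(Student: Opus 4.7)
The plan is to exhibit the two Markov maps $\Lambda$ and $\Lambda'$ required by the definition of equivalence, and verify the four identities directly.

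First, I would take $\Lambda'$ to be the deterministic Markov map induced by pushforward under the function $L\colon\Omega\to\mathbb{R}$, $L(x)=\delta(x)/p(x)$ (setting $L(x):=0$ on the $p$-null set $\{p=0\}$, where $\delta$ also vanishes as an $L^1$-function). For a probability density $r$ on $\Omega$, $\Lambda'(r)$ is the density of the pushforward measure $L_\ast(r\,\mathrm d\mu)$. By definition of the distribution of the random variable $L$, $\Lambda'(p)=\mathrm p_L$. For $\delta$, a direct calculation of the pushforward signed measure gives
\[
\Lambda'(\delta)(l)=\int \delta(x)\,\mathbf 1[L(x)=l]\,\mathrm d\mu(x)=l\int p(x)\,\mathbf 1[L(x)=l]\,\mathrm d\mu(x)=l\,\mathrm p_L(l),
\]
with the obvious integral-against-test-function interpretation in the continuous case.

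Second, for $\Lambda$ I would use the Bayesian reverse kernel: the conditional distribution $\mathbb P_{X\mid L=l}$, whose density is $p(x)\mathbf 1[L(x)=l]/\mathrm p_L(l)$. Define $\Lambda(q)(x):=\int \mathbb P_{X\mid L=l}(x)\,q(l)\,\mathrm d l$. Marginalization gives $\Lambda(\mathrm p_L)(x)=p(x)$. Substituting $q(l)=l\,\mathrm p_L(l)$ and using Bayes' identity,
\[
\Lambda(l\,\mathrm p_L)(x)=\int \frac{p(x)\mathbf 1[L(x)=l]}{\mathrm p_L(l)}\,l\,\mathrm p_L(l)\,\mathrm d l=p(x)\,L(x)=\delta(x).
\]
This gives the required pair of simulations in both directions, so $\{p,\delta\}\equiv\{\mathrm p_L,l\,\mathrm p_L\}$.

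The main obstacle I anticipate is not the algebra but the measure-theoretic bookkeeping in the continuous setting: $L$ is only defined $p$-a.e., $\mathrm p_L$ may not be a genuine density without further care when $L$ is not absolutely continuous, and the conditional density $\mathbb P_{X\mid L=l}$ must be constructed via disintegration. In the discrete case everything reduces to finite sums and the verification is immediate; in the $\mathbb R^d$ case one should either assume $L$ has a well-defined density (consistent with the paper's standing assumption of densities with respect to a standard $\mu$) or write the identities against bounded test functions so that the pushforward computations make sense as signed measures without appealing to densities of $L$. Either convention closes the proof.
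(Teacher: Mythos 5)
Your proof is correct and takes essentially the same route as the paper's: the forward map is the pushforward (coarse-graining) by the statistic $L$, and the reverse map is the conditional kernel $\mathrm{p}_{X|L}$, with the same two pairs of identities verified in each direction. The measure-theoretic caveats you raise are real but are handled no more carefully in the paper, which simply introduces the induced measure $\nu$ on the range of $L$ and writes the same formulas.
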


\begin{proof}
Observe
\begin{align*}
\int_{x:L\left(  x\right)  =l}p\left(  x\right)  \mathrm{d}\mu\left(
x\right)   &  =\mathrm{p}_{L}\left(  l\right)  ,\\
\int_{x:L\left(  x\right)  =l}\delta\left(  x\right)  \mathrm{d}\mu\left(
x\right)   &  =\int_{x:L\left(  x\right)  =l}L\left(  x\right)  p\left(
x\right)  \mathrm{d}\mu\left(  x\right) \\
&  =l\int_{x:L\left(  x\right)  =l}p\left(  x\right)  \mathrm{d}\mu\left(
x\right)  =l\mathrm{p}_{L}\left(  l\right)  ,
\end{align*}
where $\mu$ is either Lebesgue measure ($\Omega=%
\mathbb{R}
^{d}$) or counting measure ($\Omega=\{1,\cdots,k\}$). Also,
\begin{align*}
\mathrm{p}_{X|L}\left(  x|l\right)  \mathrm{p}_{L}\left(  l\right)   &
=\left\{
\begin{array}
[c]{cc}%
p\left(  x\right)  , & \left(  l=L\left(  x\right)  \right) \\
0, & \text{otherwise}%
\end{array}
\right.  ,\\
\mathrm{p}_{X|L}\left(  x|l\right)  \,\left\{  l\mathrm{p}_{L}\left(
l\right)  \right\}   &  =\left\{
\begin{array}
[c]{cc}%
L\left(  x\right)  p\left(  x\right)  =\delta\left(  x\right)  , & \left(
l=L\left(  x\right)  \right) \\
0, & \text{otherwise}%
\end{array}
\right.  .
\end{align*}
Therefore, letting $\nu$ be a measure induced from $\mu$ via change of the
variable $l=\delta\left(  x\right)  /p\left(  x\right)  $,
\begin{align*}
\int\mathrm{p}_{X|L}\left(  x|l\right)  \mathrm{p}_{L}\left(  l\right)
\mathrm{d}\nu\left(  l\right)   &  =p\left(  x\right)  ,\\
\int\mathrm{p}_{X|L}\left(  x|l\right)  \left\{  l\mathrm{p}_{L}\left(
l\right)  \right\}  \mathrm{d}\nu\left(  l\right)   &  =\delta\left(
x\right)  .
\end{align*}

\end{proof}

\begin{lemma}
\label{lem:sim-tangent}Let $L^{^{\prime}n}:=\delta^{\prime n}/q^{n}$, and
suppose that $q^{n}=\mathrm{p}_{L^{\prime n}}$ . Let $\tilde{L}^{n}$ be a
random variable defined over $\mathcal{B}\left(  L^{\left(  n\right)
}\right)  $, obeying the distribution$\ $%
\begin{align*}
\mathrm{p}_{\tilde{L}^{n}}\left(  l^{n}\right)   &  :=\tilde{\Lambda}%
^{n}\left(  \mathrm{p}_{L^{\prime n}}\right)  \left(  l^{n}\right) \\
&  :=\int P^{n}\left(  l^{n}|l^{\prime n}\right)  \mathrm{p}_{L^{\prime n}%
}\left(  l^{\prime n}\right)  \mathrm{d}l^{\prime n}.
\end{align*}
\ Define $\Lambda^{n}$\textrm{ }by%
\[
\Lambda^{n}\left(  \mathrm{q}\right)  \left(  x^{n}\right)  :=\int
\tilde{\Lambda}^{n}\left(  \mathrm{q}\right)  \left(  l^{n}\right)
\mathrm{p}_{X^{n}|L^{\left(  n\right)  }}\left(  x^{n}|l^{n}\right)
\mathrm{d}l^{n}.\,
\]
For $\left\{  q^{n},\delta^{\prime n},\Lambda^{n}\right\}  _{n=1}^{\infty}$ to
satisfy (\ref{simulation-tangent-prob-1}) and (\ref{simulation-tangent-prob-2}%
), it sufficeas that \
\begin{equation}
\left\Vert \mathrm{p}_{L^{\left(  n\right)  }}-\mathrm{p}_{\tilde{L}^{n}%
}\right\Vert _{1}\leq\frac{C^{\prime}}{\sqrt{n}},
\label{simulation-tangent-prob-1-2}%
\end{equation}
and
\begin{equation}
\max\left\{  \mathbb{E}\left\vert \mathbb{E}\left[  L^{\prime n}\,|\tilde
{L}^{n}\,\right]  -\tilde{L}^{n}\right\vert ,\,\,\mathbb{E}\left(  L\right)
^{2},\frac{1}{n}\mathbb{E}\left(  \tilde{L}^{n}\right)  ^{2}\right\}  \leq
a<\infty,\, \label{simulation-tangent-prob-2-2}%
\end{equation}
where
\[
2C^{\prime}+3a\leq C.
\]

\end{lemma}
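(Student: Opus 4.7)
The plan is to rewrite both the originals $p^{\otimes n},\delta^{(n)}$ and the images $\Lambda^n(q^n),\Lambda^n(\delta'^n)$ as mixtures of the common Markov kernel $p_{X^n\mid L^{(n)}}$ indexed by the sufficient statistic $L^{(n)}$, so that $L^1$-contraction of that kernel reduces the whole problem to comparisons on the $l^n$-axis, and then to handle those comparisons using the two hypotheses.

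For the rewriting, since $p_{X^n\mid L^{(n)}}(x^n\mid l^n)$ is concentrated on $\{L^{(n)}(x^n)=l^n\}$, Proposition~\ref{prop:sufficient} gives
\[
p^{\otimes n}(x^n)=\int p_{L^{(n)}}(l^n)p_{X^n\mid L^{(n)}}(x^n\mid l^n)\,dl^n,\quad \delta^{(n)}(x^n)=\int l^n\,p_{L^{(n)}}(l^n)p_{X^n\mid L^{(n)}}(x^n\mid l^n)\,dl^n,
\]
while unwinding the definition of $\Lambda^n$ and applying Bayes to the joint density $P^n(l^n\mid l'^n)p_{L'^n}(l'^n)$ gives
\[
\Lambda^n(q^n)(x^n)=\int p_{\tilde L^n}(l^n)p_{X^n\mid L^{(n)}}(x^n\mid l^n)\,dl^n,\;\; \Lambda^n(\delta'^n)(x^n)=\int \mathbb{E}[L'^n\mid\tilde L^n=l^n]\,p_{\tilde L^n}(l^n)p_{X^n\mid L^{(n)}}(x^n\mid l^n)\,dl^n.
\]
Contractivity of the Markov kernel $p_{X^n\mid L^{(n)}}$ in $L^1$ then immediately delivers (\ref{simulation-tangent-prob-1}) via the triangle inequality: $\|p^{\otimes n}-\Lambda^n(q^n)\|_1\leq \|p_{L^{(n)}}-p_{\tilde L^n}\|_1\leq C'/\sqrt n\leq C/\sqrt n$.

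The main obstacle is (\ref{simulation-tangent-prob-2}). Contraction again reduces it to bounding $\int|l^n p_{L^{(n)}}(l^n)-\mathbb{E}[L'^n\mid\tilde L^n=l^n]\,p_{\tilde L^n}(l^n)|\,dl^n$, and adding/subtracting $l^n p_{\tilde L^n}(l^n)$ splits this into $\int|l^n|\cdot|p_{L^{(n)}}-p_{\tilde L^n}|\,dl^n$ plus $\mathbb{E}|\tilde L^n-\mathbb{E}[L'^n\mid\tilde L^n]|$; the second piece is $\leq a$ by hypothesis. The first I would attack by Cauchy--Schwarz against the positive measure $|p_{L^{(n)}}-p_{\tilde L^n}|\,dl^n$, dominating the integrand in the second-moment factor pointwise by $p_{L^{(n)}}+p_{\tilde L^n}$ and using $\mathbb{E}(L^{(n)})^2=n\mathbb{E}(L^2)\leq na$ together with $\mathbb{E}(\tilde L^n)^2\leq na$ to obtain $\int|l^n|\cdot|p_{L^{(n)}}-p_{\tilde L^n}|\,dl^n\leq\sqrt{2na}\cdot\sqrt{C'/\sqrt n}=n^{1/4}\sqrt{2aC'}$. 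Dividing by $\sqrt n$, (\ref{simulation-tangent-prob-2}) holds with constant $\sqrt{2aC'}+a$, and AM--GM gives $\sqrt{2aC'}\leq C'+a/2$, so the total constant is $\leq C'+3a/2\leq(2C'+3a)/2\leq C/2\leq C$, as required. The Cauchy--Schwarz / second-moment step is the only substantive ingredient; the rest is bookkeeping of how Markov contractions transport $L^1$ errors from the $l^n$-axis back to $\Omega^{\times n}$.
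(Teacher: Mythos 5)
Your proof is correct, and its skeleton coincides with the paper's: both reduce everything to the $l^{n}$-axis via Proposition\thinspace\ref{prop:sufficient} and $L^{1}$-contractivity of the kernel $\mathrm{p}_{X^{n}|L^{\left(  n\right)  }}$, obtain (\ref{simulation-tangent-prob-1}) immediately from (\ref{simulation-tangent-prob-1-2}), identify $\tilde{\Lambda}^{n}\left(  \delta^{\prime n}\right)  \left(  l^{n}\right)  =\mathbb{E}\left[  L^{\prime n}|\tilde{L}^{n}=l^{n}\right]  \mathrm{p}_{\tilde{L}^{n}}\left(  l^{n}\right)  $ by the same Bayes computation, and split the tangent error by adding and subtracting $l^{n}\mathrm{p}_{\tilde{L}^{n}}\left(  l^{n}\right)  $. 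The one place you genuinely diverge is the cross term $\int\left\vert l^{n}\right\vert \left\vert \mathrm{p}_{L^{\left(  n\right)  }}-\mathrm{p}_{\tilde{L}^{n}}\right\vert \mathrm{d}l^{n}$: the paper truncates at level $n^{1/4}$ and controls the tails by Chebyshev using the same second-moment hypotheses, arriving at $2n^{1/4}\left\Vert \mathrm{p}_{L^{\left(  n\right)  }}-\mathrm{p}_{\tilde{L}^{n}}\right\Vert _{1}+2a\,n^{-1/4}$, whereas you apply Cauchy--Schwarz against the total-variation measure and dominate $\left\vert \mathrm{p}_{L^{\left(  n\right)  }}-\mathrm{p}_{\tilde{L}^{n}}\right\vert $ by the sum of the two densities, getting $\sqrt{2aC^{\prime}}\,n^{1/4}$. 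Both yield the $n^{-1/4}$ rate from the same inputs; your route is slightly shorter and, via AM--GM, gives a constant $\sqrt{2aC^{\prime}}+a\leq C^{\prime}+\tfrac{3}{2}a$ that is tighter than the paper's $2C^{\prime}+3a$, so the stated normalization $2C^{\prime}+3a\leq C$ is comfortably met. The only implicit fact you rely on that deserves a line is $\mathbb{E}\left(  L^{\left(  n\right)  }\right)  ^{2}=n\,\mathbb{E}\left(  L\right)  ^{2}$, which holds because $\mathbb{E}_{p}L=\int\delta\,\mathrm{d}\mu=0$ for a tangent vector $\delta$.
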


\begin{proof}
Since $\mathbb{P}_{X^{n}|L^{\left(  n\right)  }}\left(  \mathcal{A}|l^{\left(
n\right)  }\right)  \leq1$, (\ref{simulation-tangent-prob-1-2}) implies
\[
\left\Vert p^{\otimes n}-\Lambda^{n}\left(  q^{n}\right)  \right\Vert
_{1}=\sup_{\mathcal{A}\text{:measurable}}\left\vert \mathbb{E\,P}%
_{X^{n}|L^{\left(  n\right)  }}\left(  \mathcal{A}|L^{\left(  n\right)
}\right)  -\mathbb{E\,P}_{X^{n}|L^{\left(  n\right)  }}\left(  \mathcal{A}%
|\tilde{L}^{n}\right)  \right\vert \leq\frac{C^{\prime}}{\sqrt{n}},
\]
which is (\ref{simulation-tangent-prob-1}). By Chebychev's inequality,
\begin{align*}
\mathbb{E}\left\{  \frac{1}{\sqrt{n}}\left\vert L^{\left(  n\right)
}\right\vert \,\,;\frac{1}{\sqrt{n}}\left\vert L^{\left(  n\right)
}\right\vert \,\geq n^{1/4}\right\}   &  \leq n^{-1/4}\cdot\frac{1}%
{n}\mathbb{E}\left(  L^{\left(  n\right)  }\right)  ^{2}\leq an^{-1/4},\\
\mathbb{E}\left\{  \frac{1}{\sqrt{n}}\left\vert \tilde{L}^{n}\right\vert
\,\,;\frac{1}{\sqrt{n}}\left\vert \tilde{L}^{n}\right\vert \,\geq
n^{1/4}\right\}   &  \leq an^{-1/4}.
\end{align*}
Also,
\begin{align*}
\tilde{\Lambda}^{n}\left(  \delta^{\prime n}\right)  \left(  l^{n}\right)   &
=\int l^{\prime n}P^{n}\left(  l^{n}|l^{\prime n}\right)  \mathrm{p}%
_{L^{\prime n}}\left(  l^{\prime n}\right)  \mathrm{d}l^{\prime n}\\
&  =\int l^{\prime n}\frac{P^{n}\left(  l^{n}|l^{\prime n}\right)
\mathrm{p}_{L^{\prime n}}\left(  l^{\prime n}\right)  }{\mathrm{p}_{\tilde
{L}^{n}}\left(  l^{n}\right)  }\mathrm{d}l^{\prime n}\,\mathrm{p}_{\tilde
{L}^{n}}\left(  l^{n}\right) \\
&  =\mathbb{E}\left[  L^{^{\prime}n}\,|\tilde{L}^{n}=l^{n}\,\right]
\mathrm{p}_{\tilde{L}^{n}}\left(  l^{n}\right)  .
\end{align*}
Therefore, by Proposition\thinspace\ref{prop:sufficient},
\begin{align*}
&  \frac{1}{\sqrt{n}}\left\Vert \delta^{\left(  n\right)  }-\Lambda^{n}\left(
\delta^{\prime n}\right)  \right\Vert _{1}\\
&  \leq\frac{1}{\sqrt{n}}\int\left\vert l^{n}\mathrm{p}_{L^{\left(  n\right)
}}\left(  l^{n}\right)  -\tilde{\Lambda}^{n}\left(  \delta^{\prime n}\right)
\left(  l^{n}\right)  \right\vert \mathrm{d}l^{n}\\
&  =\frac{1}{\sqrt{n}}\int\left\vert l^{n}\mathrm{p}_{L^{\left(  n\right)  }%
}\left(  l^{n}\right)  -\mathbb{E}\left[  L^{^{\prime}n}\,|\tilde{L}^{n}%
=l^{n}\,\right]  \mathrm{p}_{\tilde{L}^{n}}\left(  l^{n}\right)  \right\vert
\mathrm{d}l^{n}\\
&  \leq\frac{1}{\sqrt{n}}\int\left\vert l^{n}\mathrm{p}_{L^{\left(  n\right)
}}\left(  l^{n}\right)  -l^{n}\mathrm{p}_{\tilde{L}^{n}}\left(  l^{n}\right)
\right\vert \mathrm{d}l^{n}\\
&  +\frac{1}{\sqrt{n}}\int\left\vert l^{n}\mathrm{p}_{\tilde{L}^{n}}\left(
l^{n}\right)  -\mathbb{E}\left[  L^{^{\prime}n}\,|\tilde{L}^{n}=l^{n}%
\,\right]  \mathrm{p}_{\tilde{L}^{n}}\left(  l^{n}\right)  \right\vert
\mathrm{d}l^{n}\\
&  \leq\frac{1}{\sqrt{n}}\int\left\vert l^{n}\mathrm{p}_{L^{\left(  n\right)
}}\left(  l^{n}\right)  -l^{n}\mathrm{p}_{\tilde{L}^{n}}\left(  l^{n}\right)
\right\vert \mathrm{d}l^{n}+\frac{a}{n^{1/4}}\\
&  \leq\mathbb{E}\left\{  \frac{1}{\sqrt{n}}\mathbb{E}\left\vert L^{\left(
n\right)  }-\tilde{L}^{n}\right\vert ;\frac{1}{\sqrt{n}}\left\vert L^{\left(
n\right)  }\right\vert \,\leq n^{1/4},\frac{1}{\sqrt{n}}\left\vert \tilde
{L}^{n}\right\vert \,\leq n^{1/4}\right\}  +\frac{2a}{n^{1/4}}+\frac
{a}{n^{1/4}}\\
&  \leq2n^{1/4}\left\Vert p^{\otimes n}-\Lambda^{n}\left(  q^{n}\right)
\right\Vert _{1}+\frac{3a}{n^{1/4}}\\
&  \leq\frac{3a+2C^{\prime}}{n^{1/4}}\leq\frac{C}{n^{1/4}}.
\end{align*}

\end{proof}

\begin{proposition}
\label{prop:sym-chain}Suppose there is an asymptotic tangent simulation of
$\left\{  p_{i+1}^{n},\delta_{i+1}^{n}\right\}  $ by $\left\{  p_{i+1}%
^{n},\delta_{i+1}^{n}\right\}  $ with the error $f_{i}\left(  n\right)  $ .
Then, if $k$ is a constant of $n$, there is an asymptotic tangent simulation
of $\left\{  p_{k}^{n},\delta_{k}^{n}\right\}  $ by $\left\{  p_{1}^{n}%
,\delta_{1}^{n}\right\}  $ with the error $\sum_{i=1}^{k-1}f_{i}\left(
n\right)  $.
\end{proposition}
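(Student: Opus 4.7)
The plan is to compose the Markov maps supplied at each step of the chain. (I read the statement as containing a small typo: the hypothesis must be that $\{p_{i+1}^{n},\delta_{i+1}^{n}\}$ is simulated \emph{by} $\{p_{i}^{n},\delta_{i}^{n}\}$, otherwise the conclusion is vacuous.) So for each $i\in\{1,\dots,k-1\}$ I am given a Markov map $\Lambda_{i}^{n}$ with
\[
\|p_{i+1}^{n}-\Lambda_{i}^{n}(p_{i}^{n})\|_{1}\leq f_{i}(n),\qquad \frac{1}{\sqrt{n}}\|\delta_{i+1}^{n}-\Lambda_{i}^{n}(\delta_{i}^{n})\|_{1}\leq f_{i}(n).
\]
I set $\Lambda^{n}:=\Lambda_{k-1}^{n}\circ\Lambda_{k-2}^{n}\circ\cdots\circ\Lambda_{1}^{n}$, which is again Markov since compositions of Markov maps are Markov, and claim that $\{q^{n},\delta'^{n},\Lambda^{n}\}=\{p_{1}^{n},\delta_{1}^{n},\Lambda^{n}\}$ does the job with error $\sum_{i=1}^{k-1}f_{i}(n)$.

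The main step is a telescoping identity combined with $L^{1}$-contractivity of Markov maps on signed measures. Inserting and subtracting intermediate terms,
\[
p_{k}^{n}-\Lambda^{n}(p_{1}^{n})=\sum_{i=1}^{k-1}\Lambda_{k-1}^{n}\circ\cdots\circ\Lambda_{i+1}^{n}\bigl(p_{i+1}^{n}-\Lambda_{i}^{n}(p_{i}^{n})\bigr),
\]
where by convention the $i=k-1$ summand involves the empty composition and reads $p_{k}^{n}-\Lambda_{k-1}^{n}(p_{k-1}^{n})$. Since any Markov map $\Lambda$ satisfies $\|\Lambda(\nu)\|_{1}\leq\|\nu\|_{1}$ for every signed measure $\nu$, the triangle inequality gives
\[
\|p_{k}^{n}-\Lambda^{n}(p_{1}^{n})\|_{1}\leq\sum_{i=1}^{k-1}\|p_{i+1}^{n}-\Lambda_{i}^{n}(p_{i}^{n})\|_{1}\leq\sum_{i=1}^{k-1}f_{i}(n).
\]
Exactly the same telescoping, with $\delta$ in place of $p$ and the $\frac{1}{\sqrt{n}}$ factor preserved by linearity of each $\Lambda_{i}^{n}$, yields $\frac{1}{\sqrt{n}}\|\delta_{k}^{n}-\Lambda^{n}(\delta_{1}^{n})\|_{1}\leq\sum_{i=1}^{k-1}f_{i}(n)$.

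Because $k$ is a constant independent of $n$, the finite sum $\sum_{i=1}^{k-1}f_{i}(n)$ tends to zero as $n\to\infty$ whenever each $f_{i}(n)$ does, so the two limits in the definition of asymptotic tangent simulation are satisfied for $\{p_{1}^{n},\delta_{1}^{n},\Lambda^{n}\}$. I do not expect any real obstacle here; the argument is essentially bookkeeping, and the only nontrivial ingredient is the $L^{1}$-contractivity of Markov maps on signed measures, which is standard. The one point to be slightly careful about is that the hypothesis would fail if $k$ were allowed to grow with $n$, because then the sum $\sum_{i=1}^{k-1}f_{i}(n)$ need not go to zero — this is precisely why the proposition explicitly assumes $k$ is constant.
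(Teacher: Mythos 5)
Your proof is correct, and it fills in exactly the argument the paper leaves out (the paper's own ``proof'' reads only ``Obvious thus omitted''): compose the Markov maps, telescope, and use $L^{1}$-contractivity, with the typo in the hypothesis read the same way you read it. Nothing further is needed.
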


\begin{proof}
Obvious thus omitted.
\end{proof}

\subsection{Simulation of probabiltiy distribution family: a background from
decision theory}

Concept of simulation has been discussed in the field of statistical decision
theory in relation with the notion of sufficiency\thinspace\cite{Torgersen}.
Consider families $\mathcal{E=}\left\{  p_{\theta}\right\}  _{\theta\in\Theta
}$ , $\mathcal{F=}\left\{  q_{\theta}\right\}  _{\theta\in\Theta}$ of
probability distributions, and a function $e:\theta\rightarrow e\left(
\theta\right)  >0$. Also, let $\left(  D,\mathcal{D}\right)  $ be a decision
space. Then $\mathcal{F}$ is said to be $e$\textit{-deficient} relative to
$\mathcal{E}$ \ if, for any loss function $W_{\theta}$ with $\left\vert
W_{\theta}\left(  d\right)  \right\vert \leq1$ and for any decision function
$d:x\rightarrow d\left(  x\right)  \in D$, there is $d^{\prime}:y\rightarrow
d^{\prime}\left(  y\right)  \in D$ with
\begin{equation}
\int q_{\theta}\left(  y\right)  W_{\theta}\left(  d^{\prime}\left(  y\right)
\right)  \mathrm{d}\mu^{\prime}\leq\int p_{\theta}\left(  x\right)  W_{\theta
}\left(  d\left(  x\right)  \right)  \mathrm{d}\mu+e_{\theta}.
\label{e-deficient}%
\end{equation}
0-defficiency is simply called deficiency. The celebrated \textit{randomizing
criteria}, a necessary and sufficient condition for $e$-defficiency\textit{
}is the existence of\textit{ }$\Lambda$ with\textit{ }%
\[
\left\Vert p_{\theta}-\Lambda\left(  q_{\theta}\right)  \right\Vert \leq
e_{\theta}\text{.}%
\]
\ Especially, 0-deficiency is equivalent to that $Y\sim q_{\theta}$ is a
sufficient statistic of $\mathcal{E=}\left\{  p_{\theta}\right\}  _{\theta
\in\Theta}$. Thus, $e$-defficiency is an approximate version of suffuciency.

This randomizing criteria motivates our emphasis on simulation. Its `local'
version
\begin{align*}
\sup_{\theta}\left\Vert p_{\theta}-\Lambda\left(  q_{\theta}\right)
\right\Vert  &  =0\\
\left\Vert \frac{\partial p_{\theta}}{\partial\theta^{i}}-\Lambda\left(
\frac{\partial q_{\theta}}{\partial\theta^{i}}\right)  \right\Vert  &  \leq
e_{i}%
\end{align*}
is called \textit{local }$e$\textit{-deficiency at} $\theta$.

\subsection{Gaussian shift family}

\begin{proposition}
\label{prop:g-shift}Suppose $\left\{  p,\delta\right\}  =\left\{
\mathrm{N}\left(  \theta,\sigma^{2}\right)  ,\delta\mathrm{N}\left(
\theta,\sigma^{2}\right)  \right\}  $. Suppose also (M0), and (N0) holds. Then
we have \quad%
\[
g_{p}\left(  \delta\right)  =\frac{1}{\sigma^{2}}=J_{p}\left(  \delta\right)
.
\]

\end{proposition}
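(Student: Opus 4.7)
The plan is to use monotonicity (M0) with a pair of inverse affine Markov maps, together with (N0), to force equality with $1/\sigma^{2}$. The argument splits into (i) translating the mean to zero, and (ii) rescaling the variance to one; both steps exploit the fact that affine changes of variable are deterministic Markov maps whose inverses are also Markov maps, so (M0) applied twice yields equality.

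For the mean, I would fix $\sigma$ and use the translation Markov map $T_{\theta}\colon f(x)\mapsto f(x-\theta)$, which is invertible with inverse $T_{-\theta}$. Differentiating the identity $\mathrm{N}(\theta,\sigma^{2}) = T_{\theta}(\mathrm{N}(0,\sigma^{2}))$ in $\theta$ shows that $T_{\theta}$ also sends $\delta\mathrm{N}(0,\sigma^{2})$ to $\delta\mathrm{N}(\theta,\sigma^{2})$. Two applications of (M0), one per direction, then give $g_{\mathrm{N}(\theta,\sigma^{2})}(\delta\mathrm{N}(\theta,\sigma^{2})) = g_{\mathrm{N}(0,\sigma^{2})}(\delta\mathrm{N}(0,\sigma^{2}))$, so it suffices to treat $\theta=0$.

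For the variance, I would introduce the scaling Markov map $S_{\sigma}\colon f(x)\mapsto \tfrac{1}{\sigma}f(x/\sigma)$, with inverse $S_{1/\sigma}$. A direct change of variables gives $S_{\sigma}(\mathrm{N}(0,1)) = \mathrm{N}(0,\sigma^{2})$, and the same substitution applied to the density of $\delta\mathrm{N}(0,1)$ gives $S_{\sigma}(\delta\mathrm{N}(0,1)) = \sigma\,\delta\mathrm{N}(0,\sigma^{2})$. Since $g$ is defined as the square of a norm it is $2$-homogeneous, i.e.\ $g_{p}(c\delta) = c^{2} g_{p}(\delta)$. Applying (M0) once to $S_{\sigma}$ at $\{\mathrm{N}(0,1),\delta\mathrm{N}(0,1)\}$ and once to $S_{1/\sigma}$ at $\{\mathrm{N}(0,\sigma^{2}),\delta\mathrm{N}(0,\sigma^{2})\}$, combined with (N0), yields the sandwich
\[
\frac{1}{\sigma^{2}} \;=\; g_{\mathrm{N}(0,1)}\!\bigl(\tfrac{1}{\sigma}\delta\mathrm{N}(0,1)\bigr) \;\leq\; g_{\mathrm{N}(0,\sigma^{2})}(\delta\mathrm{N}(0,\sigma^{2})) \;\leq\; \frac{1}{\sigma^{2}}\, g_{\mathrm{N}(0,1)}(\delta\mathrm{N}(0,1)) \;=\; \frac{1}{\sigma^{2}}.
\]
Equality with $J_{p}(\delta)$ is then immediate from the definitions: under $p = \mathrm{N}(0,\sigma^{2})$ one has $L(x) = x/\sigma^{2}$, so $\mathbb{E}[L^{2}] = \sigma^{2}/\sigma^{4} = 1/\sigma^{2}$.

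I do not expect any substantive obstacle. The only ingredients needed are (i) the existence of invertible Markov maps realizing the affine transformations $x \mapsto x+\theta$ and $x\mapsto \sigma x$, and (ii) the $2$-homogeneity of $g$ coming from its interpretation as the square of a norm; both are immediate from the setup, and the proof is essentially an invariance argument.
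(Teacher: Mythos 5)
Your proof is correct and follows essentially the same route as the paper: an invertible affine coordinate change of the data space, (M0) applied in both directions, $2$-homogeneity of $g$ as the square of a norm, and (N0). In fact your version is slightly more careful than the paper's, which states the pushed-forward tangent as $\tfrac{1}{\sigma^{2}}\delta\mathrm{N}(0,1)$ where the correct scaling (consistent with $2$-homogeneity and the final answer) is the $\tfrac{1}{\sigma}$ factor you computed.
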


\begin{proof}
By an affine coordinate change of the data space $\Omega=%
\mathbb{R}
$, $\left\{  p,\delta\right\}  =\left\{  \mathrm{N}\left(  \theta
,\sigma\right)  ,\delta\mathrm{N}\left(  \theta,\sigma\right)  \right\}  $ is
transformed to $\left\{  q,\delta^{\prime}\right\}  =\left\{  \mathrm{N}%
\left(  0,1\right)  ,\frac{1}{\sigma^{2}}\delta\mathrm{N}\left(  0,1\right)
\right\}  $. Its inverse coordinate transform coordinate change of the data
space $\Omega$ sends $\left\{  q,\delta^{\prime}\right\}  $ to $\left\{
p,\delta\right\}  $. Therefore, by (M0) and (N0),
\[
g_{p}\left(  \delta\right)  =g_{\mathrm{N}\left(  0,1\right)  }\left(
\frac{1}{\sigma^{2}}\delta\mathrm{N}\left(  0,1\right)  \right)  =\frac
{1}{\sigma^{2}}g_{\mathrm{N}\left(  0,1\right)  }\left(  \delta\mathrm{N}%
\left(  0,1\right)  \right)  =\frac{1}{\sigma^{2}}.
\]
.
\end{proof}

\begin{remark}
Similarly, one can prove $\left\{  \mathrm{N}\left(  0,1\right)
,\delta\mathrm{N}\left(  0,1\right)  \right\}  ^{\otimes n}$, $\left\{
\mathrm{N}\left(  0,\frac{1}{n}\right)  ,\delta\mathrm{N}\left(  0,\frac{1}%
{n}\right)  \right\}  $, and $\left\{  \mathrm{N}\left(  0,1\right)  ,\sqrt
{n}\delta\mathrm{N}\left(  0,1\right)  \right\}  $ are equivalent.
\end{remark}

\subsection{On local asymptotic normality}

Asymptotic tangent simulation by Gaussian shift is somewhat analogous to
so-called \textit{local asymptotic normality} (LAN, in short) \cite{Strasser}.
Difference between them are as follows. First, asymptotic tangent simulation
is concerned only with a particular point $p$, while LAN is concerned also
with its neibourhood. On the other hand, (\ref{simulation-tangent-prob-1}) for
asymptotic tangent simulation is norm convergence, and thus obviously stronger
than convergence of $\frac{1}{\sqrt{n}}L^{\left(  n\right)  }$ to
$\mathrm{N}\left(  0,J\right)  $ in law.

\subsection{Zero bias transform}

Let $X$ be a real valued random variable with the distribution $\mathbb{P}%
_{X}$. Then,
\[
W_{X}\left(  x\right)  :=\frac{1}{\mathbb{V}\left(  X\right)  }\int_{-\infty
}^{x}\left(  \mathbb{E}X-y\right)  \mathbb{P}_{X}\left(  \mathrm{d}y\right)
\]
satisfies $\int W_{X}\left(  x\right)  \mathrm{d}y=1$, and thus defines a
random variable $X^{\circ}$. The map from $X$ to $X^{\circ}$ is called called
\textit{zero-bias transform \cite{CovTr}\cite{CovTr-2}\cite{CovTr-3}%
\cite{Zero-bias}\cite{Zero-bias-2}.} The following lemmas are proved in the
literatures mentioned above.

\begin{lemma}
Suppose $0<\mathbb{V}\left(  X\right)  <\infty$ and $\mathbb{E}\left(
X\right)  =0$. Suppose also $f:%
\mathbb{R}
\rightarrow%
\mathbb{R}
$ is absolutely continuous, differentiable, and $\mathbb{E}\left\vert
f^{\prime}\left(  X^{\circ}\right)  \right\vert <\infty\,$,
\begin{equation}
\mathbb{E}\left(  X\,f\left(  X\right)  \right)  =\mathbb{V}\left(  X\right)
\mathbb{E}f^{\prime}\left(  X^{\circ}\right)  . \label{cov-identity}%
\end{equation}

\end{lemma}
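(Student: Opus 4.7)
The plan is to verify the identity by rewriting $W_X$ using $\mathbb{E}X=0$, splitting the integral around the origin so that the inner measure-integrals can be represented unambiguously, and then applying Fubini followed by the fundamental theorem of calculus (which is legitimate because $f$ is absolutely continuous).

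First I would note that $\mathbb{E}X=0$ together with $\int(\mathbb{E}X-y)\mathbb{P}_X(\mathrm{d}y)=0$ gives the two equivalent formulae
\[
W_X(x)\,\mathbb{V}(X)=-\int_{-\infty}^{x}y\,\mathbb{P}_X(\mathrm{d}y)=\int_x^{\infty}y\,\mathbb{P}_X(\mathrm{d}y),
\]
from which one also sees directly that $W_X\ge 0$ (use the left expression for $x<0$ and the right one for $x\ge 0$). Then I would compute $\mathbb{V}(X)\,\mathbb{E}f'(X^{\circ})=\int f'(x)W_X(x)\mathbb{V}(X)\,\mathrm{d}x$ by splitting the outer integral at $0$ and using the more convenient of the two expressions above on each half:
\[
\int_0^{\infty}f'(x)\int_x^{\infty}y\,\mathbb{P}_X(\mathrm{d}y)\,\mathrm{d}x
-\int_{-\infty}^0 f'(x)\int_{-\infty}^x y\,\mathbb{P}_X(\mathrm{d}y)\,\mathrm{d}x.
\]

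Next I would swap the order of integration by Fubini. Justification: on the first half the integrand is nonnegative after factoring $y\ge x\ge 0$ into $|y||f'(x)|$, and $\int_0^{\infty}|f'(x)|W_X(x)\,\mathrm{d}x \le \mathbb{E}|f'(X^{\circ})|<\infty$ by hypothesis, so Tonelli applies; an identical argument handles the second half. After swapping and using absolute continuity of $f$ to write $\int_0^y f'(x)\,\mathrm{d}x=f(y)-f(0)$ and $\int_y^0 f'(x)\,\mathrm{d}x=f(0)-f(y)$, both halves collapse to
\[
\int y\,[f(y)-f(0)]\,\mathbb{P}_X(\mathrm{d}y).
\]
Summing gives $\mathbb{E}(Xf(X))-f(0)\,\mathbb{E}X=\mathbb{E}(Xf(X))$ by the hypothesis $\mathbb{E}X=0$, and dividing by $\mathbb{V}(X)$ yields \eqref{cov-identity}.

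The only real obstacle is checking the Fubini hypothesis rigorously, since $f'$ need not be bounded; but because the two iterated integrals involve only the region where $yf'(x)$ has a single sign (or we can pass to $|\cdot|$ without changing the argument), the assumption $\mathbb{E}|f'(X^{\circ})|<\infty$ is exactly what is needed, and the subsequent manipulations are bookkeeping. The absolute continuity of $f$ is used only to justify the fundamental theorem of calculus on finite intervals $[0,y]$ and $[y,0]$, which it provides directly.
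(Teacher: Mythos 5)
Your proposal is correct. The paper does not prove this lemma at all --- it states that the zero-bias lemmas ``are proved in the literatures mentioned above'' and defers to the cited references --- so there is no in-paper argument to compare against; what you have written is essentially the standard proof from that literature (Goldstein--Reinert / Cacoullos--Papathanasiou). Your two key observations are exactly right: (i) $\mathbb{E}X=0$ gives the two representations $\mathbb{V}(X)W_X(x)=-\int_{-\infty}^{x}y\,\mathbb{P}_X(\mathrm{d}y)=\int_{x}^{\infty}y\,\mathbb{P}_X(\mathrm{d}y)$, each of which makes the sign of the integrand manifest on its half-line, and (ii) on each half the iterated integral of $|f'(x)\,y|$ equals $\mathbb{V}(X)\int |f'(x)|W_X(x)\,\mathrm{d}x\le\mathbb{V}(X)\,\mathbb{E}|f'(X^{\circ})|<\infty$, so Tonelli legitimizes the swap; absolute continuity then gives $\int_0^y f'=f(y)-f(0)$. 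The ambiguity of the endpoint $y=x$ (open versus closed) only affects a countable set of $x$ and hence not the outer Lebesgue integral. One bookkeeping point worth making explicit if you write this up: the Tonelli bound, after swapping, yields $\int |y|\,|f(y)-f(0)|\,\mathbb{P}_X(\mathrm{d}y)<\infty$, and combined with $\mathbb{E}|X|<\infty$ (from $\mathbb{V}(X)<\infty$) this shows $\mathbb{E}|Xf(X)|<\infty$, so the final split $\int y\,[f(y)-f(0)]\,\mathbb{P}_X(\mathrm{d}y)=\mathbb{E}(Xf(X))-f(0)\,\mathbb{E}X$ is legitimate. With that noted, the argument is complete.
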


\begin{lemma}
\label{lem:sum-zero-bias}Let $S:=\sum_{i=1}^{n}X_{i}$, where $X_{1}%
,\cdots,X_{n}$ are IID with $0<\mathbb{V}\left(  X_{i}\right)  <\infty$ and
$\mathbb{E}\left(  X_{i}\right)  =0$. Then, denoting convolution by $\ast$,%
\begin{align*}
S^{\circ}  &  =S-X_{n}+X_{n}^{\circ},\\
W_{S}  &  =W_{X}\ast\left(  \mathrm{p}_{X}\right)  ^{\ast n-1}.
\end{align*}

\end{lemma}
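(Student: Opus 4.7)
The plan is to characterize the zero-bias distribution via the identity~(\ref{cov-identity}) and verify that $\tilde{S} := S - X_n + \hat{X}_n$, with $\hat{X}_n$ having density $W_X$ and independent of $X_1,\dots,X_{n-1}$, satisfies the same characterizing identity as $S^\circ$. Since (\ref{cov-identity}) determines the zero-bias distribution uniquely, this forces $S^\circ$ and $\tilde{S}$ to be equal in distribution; the convolution formula for $W_S$ will then follow immediately from the independence of $\hat{X}_n$ and $X_1+\dots+X_{n-1}$.

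For the main computation I would expand, for a suitable bounded absolutely continuous test function $f$,
\[
\mathbb{E}(S\,f(S)) = \sum_{i=1}^n \mathbb{E}\bigl(X_i\, f(X_1+\cdots+X_n)\bigr),
\]
condition on $\{X_j\}_{j\ne i}$, and apply (\ref{cov-identity}) to the one-variable map $x \mapsto f\bigl(x + \sum_{j\ne i} X_j\bigr)$, introducing $\hat{X}_i$ an independent zero-bias copy of $X_i$. This yields
\[
\mathbb{E}(X_i\, f(S)) = \mathbb{V}(X_i)\, \mathbb{E}\, f'\Bigl(\hat{X}_i + \sum_{j\ne i} X_j\Bigr).
\]
By the IID assumption each of the $n$ summands has the common value $\mathbb{V}(X_1)\,\mathbb{E} f'(\tilde{S})$, so
\[
\mathbb{E}(S\, f(S)) = n\,\mathbb{V}(X_1)\, \mathbb{E} f'(\tilde{S}) = \mathbb{V}(S)\, \mathbb{E} f'(\tilde{S}),
\]
which is precisely (\ref{cov-identity}) applied to $S$ with $\tilde{S}$ in place of $S^\circ$.

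To close the argument I need uniqueness of the characterizing identity: if $\mathbb{E}(Y f(Y)) = \mathbb{V}(Y)\,\mathbb{E} f'(Z)$ holds for a sufficiently rich class of $f$ and $\mathbb{E}Y=0$, then $Z$ has density $W_Y$. Taking $f(y)=\int_{-\infty}^y g(z)\,\mathrm{d}z$ for compactly supported continuous $g$, a Fubini swap combined with $\mathbb{E}Y=0$ rewrites the left side as $\mathbb{V}(Y)\int g(z)\,W_Y(z)\,\mathrm{d}z$, while the right side is $\mathbb{V}(Y)\int g(z)\,\mathrm{p}_Z(z)\,\mathrm{d}z$; equality for all such $g$ forces $\mathrm{p}_Z = W_Y$ a.e., and hence $S^\circ \stackrel{d}{=} \tilde{S}$. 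The density identity $W_S = W_X \ast (\mathrm{p}_X)^{\ast(n-1)}$ then follows because $X_1+\dots+X_{n-1}$ has density $(\mathrm{p}_X)^{\ast(n-1)}$ and is independent of $\hat{X}_n$, which has density $W_X$. The main obstacle is really the uniqueness step, which crucially uses the zero-mean hypothesis to kill the boundary term in the Fubini swap; once that is in hand, the rest of the proof is just linearity and IID symmetry.
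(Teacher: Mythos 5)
Your proof is correct. The paper itself gives no argument for this lemma (it defers to the cited zero-bias literature), and your derivation --- verifying that $S-X_n+\hat X_n$ satisfies the characterizing identity $\mathbb{E}(Sf(S))=\mathbb{V}(S)\,\mathbb{E}f'(\cdot)$ via conditioning and IID symmetry, then invoking uniqueness of the zero-bias distribution established by the Fubini computation with the zero-mean hypothesis --- is precisely the standard proof found there (e.g.\ Goldstein--Reinert), with the convolution formula following from the independence of $\hat X_n$ and $X_1+\cdots+X_{n-1}$.
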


\begin{lemma}
\label{lem:VW<VW}Let $S:=a_{1}X_{1}+a_{2}X_{2}$, where $a_{1}^{2}+a_{2}^{2}%
=1$. If $W_{X_{i}}\left(  x\right)  /\mathrm{p}_{X_{i}}\left(  x\right)
<\infty$ and $\mathbb{E}\left(  W_{X_{i}}\left(  X\right)  /\mathrm{p}_{X_{i}%
}\left(  X\right)  -1\right)  ^{2}<\infty$ ($i=1,2$),
\[
\mathbb{E}\left(  W_{S}\left(  S\right)  /\mathrm{p}_{S}\left(  S\right)
-1\right)  ^{2}\leq a_{1}^{4}\mathbb{E}\left(  W_{X_{1}}\left(  X_{1}\right)
/\mathrm{p}_{X_{1}}\left(  X_{1}\right)  -1\right)  ^{2}+a_{2}^{4}%
\mathbb{E}\left(  W_{X_{2}}\left(  X_{2}\right)  /\mathrm{p}_{X_{2}}\left(
X_{2}\right)  -1\right)  ^{2}%
\]

\end{lemma}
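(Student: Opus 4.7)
My approach is to express the ratio $W_S(S)/p_S(S)-1$ as the conditional expectation of a simple random variable given $S$, and then apply Jensen's inequality together with independence of $X_1,X_2$. Throughout I assume the normalization implicit in the surrounding discussion, namely $\mathbb{E}X_i=0$ and $\mathbb{V}(X_1)=\mathbb{V}(X_2)=1$, so that $\mathbb{V}(S)=a_1^2+a_2^2=1$. Write $h_i(x):=W_{X_i}(x)/p_{X_i}(x)-1$, so the target is to bound $\mathbb{E}(W_S(S)/p_S(S)-1)^2$ by $a_1^4\mathbb{E}h_1(X_1)^2+a_2^4\mathbb{E}h_2(X_2)^2$.

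The first step is a two-term generalization of Lemma \ref{lem:sum-zero-bias}. Applying the covariance identity (\ref{cov-identity}) to $S=a_1X_1+a_2X_2$ by conditioning on one coordinate at a time, and using the easy fact that $(cX)^\circ=cX^\circ$, one gets the mixture representation
\[
W_S \;=\; a_1^{2}\,(W_{a_1X_1}*p_{a_2X_2})\;+\;a_2^{2}\,(p_{a_1X_1}*W_{a_2X_2}),
\]
where $W_{a_iX_i}$ denotes the density of $(a_iX_i)^\circ=a_iX_i^\circ$. A change of variables $x_i=y/a_i$ inside each convolution, combined with Bayes' rule for the conditional law of $X_i$ given $S=s$, identifies each normalized term as a conditional expectation,
\[
\frac{(W_{a_iX_i}*p_{a_{3-i}X_{3-i}})(s)}{p_S(s)} \;=\; \mathbb{E}\!\left[\frac{W_{X_i}(X_i)}{p_{X_i}(X_i)}\;\bigg|\;S=s\right].
\]
Using $a_1^2+a_2^2=1$, the two displays combine into the key identity
\[
\frac{W_S(S)}{p_S(S)}-1 \;=\; \mathbb{E}\!\left[a_1^{2}h_1(X_1)+a_2^{2}h_2(X_2)\;\big|\;S\right].
\]

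The rest is routine. By Jensen's inequality for conditional expectation (equivalently, contraction of the $L^2$-projection onto $\sigma(S)$),
\[
\mathbb{E}\!\left(\frac{W_S(S)}{p_S(S)}-1\right)^{\!2} \;\le\; \mathbb{E}\!\left(a_1^{2}h_1(X_1)+a_2^{2}h_2(X_2)\right)^{\!2}.
\]
Expanding the square, the cross term equals $2a_1^2a_2^2\,\mathbb{E}[h_1(X_1)]\,\mathbb{E}[h_2(X_2)]$ by independence of $X_1,X_2$, and this vanishes because $\mathbb{E}[h_i(X_i)]=\int W_{X_i}\,\mathrm{d}x-1=0$. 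What remains is exactly the asserted bound.

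The main obstacle is the first step: pinning down the mixture representation with coefficients that are precisely $a_1^2,a_2^2$ relies on $\mathbb{V}(X_i)=1$, since in general Stein's computation gives weights $a_i^2\mathbb{V}(X_i)/\mathbb{V}(S)$. A secondary technical point is justifying the Bayes identification and the interchange of integration in Step 2; this is legitimate under the hypothesis $\mathbb{E}(W_{X_i}(X_i)/p_{X_i}(X_i)-1)^2<\infty$, which in particular ensures that the mixture densities above are absolutely continuous with respect to $p_S$. Once the mixture-plus-Bayes identity is in hand, the Jensen-plus-independence closing is essentially a one-liner.
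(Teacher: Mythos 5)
Your proof is correct: the mixture representation $W_S=a_1^2(W_{a_1X_1}*p_{a_2X_2})+a_2^2(p_{a_1X_1}*W_{a_2X_2})$, the Bayes identification of each term as a conditional expectation, and the Jensen-plus-independence closing (with the cross term killed by $\int W_{X_i}=1$) all check out under the normalization $\mathbb{E}X_i=0$, $\mathbb{V}(X_i)=1$ that the surrounding text implicitly assumes. The paper itself gives no proof of this lemma (it defers to the cited zero-bias literature), and your argument is essentially the standard one found there, so there is nothing to compare beyond confirming its correctness.
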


\begin{lemma}
\label{lem:zero-bias-support}The random variable $X^{\circ}$ is supported on a
subset of the convex hull of the support of $X$.
\end{lemma}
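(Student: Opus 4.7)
The plan is to unpack the definition of the density $W_X$ of $X^\circ$ and check directly that it vanishes off the convex hull of $\mathrm{supp}(X)$. Let $a := \inf \mathrm{supp}(X)$ and $b := \sup \mathrm{supp}(X)$ (allowing $\pm\infty$). The convex hull of $\mathrm{supp}(X)$ is then the interval $[a,b]$, so it suffices to show $W_X(x) = 0$ for $x < a$ and for $x > b$.

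First I would handle $x < a$. Because $\mathbb{P}_X\bigl((-\infty,x]\bigr) = 0$ for such $x$, the integrand $\mathbb{E}X - y$ is being integrated against a measure that gives mass zero to the domain of integration, so
\[
W_X(x) \;=\; \frac{1}{\mathbb{V}(X)} \int_{-\infty}^{x} (\mathbb{E}X - y)\,\mathbb{P}_X(\mathrm{d}y) \;=\; 0.
\]
This case is essentially trivial and carries no difficulty.

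Next I would handle $x > b$. Here the domain of integration contains all of $\mathrm{supp}(X)$, so extending the upper limit to $+\infty$ changes nothing and
\[
\int_{-\infty}^{x} (\mathbb{E}X - y)\,\mathbb{P}_X(\mathrm{d}y) \;=\; \int_{-\infty}^{\infty} (\mathbb{E}X - y)\,\mathbb{P}_X(\mathrm{d}y) \;=\; \mathbb{E}X - \mathbb{E}X \;=\; 0,
\]
so again $W_X(x) = 0$. Combining the two cases, $W_X$ vanishes outside $[a,b]$, and since $W_X$ is the density of $X^\circ$, this gives $\mathrm{supp}(X^\circ) \subseteq [a,b] = \mathrm{conv}(\mathrm{supp}(X))$, as required.

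The only subtlety, and the closest thing to an obstacle, is bookkeeping at the boundary points $a$ and $b$ themselves when $X$ has atoms there; but these form a null set with respect to Lebesgue measure on $\mathbb{R}$, so they do not affect the support of the absolutely continuous part defined through $W_X$, and the conclusion $\mathrm{supp}(X^\circ) \subseteq \mathrm{conv}(\mathrm{supp}(X))$ still holds.
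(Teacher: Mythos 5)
Your proof is correct. The paper does not actually prove this lemma---it is quoted from the cited literature on the zero-bias transform---so there is no in-paper argument to compare against; your direct check that the density $W_{X}$ vanishes for $x<\inf\mathrm{supp}(X)$ (the integral runs over a $\mathbb{P}_{X}$-null set) and for $x>\sup\mathrm{supp}(X)$ (the integral becomes $\mathbb{E}X-\mathbb{E}X=0$) is exactly the standard argument in those references. Your closing worry about atoms at the endpoints is a non-issue: $a$ and $b$ already lie in the closed convex hull, so whether they belong to $\mathrm{supp}(X^{\circ})$ cannot affect the claimed inclusion.
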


\subsection{Binary distributions}

\label{subsec:binary}

Consider a family of binary distributions $\left\{  p_{\theta}\right\}  $,
where the data space is $\Omega=\left\{  0,1\right\}  $. Letting $N_{1}\left(
x^{n}\right)  $ be the number of $1$ in the sequence $x^{n}=x_{1}x_{2}\cdots
x_{n}$,%
\begin{align*}
L^{\left(  n\right)  }  & =N_{1}\left(  x^{n}\right)  \left\{  L\left(
1\right)  -L\left(  0\right)  \right\}  +nL\left(  0\right)  \\
& =\alpha\left\{  N_{1}\left(  x^{n}\right)  -np\left(  1\right)  \right\}  ,
\end{align*}
where $\alpha:=L\left(  1\right)  -L\left(  0\right)  $.

We compose $\tilde{\Lambda}^{n}$ which satisfies
(\ref{simulation-tangent-prob-1-2}) with $\left\{  \mathrm{p}_{L^{\left(
n\right)  }},\,L^{\left(  n\right)  }\mathrm{p}_{L^{\left(  n\right)  }%
}\right\}  \equiv\left\{  p^{\otimes n},\delta^{\left(  n\right)  }\right\}
:=\left\{  p_{\theta}^{\otimes n},\delta_{\theta}^{\left(  n\right)
}\right\}  $ and
\begin{align*}
\left\{  q^{n},\delta^{\prime n}\right\}   &  :=\left\{  \mathrm{N}\left(
0,nJ_{p}\left(  \delta\right)  \right)  ,nJ_{p}\left(  \delta\right)
\delta\mathrm{N}\left(  0,nJ_{p}\left(  \delta\right)  \right)  \right\}  \\
&  \equiv\left\{  \mathrm{N}\left(  0,1\right)  ,\sqrt{nJ_{p}\left(
\delta\right)  }\delta\mathrm{N}\left(  0,1\right)  \right\}  \equiv\left\{
\mathrm{N}\left(  0,1\right)  ,\delta\mathrm{N}\left(  0,1\right)  \right\}
^{\otimes nJ_{p}\left(  \delta\right)  },
\end{align*}
\ by letting $\tilde{L}^{n}$ be the element of$\ $the set
\[
\left\{  \alpha\left(  n_{1}-p\left(  1\right)  n\right)  \,;n_{1}\in%
\mathbb{N}
,n_{1}\leq n\right\}
\]
closest to $L^{\prime}{}^{n}{}\sim\mathrm{N}\left(  0,nJ_{p}\left(
\delta\right)  \right)  $.

One can easily verify
\begin{align*}
\mathbb{E}\left\vert \mathbb{E}\left[  L^{\prime n}\,|\tilde{L}^{n}\,\right]
-\tilde{L}^{n}\right\vert  &  \leq\left\vert \alpha_{\theta}\right\vert ,\\
\mathbb{E}\left(  L\right)  ^{2} &  =J_{p}\left(  \delta\right)  <\infty,\\
\frac{1}{n}\mathbb{E}\left(  \tilde{L}^{n}\right)  ^{2} &  \leq\frac{1}%
{n}\mathbb{E}\left(  L^{\prime n}\right)  ^{2}+\frac{1}{n}\,\mathbb{E}%
\left\vert L^{\prime n}\,-\tilde{L}^{n}\right\vert ^{2}\\
&  \leq J_{p}\left(  \delta\right)  +\frac{1}{n}\left(  \alpha\right)  ^{2}.
\end{align*}

(\ref{simulation-tangent-prob-1-2}), or
\[
\left\Vert p^{\otimes n}-\Lambda\left(  q^{n}\right)  \right\Vert _{1}%
\leq\left\Vert \mathrm{p}_{L^{\left(  n\right)  }}-\mathrm{p}_{\tilde{L}^{n}%
}\right\Vert _{1}\leq\frac{1}{\sqrt{n}}\frac{4}{\sqrt{J_{p}\left(
\delta\right)  }},
\]
is the direct consequence of Theorem\thinspace\ref{th:CLTforBin} below. Hence,
by Lemma\thinspace\ref{lem:sim-tangent}, the error of this tangent simulation
is $\frac{A}{n^{1/4}}$ with
\[
A=\frac{8}{\sqrt{J_{p}\left(  \delta\right)  }}+3J_{p}\left(  \delta\right)
+3\left(  \alpha\right)  ^{2}+3\left\vert \alpha\right\vert ,
\]
which is continuous function of $\delta\left(  0\right)  $ and $p\left(
0\right)  $ is bounded on any compact region.

\begin{theorem}
\label{th:CLTforBin}Let $X_{1},X_{2},\cdots,X_{n}$ be the IID random variables
taking values in $\left\{  0,1\right\}  $, with $\Pr\left\{  X_{1}=1\right\}
=\eta$. Denote its variance by $\sigma^{2}$, and define
\[
Y_{i}:=\frac{1}{\sqrt{n}\sigma}\left(  X_{i}-\eta\right)  ,\,\quad S_{n}%
:=\sum_{i=1}^{n}Y_{i}.
\]
Suppose
\[
\mathcal{A}=\bigcup_{z}\mathcal{A}_{z}^{n},
\]
where
\[
\mathcal{A}_{z}^{n}:=\left[  z-\frac{1}{2\sqrt{n}\sigma},z+\frac{1}{2\sqrt
{n}\sigma}\right]
\]
and $z$ runs over a subset of $\left(
\mathbb{Z}
-n\eta\right)  /\sqrt{n}\sigma$. Then,
\[
\left\vert \Pr\left\{  S_{n}\in\mathcal{A}\right\}  -\Pr\left\{
\mathrm{N}\left(  0,1\right)  \in\mathcal{A}\right\}  \right\vert \leq\frac
{1}{\sqrt{n}\sigma}.
\]

\end{theorem}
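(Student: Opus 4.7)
Let $h := 1/(\sqrt{n}\sigma)$, which is simultaneously the lattice spacing of $S_n$ and the width of each component interval $\mathcal{A}_z^n$. The plan is to reduce the claim to a total-variation estimate between $S_n$ and a lattice-quantization of $\mathrm{N}(0,1)$, and then invoke a local central limit estimate.

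First, introduce the smoothed variable $\tilde{S}_n := S_n + hU$ with $U \sim \mathrm{Unif}(-1/2,1/2)$ independent of $S_n$. Because every $\mathcal{A}_z^n$ is centered at a lattice point and has width exactly $h$, one has $\Pr\{\tilde{S}_n \in \mathcal{A}\} = \Pr\{S_n \in \mathcal{A}\}$, hence
\[
|\Pr\{S_n \in \mathcal{A}\} - \Pr\{\mathrm{N}(0,1) \in \mathcal{A}\}| \leq d_{\mathrm{TV}}(\tilde{S}_n, \mathrm{N}(0,1)).
\]
Writing $\psi := F_{S_n} - \Phi$ (with $\Phi$ the standard normal CDF) and noting that $F_{S_n}$ is piecewise constant with jumps only at lattice points, for each lattice point $z$ we have
\[
\Pr\{S_n = z\} - \Pr\{\mathrm{N}(0,1) \in \mathcal{A}_z^n\} = \psi(z+h/2) - \psi(z - h/2).
\]
Summing absolute values over the full lattice identifies $2\,d_{\mathrm{TV}}(\tilde S_n, \mathrm{N}(0,1))$ with the ``grid variation'' of $\psi$ at the midpoints $u_k := z_k + h/2$.

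The core estimate is to bound this grid variation by $2h$. Writing $\varphi := \Phi'$, I would split each term as
\[
\psi(u_k) - \psi(u_{k-1}) = \bigl(\Pr\{S_n = z_k\} - h\varphi(z_k)\bigr) - \bigl(\Phi(u_k) - \Phi(u_{k-1}) - h\varphi(z_k)\bigr),
\]
reading the first bracket as a local CLT residual and the second as a midpoint-rule quadrature error. The quadrature sum is manifestly $O(h)$ by a Riemann-sum estimate: the midpoint-rule error is $O(h^3 |\varphi''(z_k)|)$ per cell and summing reproduces $O(h) \int|\varphi''|$. For the local CLT sum, I would either use direct Stirling asymptotics for $\binom{n}{k}\eta^k(1-\eta)^{n-k}$ (giving a $1 + O((1+z^2)/n)$ multiplicative correction that sums to $O(1/n)$) or, in the spirit of the preceding lemmas, the zero-bias Stein machinery (Lemmas~\ref{lem:sum-zero-bias}, \ref{lem:VW<VW}), for which the Bernoulli quantity $\mathbb{E}(W_{X_1}/\mathrm{p}_{X_1}-1)^2$ scales as $\sigma^{-2}$ and therefore, after the $n$-fold contraction of Lemma~\ref{lem:VW<VW}, produces the desired $O(h)$ bound.

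The main obstacle is securing the sharp leading constant $1$, not merely some universal constant. An order-of-magnitude estimate is cheap, but the stated bound leaves no slack: the two error sums above must be balanced with care, and likely combined (rather than bounded separately) in order to exploit the sign cancellation between the quadrature error and the local CLT correction in the bulk of the distribution.
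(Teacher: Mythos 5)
Your reduction is sound in spirit but contains two concrete problems, one cosmetic and one fatal to the argument as written. The cosmetic one: the grid variation $\sum_z|\psi(z+h/2)-\psi(z-h/2)|$ is a \emph{lower} bound for $2d_{\mathrm{TV}}(\tilde S_n,\mathrm{N}(0,1))$, not equal to it (inside a cell the piecewise-constant density of $\tilde S_n$ and the Gaussian density may cross), so bounding the grid variation does not bound the total variation distance you introduced. The fix is to drop $\tilde S_n$ entirely: the cells tile $\mathbb{R}$ and both measures have total mass one, so the cell discrepancies $a_z:=\Pr\{S_n=z\}-\Pr\{\mathrm{N}(0,1)\in\mathcal{A}_z^n\}$ sum to zero, whence $|\Pr\{S_n\in\mathcal{A}\}-\Pr\{\mathrm{N}(0,1)\in\mathcal{A}\}|=|\sum_{z\in Z}a_z|\le\frac12\sum_z|a_z|$, and it indeed suffices to bound the grid variation by $2h$.

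The fatal problem is the local CLT step. For Bernoulli summands the leading correction in the local limit theorem is the third-cumulant (Edgeworth) term, $\Pr\{S_n=z\}=h\varphi(z)\bigl(1+\tfrac{1-2\eta}{6\sigma\sqrt n}(z^3-3z)\bigr)+\cdots$, so the residual per cell is of relative order $|1-2\eta|(1+|z|^3)/(\sigma\sqrt n)$, not the $O((1+z^2)/n)$ you claim; it is $O(1/n)$ only in the symmetric case $\eta=\frac12$. Summed over $z$ this skewness term is exactly of the order $h=1/(\sqrt n\sigma)$ of the target bound --- it is the \emph{dominant} contribution, not a negligible one --- so the entire burden of the constant falls on a non-asymptotic Edgeworth estimate with explicit constants uniform in $z$ and $n$, which is precisely the part you have deferred. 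Your fallback via Lemma~\ref{lem:VW<VW} is also unavailable: that lemma requires $W_{X_i}/\mathrm{p}_{X_i}<\infty$, which fails for discrete $X_i$ (here $X_i^{\circ}$ is uniform on $[0,1]$ while $X_i$ is supported on $\{0,1\}$). The paper avoids all of this with Stein's method: using the Stein equation and the zero-bias identity together with Lemma~\ref{lem:sum-zero-bias}, it reduces the discrepancy to $2\,|\mathbb{E}(\chi_{\mathcal{A}}(S_n)-\chi_{\mathcal{A}}(S_n-Y_n+Y_n^{\circ}))|$, a single-coordinate replacement that can be evaluated exactly for the binomial, giving $2|\eta-\tfrac12|/(\sigma\sqrt n)\le 1/(\sigma\sqrt n)$ directly; note that the factor $|\eta-\tfrac12|$ appearing there is precisely the skewness your expansion would have to reproduce.
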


\begin{proof}
Letting
\begin{equation}
\psi_{\mathcal{A}}\left(  x\right)  :=e^{\frac{x^{2}}{2}}\int_{-\infty}%
^{x}\left(  \chi_{\mathcal{A}}\left(  t\right)  -\Pr\left\{  \mathrm{N}\left(
0,1\right)  \in\mathcal{A}\right\}  \right)  e^{-\frac{t^{2}}{2}}\mathrm{d}t,
\label{def-psi-1}%
\end{equation}
we have
\begin{align}
&  \left\vert \Pr\left\{  S_{n}\in\mathcal{A}\right\}  -\Pr\left\{
\mathrm{N}\left(  0,1\right)  \in\mathcal{A}\right\}  \right\vert
\underset{(i)}{=}\left\vert \mathbb{E}\left(  \frac{\mathrm{d}}{\mathrm{d}%
x}\psi_{\mathcal{A}}\left(  S_{n}\right)  -S_{n}\psi_{\mathcal{A}}\left(
S_{n}\right)  \right)  \right\vert \nonumber\\
&  \underset{(ii)}{=}\left\vert \mathbb{E}\left(  \frac{\mathrm{d}}%
{\mathrm{d}x}\psi_{\mathcal{A}}\left(  S_{n}\right)  -\frac{\mathrm{d}%
}{\mathrm{d}x}\psi_{\mathcal{A}}\left(  S_{n}^{\circ}\right)  \right)
\right\vert \text{ }\nonumber\\
&  \underset{(iii)}{=}\left\vert \mathbb{E}\left(  \chi_{\mathcal{A}}\left(
S_{n}\right)  -\chi_{\mathcal{A}}\left(  S_{n}^{\circ}\right)  \right)
+\mathbb{E}\left(  S_{n}\psi_{\mathcal{A}}\left(  S_{n}\right)  -S_{n}^{\circ
}\psi_{\mathcal{A}}\left(  S_{n}^{\circ}\right)  \right)  \right\vert
\nonumber\\
&  \underset{(iv)}{=}2\left\vert \mathbb{E}\left(  \chi_{\mathcal{A}}\left(
S_{n}\right)  -\chi_{\mathcal{A}}\left(  S_{n}^{\circ}\right)  \right)
\right\vert ,\nonumber\\
&  \underset{(v)}{=}2\left\vert \mathbb{E}\left(  \chi_{\mathcal{A}}\left(
S_{n}\right)  -\chi_{\mathcal{A}}\left(  S_{n}-Y_{n}+Y_{n}^{\circ}\right)
\right)  \right\vert . \label{Stein-1}%
\end{align}
where $(i)$ and $(iii)$ are due to the definition (\ref{def-psi-1}), $(ii)$ is
due to (\ref{cov-identity}), $(iv)$ is due to
\[
\psi_{\mathcal{A}}\left(  x\right)  \leq\frac{1}{\left\vert x\right\vert },
\]
and $(v)$ is due to Lemma\thinspace\ref{lem:sum-zero-bias}. By definition, one
can verify that $X_{i}^{\circ}\sim W_{X_{i}}$ is uniform distirbution over
$[0,1]$. Also,
\begin{align*}
&  \left\vert \mathbb{E}\left(  \chi_{\mathcal{A}}\left(  S_{n}\right)
-\chi_{\mathcal{A}}\left(  S_{n}-Y_{n}+Y_{n}^{\circ}\right)  \right)
\right\vert \\
&  \leq\sum_{k=0}^{n}\left\vert \Pr\left\{  \sum_{i=1}^{n-1}X_{i}%
+X_{n}=k\right\}  -\Pr\left\{  \sum_{i=1}^{n-1}X_{i}+X_{n}^{\circ}\in\left[
k-\frac{1}{2},k+\frac{1}{2}\right]  \right\}  \right\vert \\
&  =\sum_{k=0}^{n}\left\vert \Pr\left\{  \sum_{i=1}^{n-1}X_{i}=k\right\}
\left\{  \left(  1-\eta\right)  -\frac{1}{2}\right\}  +\Pr\left\{  \sum
_{i=1}^{n-1}X_{i}=k-1\right\}  \left(  \eta-\frac{1}{2}\right)  \right\vert \\
&  =\sum_{k=0}^{n}\Pr\left\{  \sum_{i=1}^{n}X_{i}=k\right\}  \left\vert
\frac{1}{1-\eta}\frac{n-k}{n}\left\{  \left(  1-\eta\right)  -\frac{1}%
{2}\right\}  +\frac{1}{\eta}\frac{k}{n}\left(  \eta-\frac{1}{2}\right)
\right\vert \\
&  =\frac{\left\vert \eta-\frac{1}{2}\right\vert }{\eta\left(  1-\eta\right)
}\sum_{k=0}^{n}\Pr\left\{  \sum_{i=1}^{n}X_{i}=k\right\}  \left\vert \frac
{k}{n}-\eta\right\vert \\
&  \leq\frac{\left\vert \eta-\frac{1}{2}\right\vert }{\eta\left(
1-\eta\right)  }\sqrt{\sum_{k=0}^{n}\Pr\left\{  \sum_{i=1}^{n}X_{i}=k\right\}
\left(  \frac{k}{n}-\eta\right)  ^{2}}=\frac{\left\vert \eta-\frac{1}%
{2}\right\vert }{\left\{  \eta\left(  1-\eta\right)  \right\}  ^{1/2}}\frac
{1}{\sqrt{n}},
\end{align*}
which leads to the assertion.
\end{proof}

\subsection{Distributions over the finite set}

\begin{theorem}
\label{th:finite-sim-prob}Suppose $p$ is a probability distribution and
$\delta$ is a signed measure over a set $\Omega$ with $\left\vert
\Omega\right\vert =k$ ($k<\infty$). Let $J:=J_{p}\left(  \delta\right)  $,
$\varepsilon>0$ and
\[
\left\{  q^{n},\delta^{\prime n}\right\}  :=\left\{  \mathrm{N}\left(
0,1\right)  ,\sqrt{n\left(  J+\varepsilon\right)  }\delta\mathrm{N}\left(
0,1\right)  \right\}  \equiv\left\{  \mathrm{N}\left(  0,1\right)
,\delta\mathrm{N}\left(  0,1\right)  \right\}  ^{\otimes n\left(
J+\varepsilon\right)  }.
\]
Then, we can compose $\Lambda^{n}$ with the error $\frac{A}{n^{1/4}}$, where
$A$ is a continuous function of $\left\{  p\left(  x\right)  ,\delta\left(
x\right)  ;x=1,\cdots,k-1\right\}  $ and is bounded on any compact region. 
\end{theorem}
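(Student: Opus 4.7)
The plan is to apply Lemma~\ref{lem:sim-tangent} along the same lines as in the binary case of Subsection~\ref{subsec:binary}: I must exhibit a $\tilde L^{n}$, supported on the range of $L^{(n)}$, that approximates $L^{\prime n}\sim\mathrm{N}(0,n(J+\varepsilon))$ well enough that $\|\mathrm{p}_{L^{(n)}}-\mathrm{p}_{\tilde L^{n}}\|_{1}\le C'/\sqrt{n}$ and the moment bounds \eqref{simulation-tangent-prob-2-2} hold with constants continuous in $\{p(x),\delta(x)\}$. The slack $\varepsilon>0$ is essential here: the simulating Gaussian is strictly wider than the $\mathrm{N}(0,nJ)$ to which $L^{(n)}$ converges by the CLT, which is what allows a genuine Markov (stochastic) $\tilde\Lambda^{n}$ rather than a deterministic one. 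By Proposition~\ref{prop:sufficient} I may further assume the data space is already the value set of $L$, so that $L(x)=x$.

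I would proceed by induction on $k=|\Omega|$, with the base case $k=2$ already established by Theorem~\ref{th:CLTforBin} together with the explicit construction of Subsection~\ref{subsec:binary}. For the inductive step, split $p$ into the Bernoulli marginal of $B:=\mathbf{1}[X=k]$ (success probability $p(k)$) and the conditional distribution $p'$ of $X$ on $\{1,\dots,k-1\}$ given $B=0$; split $\delta$ analogously into $\delta_{B}$ and $\delta'$. The chain rule for Fisher information yields $J_{p}(\delta)=J_{B}+(1-p(k))J'$, so I allocate the slack as $\varepsilon=\varepsilon_{B}+(1-p(k))\varepsilon'$. Theorem~\ref{th:CLTforBin} furnishes an asymptotic tangent simulation of $\{p_{B}^{\otimes n},\delta_{B}^{(n)}\}$ by $\{\mathrm{N}(0,1),\delta\mathrm{N}(0,1)\}^{\otimes n(J_{B}+\varepsilon_{B})}$; conditionally on $B^{n}$, the sub-sample $\{X_{i}:B_{i}=0\}$ is i.i.d.~$p'$, to which the inductive hypothesis applies. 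Since two independent Gaussian shifts $\{\mathrm{N}(0,1),\delta\mathrm{N}(0,1)\}^{\otimes m_{1}}$ and $\{\mathrm{N}(0,1),\delta\mathrm{N}(0,1)\}^{\otimes m_{2}}$ are equivalent to $\{\mathrm{N}(0,1),\delta\mathrm{N}(0,1)\}^{\otimes(m_{1}+m_{2})}$ (the Remark after Proposition~\ref{prop:g-shift}), the two simulators merge into a single Gaussian simulator of the required form, with errors of order $n^{-1/4}$ adding via a chain argument in the style of Proposition~\ref{prop:sym-chain}.

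The principal obstacle is that the sub-sample size $N:=|\{i:B_{i}=0\}|$ is random, fluctuating around $n(1-p(k))$ at scale $\sqrt n$; consequently the inductive hypothesis produces a Gaussian of random total Fisher information $\approx N(J'+\varepsilon')$ instead of the deterministic target $n(1-p(k))(J'+\varepsilon')$. This is precisely where having $\varepsilon>0$ rather than $\varepsilon=0$ is indispensable: restricting attention to $\{|N-n(1-p(k))|\le c\sqrt n\}$ (whose complement has $O(1/\sqrt n)$ mass by Chebyshev) and allowing a wider-than-needed Gaussian absorbs the $O(\sqrt n)$ information discrepancy through a deterministic padding/coordinate change. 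Continuity and compact-boundedness of the resulting constant $A$ in $\{p(x),\delta(x)\}$ then propagate from the explicit binary bound $A=8/\sqrt{J_{p}(\delta)}+3J_{p}(\delta)+3\alpha^{2}+3|\alpha|$ recalled in Subsection~\ref{subsec:binary}, since each step of the recursion contributes only a continuous function of $\{p(x),\delta(x)\}$.
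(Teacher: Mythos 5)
Your proposal follows essentially the same route as the paper: the paper peels off the binary indicator of $\{X=k\}$ (its $\{p_a,\delta_a\}$ is your $\{p_B,\delta_B\}$, and $\{p_A,\delta_A\}$ is your conditional $\{p',\delta'\}$), uses the same chain rule $J_p(\delta)=J_a(\delta_a)+p_a(1)J_A(\delta_A)$, recurses down to a product of binary families, simulates each by a Gaussian shift via Theorem~\ref{th:CLTforBin}, and chains the errors with Proposition~\ref{prop:sym-chain}. The slack is spent exactly as you describe: the paper allocates a deterministic budget $n_a=n\left(p_a(1)+\varepsilon\right)$ of copies of the conditional family and truncates the event that the random count exceeds it.

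One quantitative slip: restricting to $\left\{\left|N-n\left(1-p(k)\right)\right|\le c\sqrt n\right\}$ does \emph{not} give complement mass $O(1/\sqrt n)$ by Chebyshev --- for fixed $c$ Chebyshev yields only the constant bound $p(k)\left(1-p(k)\right)/c^{2}$. You need to truncate at scale $\varepsilon n$ (whereupon the tail is exponentially small, as in the paper's bound $2\exp\{-nC_{a,\varepsilon}\}$) or at least at scale $n^{3/4}$; either choice still fits inside the $\varepsilon$-slack and the $A/n^{1/4}$ error budget, so the fix is immediate and does not change your argument.
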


\begin{proof}
Since binary distributions can be simulated by Gaussian shift as in
Subsection\thinspace\ref{subsec:binary}, due to Propositin\thinspace
\ref{prop:sym-chain}, we only have to compose asymptotic tangent simulation
$\left\{  p,\delta\right\}  ^{\otimes n}$ by binary distributions. For that,
we first asymptotically simulate $\left\{  p,\delta\right\}  ^{\otimes n}$ by
$\left\{  p_{a},\delta_{a}\right\}  ^{\otimes n}\otimes\left\{  p_{A}%
,\delta_{A}\right\}  ^{\otimes n_{a}}$, where $\left\{  p_{a},\delta
_{a}\right\}  $ and $\left\{  p_{A},\delta_{A}\right\}  $ is defined over the
binary set $\Omega_{a}$ and and the set $\Omega_{A}$ with $\left(  k-1\right)
$-elements, respectively.Then, by virtue of Propositin\thinspace
\ref{prop:sym-chain}, inductive argument leads to asymptotic tangent
simulation by binary distributions.

Let
\begin{align*}
p_{a}\left(  0\right)   &  :=p\left(  k\right)  ,\,\,p_{a}\left(  1\right)
=\sum_{x=1}^{k-1}p\left(  x\right)  ,\\
\delta_{a}\left(  0\right)   &  :=\delta\left(  k\right)  ,\,\,\delta
_{a}\left(  1\right)  =\sum_{x=1}^{k-1}\delta\left(  x\right)  ,\,L_{a}%
:=\frac{\delta_{a}}{p_{a}}\\
p_{A}\left(  x\right)   &  :=\frac{p\left(  x\right)  }{p_{a}\left(  1\right)
},\,(x=1,\cdots,k-1),\\
\,\delta_{A}\left(  x\right)   &  :=\frac{\delta\left(  x\right)  }%
{p_{a}\left(  1\right)  }-\frac{\delta_{a}\left(  1\right)  p_{A}\left(
x\right)  }{p_{a}\left(  1\right)  },\,(x=1,\cdots,k-1),\\
L_{A}  &  :=\frac{\delta_{A}\left(  x\right)  }{p_{A}\left(  x\right)
},\,\ \,(x=1,\cdots,k-1),\\
n_{a}  &  :=n\left(  p_{a}\left(  1\right)  +\varepsilon\right)  \text{
\thinspace\thinspace}(\varepsilon>0).
\end{align*}
Also, let $x_{a}^{n}=x_{a1}x_{a2}\cdots x_{an}\in\Omega_{a}^{\otimes n}$,
$x_{A}^{n}=x_{A1}x_{A2}\cdots x_{An}\in\Omega_{A}^{\otimes n}$, $X_{ai}\sim
p_{a}$, $X_{Ai}\sim p_{A}$, $X_{a}^{n}\sim p_{a}^{\otimes n}$, and $X_{A}%
^{n}\sim p_{A}^{\otimes n}$. Denote by $N_{1}\left(  x_{a}^{n}\right)  $ the
number of $1$ in the sequence $x_{a}^{n}=x_{a1}x_{a2}\cdots x_{an}$. \ Also,
we identify the pair $\left(  x_{a},x_{A}\right)  $ with $x$, by the
correspondence
\[
x\equiv\left\{
\begin{array}
[c]{cc}%
\left(  1,x\right)  & \left(  x=1,\cdots,k-1\right)  ,\\
\left(  0,\#\right)  & x=k,
\end{array}
\right.  \,\,
\]
where $\#$ stands for empty string. To define asymptotic tangent simulation,
one define function $F:\Omega^{\otimes n}\rightarrow\Omega^{\otimes n}$ such
that
\[
F\left(  x^{n}\right)  =\left\{
\begin{array}
[c]{cc}%
x^{n} & \left(  N_{1}\left(  x_{a}^{n}\right)  \leq n_{a}\right)  ,\\
k^{n} & \left(  N_{1}\left(  x_{a}^{n}\right)  >n_{a}\right)  .
\end{array}
\right.
\]
Using $F$, we define
\begin{align*}
\Lambda^{n}\left(  r^{n}\right)  \left(  x^{n}\right)   &  :=\sum_{y^{n}\in
F^{-1}\left(  x^{n}\right)  }r^{n}\left(  y^{n}\right)  ,\\
\tilde{p}^{n}  &  :=\Lambda^{n}\left(  p^{\otimes n}\right)  ,\,\,\,\tilde
{\delta}^{n}:=\Lambda^{n}\left(  \delta^{\left(  n\right)  }\right)  .
\end{align*}
Then,
\begin{align*}
\left\Vert \tilde{p}^{n}-p^{\otimes n}\right\Vert _{1}  &  =\sum_{x^{n}%
}\left\vert \sum_{y^{n}\in F^{-1}\left(  x^{n}\right)  }p^{\otimes n}\left(
y^{n}\right)  -p^{\otimes n}\left(  x^{n}\right)  \right\vert \\
&  =2\sum_{x^{n}:N_{1}\left(  x_{a}^{n}\right)  >n_{a}}p^{\otimes n}\left(
x^{n}\right) \\
&  \leq2\exp\left\{  -nC_{a,\varepsilon}\right\}  ,
\end{align*}
where%
\[
C_{a,\varepsilon}:=p_{a}\left(  0\right)  \ln\frac{p_{a}\left(  0\right)
}{p_{a}\left(  0\right)  -\varepsilon}+p_{a}\left(  1\right)  \ln\frac
{p_{a}\left(  1\right)  }{p_{a}\left(  1\right)  +\varepsilon},
\]
and
\begin{align*}
\frac{1}{\sqrt{n}}\left\Vert \tilde{\delta}^{n}-\delta^{\left(  n\right)
}\right\Vert _{1}  &  =\frac{1}{\sqrt{n}}\left\vert \sum_{N_{1}\left(
x_{a}^{n}\right)  >n_{a}}\delta^{\left(  n\right)  }\left(  x^{n}\right)
\right\vert +\frac{1}{\sqrt{n}}\sum_{N_{1}\left(  x_{a}^{n}\right)  >n_{a}%
}\left\vert \delta^{\left(  n\right)  }\left(  x^{n}\right)  \right\vert \\
&  \leq\frac{2}{\sqrt{n}}\sum_{N_{1}\left(  x_{a}^{n}\right)  >n_{a}%
}\left\vert \delta^{\left(  n\right)  }\left(  x^{n}\right)  \right\vert \\
&  =\frac{2}{\sqrt{n}}\sum_{x_{a}^{n}:N_{1}\left(  x_{a}^{n}\right)  >n_{a}%
}\sum_{x_{A}^{N_{A}\left(  x_{a}^{n}\right)  }}\left\vert L_{a}^{\left(
n\right)  }\left(  x_{a}^{n}\right)  +L_{A}^{\left(  N_{1}\left(  x_{a}%
^{n}\right)  \right)  }\left(  x_{A}^{N_{1}\left(  x_{a}^{n}\right)  }\right)
\right\vert p_{a}^{\otimes n}\left(  x_{a}^{n}\right)  p_{A}^{\otimes
N_{A}\left(  x_{a}^{n}\right)  }\left(  x_{A}^{N_{A}\left(  x_{a}^{n}\right)
}\right) \\
&  \leq\frac{2}{\sqrt{n}}\sum_{x_{a}^{n}:N_{1}\left(  x_{a}^{n}\right)
>n_{a}}\left[  n\max\left\{  \left\vert L_{a}\left(  0\right)  \right\vert
,\left\vert L_{a}\left(  1\right)  \right\vert \right\}  +n\left\vert
\max_{1\leq x_{A1}\leq k-1}L_{A}\left(  x_{A1}\right)  \right\vert \right]
p_{a}^{\otimes n}\left(  x_{a}^{n}\right) \\
&  \leq2\sqrt{n}\left[  \max\left\{  \left\vert L_{a}\left(  0\right)
\right\vert ,\left\vert L_{a}\left(  1\right)  \right\vert \right\}
+\left\vert \max_{1\leq x\leq k-1}L_{A}\left(  x\right)  \right\vert \right]
\exp\left\{  -nC_{a,\varepsilon}\right\}  .
\end{align*}
Also,%
\begin{align*}
J_{p}\left(  \delta\right)   &  =\frac{\left\{  \delta\left(  k\right)
\right\}  ^{2}}{p\left(  k\right)  }+\sum_{x=1}^{k-1}\frac{\left\{
\delta\left(  x\right)  \right\}  ^{2}}{p\left(  x\right)  }\\
&  =\frac{\left\{  \delta_{a}\left(  0\right)  \right\}  ^{2}}{p_{a}\left(
0\right)  }+\sum_{x=1}^{k-1}\frac{1}{p_{a}\left(  1\right)  p_{A}\left(
x\right)  }\left\{  p_{a}\left(  1\right)  \delta_{A}\left(  x\right)
+\delta_{a}\left(  1\right)  p_{A}\left(  x\right)  \right\}  ^{2}\\
&  =\frac{\left\{  \delta_{a}\left(  0\right)  \right\}  ^{2}}{p_{a}\left(
0\right)  }+\frac{\left\{  \delta_{a}\left(  1\right)  \right\}  ^{2}}%
{p_{a}\left(  1\right)  }+p_{a}\left(  1\right)  \sum_{x=1}^{k-1}%
\frac{\left\{  \delta_{A}\left(  x\right)  \right\}  ^{2}}{p_{A}\left(
x\right)  }+\delta_{a}\left(  1\right)  \sum_{x=1}^{k-1}\delta_{A}\left(
x\right) \\
&  =J_{a}\left(  \delta_{a}\right)  +p_{a}\left(  1\right)  J_{A}\left(
\delta_{A}\right)  .
\end{align*}

\end{proof}

Analogously, one can compose an asymptotic tangent simulation of $\left\{
p_{A},\delta_{A}\right\}  ^{\otimes n_{a}}$ by $\left\{  p_{b},\delta
_{b}\right\}  ^{\otimes n_{a}}\otimes\left\{  p_{B},\delta_{B}\right\}
^{\otimes n_{b}}$, where $\left\{  p_{b},\delta_{b}\right\}  $ and $\left\{
p_{B},\delta_{B}\right\}  $ are defined over the binary set $\Omega_{b}$ and
and the set $\Omega_{B}$ with $\left(  k-2\right)  $-elements , respectively,
where
\begin{align*}
p_{b}\left(  0\right)   &  :=p_{A}\left(  k-1\right) \\
p_{b}\left(  1\right)   &  :=\sum_{x=1}^{k-2}p_{A}\left(  x\right) \\
n_{b}  &  :=n_{a}\left(  p_{b}\left(  1\right)  +\varepsilon\right)
,\text{\thinspace}\\
J_{p_{A}}\left(  \delta_{A}\right)   &  =J_{p_{b}}\left(  \delta_{b}\right)
+p_{b}\left(  1\right)  J_{p_{B}}\left(  \delta_{B}\right)  .
\end{align*}
Repeating this proscess recursively, by Proposition\thinspace
\ref{prop:sym-chain}, one can asymptotically simulate $\left\{  p,\delta
\right\}  ^{\otimes n}$ by
\begin{equation}
\left\{  p_{a},\delta_{a}\right\}  ^{\otimes n}\otimes\left\{  p_{b}%
,\delta_{b}\right\}  ^{\otimes n_{a}}\otimes\cdots\otimes\left\{  p_{z}%
,\delta_{z}\right\}  ^{\otimes n_{y}}, \label{prod:binary}%
\end{equation}
($\left\{  p_{Z},\delta_{Z}\right\}  $ is defined over $\left\{  1\right\}  $,
thus is trivia)l with the error
\[
2\sqrt{n}\left[  \sum_{i=a}^{z}\max\left\{  \left\vert L_{i}\left(  0\right)
\right\vert ,\left\vert L_{i}\left(  1\right)  \right\vert \right\}
+\sum_{j=A}^{Y}\left\vert \max_{1\leq x\leq k-1}L_{j}\left(  x\right)
\right\vert +1\right]  \sum_{i=a}^{z}\exp\left\{  -nC_{i,\varepsilon}\right\}
,
\]
which is upperbouded by $\frac{B}{n^{1/4}}$, where $B$ is a continuous
function of \ $\left\{  p\left(  x\right)  ,\delta\left(  x\right)
;x=1,\cdots,k-1\right\}  $. \ Due to Subsection\thinspace\ref{subsec:binary}%
\ , (\ref{prod:binary}) can be simulated by
\begin{align*}
&  \left\{  \mathrm{N}\left(  0,1\right)  ,\delta\mathrm{N}\left(  0,1\right)
\right\}  ^{\otimes nJ_{a}\left(  \delta_{a}\right)  }\otimes\left\{
\mathrm{N}\left(  0,1\right)  ,\delta\mathrm{N}\left(  0,1\right)  \right\}
^{\otimes n_{a}J_{b}\left(  \delta_{b}\right)  }\otimes\\
&  \cdots\otimes\left\{  \mathrm{N}\left(  0,1\right)  ,\delta\mathrm{N}%
\left(  0,1\right)  \right\}  ^{\otimes n_{y}J_{z}\left(  \delta_{z}\right)
}\\
&  \equiv\left\{  \mathrm{N}\left(  0,1\right)  ,\delta\mathrm{N}\left(
0,1\right)  \right\}  ^{\otimes n\left(  J_{p}\left(  \delta\right)  +f\left(
\varepsilon\right)  \right)  },
\end{align*}
where $\lim_{\varepsilon\rightarrow o}f\left(  \varepsilon\right)  =0$, with
the error $\frac{B^{\prime}}{n^{1/4}}$, where $B^{\prime}$ is a continuous
function of \ $\left\{  p\left(  x\right)  ,\delta\left(  x\right)
;x=1,\cdots,k-1\right\}  $. Here, `$\equiv$' is due to
\begin{align*}
&  nJ_{a}\left(  \delta_{a}\right)  +n_{a}J_{b}\left(  \delta_{b}\right)
+n_{b}J_{c}\left(  \delta_{c}\right)  +\cdots+n_{y}J_{z}\left(  \delta
_{z}\right) \\
&  =nJ_{a}\left(  \delta_{a}\right)  +n\left(  p_{a}\left(  1\right)
+\varepsilon\right)  J_{b}\left(  \delta_{b}\right)  +n\left(  p_{a}\left(
1\right)  +\varepsilon\right)  \left(  p_{b}\left(  1\right)  +\varepsilon
\right)  J_{c}\left(  \delta_{c}\right) \\
&  +\cdots+n\prod_{i=a}^{y}\left(  p_{i}\left(  1\right)  +\varepsilon\right)
J_{z}\left(  \delta_{z}\right) \\
&  =n\left(  J_{p}\left(  \delta\right)  +f\left(  \varepsilon\right)
\right)
\end{align*}
where the last identiy is due to
\[
J_{p}\left(  \delta\right)  =J_{a}\left(  \delta_{a}\right)  +p_{a}\left(
1\right)  J_{b}\left(  \delta_{b}\right)  +p_{a}\left(  1\right)  p_{b}\left(
1\right)  J_{c}\left(  \delta_{c}\right)  +\cdots+\prod_{i=a}^{y}p_{i}\left(
1\right)  J_{z}\left(  \delta_{z}\right)  .
\]
Therefore, due to Proposition\thinspace\ref{prop:sym-chain}, we obtain an
asymptotic tangent simulation of $\left\{  p,\delta\right\}  ^{\otimes n}$ by
$\left\{  q^{n},\delta^{\prime n}\right\}  $ with the error $\frac
{B+B^{\prime}}{n^{1/4}}$, and the assertion is proved.

\subsection{ A continuous random variable with smooth density}

\begin{theorem}
\label{th:cont-prob-sim}Let $\Omega=%
\mathbb{R}
$. Suppose $L\left(  x\right)  =\delta\left(  x\right)  /p\left(  x\right)  $
exists and is a continuous function of $x$. Let $J:=J_{p}\left(
\delta\right)  $, $\left\{  q^{n},\delta^{\prime n}\right\}  :=\left\{
\mathrm{N}\left(  0,1\right)  ,\sqrt{nJ}\delta\mathrm{N}\left(  0,1\right)
\right\}  =\left\{  \mathrm{N}\left(  0,1\right)  ,\delta\mathrm{N}\left(
0,1\right)  \right\}  ^{\otimes nJ}$ , and define $\tilde{L}^{n}%
=\Lambda^{n\ast}\left(  L^{\prime n}\right)  :=L^{\prime n}$. Suppose
\begin{equation}
\mathbb{E}\left(  \frac{\int_{-\infty}^{L}\left(  -t\right)  \,\mathrm{p}%
_{L}\left(  t\right)  \mathrm{d}t}{J\,\mathrm{p}_{L}\left(  L\right)
}\right)  ^{2}<\infty, \label{w-variance}%
\end{equation}
holds. Then, $\left\{  q^{n},\delta^{\prime n},\Lambda^{n}\right\}  $
\ satisfies (\ref{simulation-tangent-prob-1-2}) and
(\ref{simulation-tangent-prob-2-2}). Thus, by Lemma\thinspace
\ref{lem:sim-tangent}, it satisfies (\ref{simulation-tangent-prob-1}) and
(\ref{simulation-tangent-prob-2}).
\end{theorem}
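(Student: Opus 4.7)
The plan is to verify the two sufficient conditions (\ref{simulation-tangent-prob-1-2}) and (\ref{simulation-tangent-prob-2-2}) of Lemma~\ref{lem:sim-tangent}, from which the theorem follows at once. Condition (\ref{simulation-tangent-prob-2-2}) is immediate with the stated choices: $\tilde L^n=L'^n$ makes the conditional-expectation term vanish, $\mathbb E L^2=J$ by the definition of Fisher information, and (after passing to the canonical form of $\{q^n,\delta'^n\}$ via Proposition~\ref{prop:sufficient} together with the remark after Proposition~\ref{prop:g-shift}) $\tilde L^n\sim\mathrm N(0,nJ)$ gives $\tfrac{1}{n}\mathbb E(\tilde L^n)^2=J$; so $a:=J$ works.

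The substance lies in (\ref{simulation-tangent-prob-1-2}). Rescaling, set $S_n:=L^{(n)}/\sqrt{nJ}=\sum_{i=1}^n Y_i$ with $Y_i:=L(X_i)/\sqrt{nJ}$ (i.i.d., mean zero, variance $1/n$); then (\ref{simulation-tangent-prob-1-2}) is the density-level CLT $\|\mathrm p_{S_n}-\varphi\|_1=O(n^{-1/2})$, where $\varphi$ is the standard Gaussian density. I would prove this by Stein's method coupled with the zero-bias transform, turning hypothesis (\ref{w-variance}) into a quantitative chi-square bound.

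For the chi-square step, set $U_i:=\sqrt n\,Y_i=L(X_i)/\sqrt J$ (variance $1$) and split $S_n=\tfrac{1}{\sqrt n}U_1+\tfrac{\sqrt{n-1}}{\sqrt n}\cdot\tfrac{1}{\sqrt{n-1}}\sum_{i\ge 2}U_i$. Writing $c_n:=\mathbb E\bigl(W_{S_n}(S_n)/\mathrm p_{S_n}(S_n)-1\bigr)^2$, Lemma~\ref{lem:VW<VW} yields $c_n\le c_1/n^2+((n-1)/n)^2 c_{n-1}$. Setting $b_n:=n c_n$ reduces this to $b_n\le c_1/n+((n-1)/n)\,b_{n-1}$, which telescopes to $b_n\le c_1$, hence $c_n\le c_1/n$; the finiteness of $c_1=\mathbb E(W_L(L)/\mathrm p_L(L)-1)^2$ is exactly hypothesis (\ref{w-variance}) after using the scale invariance of the distribution of $W_X(X)/\mathrm p_X(X)$ under $X\mapsto X/\sqrt J$. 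For the transfer to total variation, given a measurable $A$ let $\psi_A$ solve the Stein equation $\psi_A'(x)-x\psi_A(x)=\chi_A(x)-\Pr\{\mathrm N(0,1)\in A\}$; standard bounds give $\|\psi_A'\|_\infty\le C_0$ uniformly in $A$. Since $\mathbb V(S_n)=1$, the covariance identity (\ref{cov-identity}) gives $\mathbb E(S_n\psi_A(S_n))=\int\psi_A'(x)W_{S_n}(x)\,\mathrm dx$, and therefore
\[
\Pr\{S_n\in A\}-\Pr\{\mathrm N(0,1)\in A\}=\mathbb E\!\left[\psi_A'(S_n)\left(1-\frac{W_{S_n}(S_n)}{\mathrm p_{S_n}(S_n)}\right)\right].
\]
Cauchy--Schwarz combined with the chi-square bound yields $|{\cdot}|\le C_0\sqrt{c_1/n}$, and taking supremum over $A$ produces $\|\mathrm p_{S_n}-\varphi\|_1\le 2C_0\sqrt{c_1/n}$, which (after undoing the rescaling) is precisely (\ref{simulation-tangent-prob-1-2}).

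The main obstacle will be closing the chi-square recursion cleanly: the $a_i^4$ (rather than $a_i^2$) weighting in Lemma~\ref{lem:VW<VW} is exactly what keeps $n c_n$ bounded instead of growing linearly, and the telescoping depends on it. A secondary delicate point is the uniform Stein-solution bound $\|\psi_A'\|_\infty=O(1)$ over arbitrary measurable $A$, which is classical but requires the explicit pointwise form of $\psi_A$.
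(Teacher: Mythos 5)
Your proposal is correct and follows essentially the same route as the paper: Stein's method with the zero-bias transform, reducing the total-variation bound to the chi-square quantity $\mathbb{E}\left(W_{S_{n}}\left(S_{n}\right)/\mathrm{p}_{S_{n}}\left(S_{n}\right)-1\right)^{2}$ and controlling it by iterating Lemma~\ref{lem:VW<VW}, with hypothesis (\ref{w-variance}) supplying finiteness of the one-variable chi-square term. Your explicit telescoping of the recursion cleanly yields the $c_{1}/n$ decay (the paper's corresponding line displays a $1/\sqrt{n}$ factor that appears to be a slip), and your use of the uniform bound on $\psi_{\mathcal{A}}'$ in place of the paper's bound via $\left\vert x\psi_{\mathcal{A}}\left(x\right)\right\vert \leq1$ and $\left\Vert \mathrm{p}_{S_{n}}-\mathrm{p}_{S_{n}^{\circ}}\right\Vert _{1}$ is an inessential variation.
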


\begin{proof}
The assertion is essentially the same as Theorem\thinspace2.3 of
\cite{CovTr-2}. For the sake of completeness, however, the whole argument is
described below. Let $S_{n}:=\frac{1}{\sqrt{nJ}}L^{\left(  n\right)  }$. In
the same way as the proof of Theorem\thinspace\ref{th:CLTforBin}, we have
\begin{align*}
&  \left\vert \Pr\left\{  S_{n}\in\mathcal{A}\right\}  -\Pr\left\{
\mathrm{N}\left(  0,1\right)  \in\mathcal{A}\right\}  \right\vert \\
&  =\left\vert \mathbb{E}\left(  \chi_{\mathcal{A}}\left(  S_{n}\right)
-\chi_{\mathcal{A}}\left(  S_{n}^{\circ}\right)  \right)  +\mathbb{E}\left(
S_{n}\psi_{\mathcal{A}}\left(  S_{n}\right)  -S_{n}^{\circ}\psi_{\mathcal{A}%
}\left(  S_{n}^{\circ}\right)  \right)  \right\vert \\
&  \leq2\left\Vert \mathrm{p}_{S_{n}}-\mathrm{p}_{S_{n}^{\circ}}\right\Vert
_{1}=2\mathbb{E}\left\vert 1-\frac{W_{S_{n}}\left(  S_{n}\right)  }%
{\mathrm{p}_{S_{n}}\left(  S_{n}\right)  }\right\vert \leq2\sqrt
{\mathbb{E}\left(  \frac{W_{S_{n}}\left(  S_{n}\right)  }{\mathrm{p}_{S_{n}%
}\left(  S_{n}\right)  }-1\right)  ^{2}}.
\end{align*}
where $\psi_{\mathcal{A}}$ is defined by (\ref{def-psi-1}), and thus
$\left\vert x\psi_{\mathcal{A}}\left(  x\right)  \right\vert \leq1$. Hence, it
boils down to the evaluation of $\mathbb{E}\left(  \frac{W_{S_{n}}\left(
S_{n}\right)  }{\mathrm{p}_{S_{n}}\left(  S_{n}\right)  }-1\right)  ^{2}$,
which, due to Lemma\thinspace\ref{lem:VW<VW}, is not larger than
\begin{align*}
&  \frac{1}{\sqrt{n}}\mathbb{E}\left(  \frac{W_{\frac{1}{\sqrt{J}}L}\left(
\frac{1}{\sqrt{J}}L\right)  }{\mathrm{p}_{\frac{1}{\sqrt{J}}L}\left(  \frac
{1}{\sqrt{J}}L\right)  }-1\right)  ^{2}=\frac{1}{\sqrt{n}}\mathbb{E}\left(
\frac{\int_{-\infty}^{\frac{1}{\sqrt{J}}L}\left(  -t\right)  \,\mathrm{p}%
_{\frac{1}{\sqrt{J}}L}\left(  t\right)  \mathrm{d}t}{\mathrm{p}_{\frac
{1}{\sqrt{J}}L}\left(  \frac{1}{\sqrt{J}}L\right)  }-1\right)  ^{2}\\
&  =\frac{1}{\sqrt{n}}\mathbb{E}\left(  \frac{\int_{-\infty}^{L}\left(
-\frac{1}{\sqrt{J}}t\right)  \,\mathrm{p}_{L}\left(  t\right)  \mathrm{d}%
t}{\sqrt{J}\mathrm{p}_{L}\left(  l\right)  }-1\right)  ^{2}=\frac{1}{\sqrt{n}%
}\mathbb{E}\left(  \frac{\int_{-\infty}^{L}\left(  -t\right)  \,\mathrm{p}%
_{L}\left(  t\right)  \mathrm{d}t}{J\,\mathrm{p}_{L}\left(  l\right)
}-1\right)  ^{2}\\
&  =\frac{1}{\sqrt{n}}\left\{  \mathbb{E}\left(  \frac{\int_{-\infty}%
^{L}\left(  -t\right)  \,\mathrm{p}_{L}\left(  t\right)  \mathrm{d}%
t}{J\,\mathrm{p}_{L}\left(  l\right)  }\right)  ^{2}-1\right\}  .
\end{align*}
Hence, we have (\ref{simulation-tangent-prob-1-2}). Also, it is easy to
verify
\begin{align*}
\mathbb{E}\left\vert \mathbb{E}\left[  L^{\prime n}\,|\tilde{L}^{n}\,\right]
-\tilde{L}^{n}\right\vert  &  =0,\\
\mathbb{E}\left(  L\right)  ^{2}  &  =\frac{1}{n}\mathbb{E}\left(  \tilde
{L}^{n}\right)  ^{2}=J_{p}\left(  \delta\right)  .
\end{align*}

\end{proof}

A trivial sufficient condition for (\ref{w-variance}) is that the support of
$\mathrm{p}_{L}$ is bounded. Also, suppose
\begin{align*}
\frac{a_{1}}{t^{\alpha_{1}}}  &  \leq\mathrm{p}_{L}\left(  t\right)  \leq
\frac{b_{1}}{t^{\alpha_{1}}},\,\left(  t\leq\exists t_{1}\right) \\
\frac{a_{2}}{t^{\alpha_{2}}}  &  \leq\mathrm{p}_{L}\left(  t\right)  \leq
\frac{b_{2}}{t^{\alpha_{2}}},\left(  t\geq\exists t_{2}\right)
\end{align*}
hold for some real constant $a_{i}$, $b_{i}$, $\alpha_{i}$ ($\iota=1$,$2$).
Then, if $y<t_{1}$,
\[
\frac{1}{\mathrm{p}_{L}\left(  y\right)  }\int_{-\infty}^{y}\left(  -t\right)
\mathrm{p}_{L}\left(  t\right)  \mathrm{d}t\leq\frac{1}{\alpha_{1}-2}%
\frac{b_{1}}{a_{1}}y^{2},
\]
and, due to $\int_{-\infty}^{\infty}\left(  -t\right)  \mathrm{p}_{L}\left(
t\right)  \mathrm{d}t=0$, if $y>t_{2}$,
\[
\frac{1}{\mathrm{p}_{L}\left(  y\right)  }\int_{-\infty}^{y}\left(  -t\right)
\mathrm{p}_{L}\left(  t\right)  \mathrm{d}t=\frac{1}{\mathrm{p}_{L}\left(
y\right)  }\int_{y}^{\infty}t\,\mathrm{p}_{L}\left(  t\right)  \mathrm{d}%
t\leq\frac{1}{\alpha_{2}-2}\frac{b_{2}}{a_{2}}y^{2}.
\]
Hence, if
\[
\min\left\{  \alpha_{1},\alpha_{2}\right\}  \geq4\text{,}%
\]
we have (\ref{w-variance}).

The following conditions are also sufficient:%
\begin{align}
a_{1}e^{-\left\vert t\right\vert ^{\alpha_{1}}}  &  \leq\mathrm{p}_{L}\left(
t\right)  \leq b_{1}e^{-\left\vert t\right\vert ^{\alpha_{1}}},\,\left(
t\leq\exists t_{1}\right)  ,\nonumber\\
a_{2}e^{-\left\vert t\right\vert ^{\alpha_{2}}}  &  \leq\mathrm{p}_{L}\left(
t\right)  \leq b_{2}e^{-\left\vert t\right\vert ^{\alpha_{2}}},\left(
t\geq\exists t_{2}\right)  \label{exp-tail}%
\end{align}
for some real constants $a_{i}$, $b_{i}$, and $\alpha_{i}$ ($i=1$,$2$) , with
\[
\min\left\{  \alpha_{1},\alpha_{2}\right\}  \geq2.
\]
Then, if $y<t_{1}$ and $y$ $\leq-1$,
\begin{align*}
\frac{1}{\mathrm{p}_{L}\left(  y\right)  }\int_{-\infty}^{y}\left(  -t\right)
\mathrm{p}_{L}\left(  t\right)  \mathrm{d}t  &  \leq\frac{b_{1}}%
{a_{1}e^{-\left\vert y\right\vert ^{\alpha_{1}}}}\int_{-\infty}^{y}\left(
-t\right)  e^{-\left\vert t\right\vert ^{\alpha_{1}}}\mathrm{d}t\\
&  \leq\frac{b_{1}}{a_{1}e^{-\left\vert y\right\vert ^{\alpha_{1}}}}%
\int_{-\infty}^{y}\left(  -t\right)  ^{\alpha_{1}-1}e^{-\left\vert
t\right\vert ^{\alpha_{1}}}\mathrm{d}t\\
&  =\frac{b_{1}}{a_{1}e^{-\left\vert y\right\vert ^{\alpha_{1}}}}%
\frac{e^{-\left\vert y\right\vert ^{\alpha_{1}}}}{\alpha_{1}-1}=\frac{b_{1}%
}{a_{1}}\frac{1}{\alpha_{1}-1}.
\end{align*}
Hence, if $\left\vert y\right\vert $ is large enough enough, we have
\[
\frac{1}{\mathrm{p}_{L}\left(  y\right)  }\int_{-\infty}^{y}\left(  -t\right)
\mathrm{p}_{L}\left(  t\right)  \mathrm{d}t<const.
\]
The same is true for $y>t_{2}$-case, and thus (\ref{exp-tail}) is another
sufficient condition for (\ref{w-variance}).

\subsection{Simulation of Gaussian shift by an arbitrary IID sequence}

\label{subsec:g-shift-sim-by-prob}

Suppose $\left\{  q^{n},\delta^{\prime n}\right\}  =\left\{  q^{\otimes
n},\delta^{\prime\left(  n\right)  }\right\}  $, where $J_{q}\left(
\delta^{\prime}\right)  =J$ , is given. Suppose also that $L^{\prime}%
:=\frac{\delta^{\prime}}{q}$ has density with respect to Lebesgue measure, and
satisfies (\ref{w-variance}). Then, by Theorem\thinspace\ref{th:cont-prob-sim}%
, we can compose asymptotic tangent simulation of
\[
\left\{  p^{\otimes n},\delta^{\left(  n\right)  }\right\}  :=\left\{
\mathrm{N}\left(  0,1\right)  ,\delta\mathrm{N}\left(  0,1\right)  \right\}
^{\otimes nJ}\equiv\left\{  \mathrm{N}\left(  0,1\right)  ,\sqrt{nJ}%
\delta\mathrm{N}\left(  0,1\right)  \right\}
\]
with $\left\{  q^{n},\delta^{\prime n}\right\}  =\left\{  q^{\otimes n}%
,\delta^{\prime\left(  n\right)  }\right\}  $.

Meanwhile, instead, suppose $\left\vert L^{\prime}\right\vert \leq const.$
with probability 1. Then, by a given \ Let $X^{i}\sim q$, and $Y^{i}%
\sim\mathrm{N}\left(  0,1\right)  $,
\[
\frac{1}{\sqrt{n}}\tilde{L}^{n}=\frac{1}{\sqrt{n}}\left(  \Lambda^{n}\right)
^{\ast}\left(  L^{\prime\left(  n\right)  }\right)  :=\frac{1}{\sqrt{n\left(
J+\varepsilon^{2}\right)  }}\sum_{i=1}^{n}\left(  L^{\prime}\left(
X^{i}\right)  +\varepsilon Y^{i}\right)  .
\]
Then \ $L^{\prime}\left(  X^{i}\right)  +\varepsilon Y^{i}$ has density with
respect to Lebesgue measure, and satisfies (\ref{exp-tail}). Since Fisher
information of $\mathrm{p}_{\tilde{L}^{n}}$ equals
\[
\frac{n}{J^{-1}+\varepsilon^{2}}=\frac{nJ}{1+\varepsilon^{2}J}=n\left(
J-f\left(  \varepsilon\right)  \right)  \,\,\,(\,\lim_{\varepsilon
\rightarrow0}f\left(  \varepsilon\right)  =0\,),
\]
by Theorem\thinspace\ref{th:cont-prob-sim}, one can compose an asymptotic
tangent symulation of
\[
\left\{  \mathrm{N}\left(  0,1\right)  ,\sqrt{n\left(  1-f\left(
\varepsilon\right)  \right)  J}\,\delta\mathrm{N}\left(  0,1\right)  \right\}
\equiv\left\{  \mathrm{N}\left(  0,1\right)  ,\delta\mathrm{N}\left(
0,1\right)  \right\}  ^{\otimes n\left(  1-f\left(  \varepsilon\right)
\right)  }%
\]
by $\left\{  \mathrm{p}_{\tilde{L}^{n}}\left(  l\right)  ,l\mathrm{p}%
_{\tilde{L}^{n}}\left(  l\right)  \right\}  $. Since $\left\{  \Lambda
^{n},\,\mathrm{p}_{L^{\prime\left(  n\right)  }}\left(  l\right)
,l\mathrm{p}_{L^{\prime\left(  n\right)  }}\left(  l\right)  \right\}  $ is an
asymptotic tangent symulation of\ $\left\{  \mathrm{p}_{\tilde{L}^{n}}\left(
l\right)  ,l\mathrm{p}_{\tilde{L}^{n}}\left(  l\right)  \right\}  $, by
Proposition\thinspace\ref{prop:sufficient} and Proposition\thinspace
\ref{prop:sym-chain}, one can compose an asymptotic tangent simulation of
$\left\{  \mathrm{N}\left(  0,1\right)  ,\delta\mathrm{N}\left(  0,1\right)
\right\}  ^{\otimes n\left(  1-f\left(  \varepsilon\right)  \right)  }$ with
$\left\{  q^{n},\delta^{\prime n}\right\}  =\left\{  q^{\otimes n}%
,\delta^{\prime\left(  n\right)  }\right\}  $.

\subsection{Uniqueness theorem}

\begin{theorem}
\label{th:uniquenss}Suppose $g$ satisfies (M0), (A0), (C0), and (N0). Suppose
also either (a): $\left\{  p,\delta\right\}  $ is defined over a finite set,
or (b): the probability density $\mathrm{p}_{L}$ of $L$ with respect to
Lebesgue measure exists and satisfies (\ref{w-variance}). Then, if
$\mathbb{E}_{p}\left(  L\right)  ^{4}<\infty$, $g_{p}\left(  \delta\right)  $
equals $J=J_{p}\left(  \delta\right)  $. \ 
\end{theorem}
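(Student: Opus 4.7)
The plan is to sandwich $g_p(\delta)$ between $J$ from above and below, obtaining each inequality by a single tangent-simulation step paired with (M0), then passing to the $n\to\infty$ limit via the one-sided continuity (C0) and additivity (A0). The anchor on the Gaussian side is Proposition \ref{prop:g-shift}: combined with the equivalence $\{\mathrm{N}(0,1),\delta\mathrm{N}(0,1)\}^{\otimes m}\equiv\{\mathrm{N}(0,1/m),\delta\mathrm{N}(0,1/m)\}$ it pins down $g_{\mathrm{N}(0,1)^{\otimes m}}\bigl((\delta\mathrm{N}(0,1))^{(m)}\bigr)=m$ exactly, so $\tfrac1m g\to 1$ is consistent with (A0). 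Sections \ref{subsec:binary}--\ref{subsec:g-shift-sim-by-prob} have already supplied the necessary simulations in both directions with error $O(n^{-1/2})$ on the distribution and $O(n^{-1/4})$ on the $\tfrac1{\sqrt n}$-rescaled tangent, which is exactly the regime (C0) accepts.

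For the upper bound $g_p(\delta)\le J$, fix $\varepsilon>0$ and use Theorem \ref{th:finite-sim-prob} in case (a) or Theorem \ref{th:cont-prob-sim} in case (b) to produce an asymptotic tangent simulation $\{q^n,\delta'^n,\Lambda^n\}$ of $\{p,\delta\}^{\otimes n}$ from the Gaussian source $\{q^n,\delta'^n\}\equiv\{\mathrm{N}(0,1),\delta\mathrm{N}(0,1)\}^{\otimes n(J+\varepsilon)}$. Monotonicity (M0) gives $g_{q^n}(\delta'^n)\ge g_{\Lambda^n(q^n)}(\Lambda^n(\delta'^n))$; axiom (C0) applied to the sequence $\Lambda^n(q^n)\to p^{\otimes n}$ gives $\liminf_n\tfrac1n\bigl(g_{\Lambda^n(q^n)}(\Lambda^n(\delta'^n))-g_{p^{\otimes n}}(\delta^{(n)})\bigr)\ge 0$; and (A0) on the right combined with the identity $\tfrac1n g_{q^n}(\delta'^n)=J+\varepsilon$ yields $J+\varepsilon\ge g_p(\delta)$. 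Letting $\varepsilon\downarrow 0$ closes this half.

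For the lower bound $g_p(\delta)\ge J$, reverse the construction: Section \ref{subsec:g-shift-sim-by-prob} (the bounded-$L$ branch in case (a), the density branch using (\ref{w-variance}) in case (b)) yields, for every $\varepsilon>0$, an asymptotic tangent simulation $\{p^{\otimes n},\delta^{(n)},\Lambda^n\}$ of $\{\mathrm{N}(0,1),\delta\mathrm{N}(0,1)\}^{\otimes n(1-f(\varepsilon))J}$ with $f(\varepsilon)\to 0$. Running the same two axioms in the opposite roles, this time applying (C0) along the index $m=n(1-f(\varepsilon))J$ on the Gaussian side (which is legitimate since the $\tfrac1{\sqrt n}$-rate is preserved under multiplication by a positive constant), gives $g_p(\delta)=\lim_n\tfrac1n g_{p^{\otimes n}}(\delta^{(n)})\ge(1-f(\varepsilon))J$, and sending $\varepsilon\downarrow 0$ finishes.

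The only substantive technical point is matching the error rates to (C0): one needs $\|\cdot\|_1\to 0$ on the distribution and $\tfrac1{\sqrt n}\|\cdot\|_1\to 0$ on the tangent. This is exactly what Lemma \ref{lem:sim-tangent} produces, provided the moment bound $\tfrac1n\mathbb{E}(\tilde L^n)^2\le a<\infty$ holds. The hypothesis $\mathbb{E}_p(L)^4<\infty$ enters here: it keeps second moments of the Stein/zero-bias differences bounded in case (b) and ensures that the constants $A,B,B'$ generated by the recursive binary decomposition of Theorem \ref{th:finite-sim-prob} stay uniformly bounded in case (a). Once that bookkeeping is in place, the proof itself reduces to a few lines of limit-taking.
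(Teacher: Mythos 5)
Your sandwich argument is exactly the paper's: simulate $\left\{ p,\delta\right\}^{\otimes n}$ from $\left\{ \mathrm{N}\left(0,1\right),\delta\mathrm{N}\left(0,1\right)\right\}^{\otimes n\left(J+\varepsilon\right)}$ via Theorem\thinspace\ref{th:finite-sim-prob} or Theorem\thinspace\ref{th:cont-prob-sim}, simulate the Gaussian shift from $\left\{ p,\delta\right\}^{\otimes n}$ via Subsection\thinspace\ref{subsec:g-shift-sim-by-prob}, run (M0) and (C0) in each direction, and finish with (A0) and $\varepsilon\downarrow0$. That half is correct and matches the paper step for step.

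The gap is in your final paragraph, where you account for the hypothesis $\mathbb{E}_{p}\left(L\right)^{4}<\infty$. It is not what controls the simulation errors: Lemma\thinspace\ref{lem:sim-tangent} and the constants $A,B,B^{\prime}$ of Theorem\thinspace\ref{th:finite-sim-prob} need only second moments of $L$ and $\tilde{L}^{n}$ (automatic on a finite set), and in case (b) the relevant control is the assumed condition (\ref{w-variance}), not a fourth moment. Where the fourth moment actually enters --- and what your proposal omits entirely --- is the verification that $g_{p}\left(\delta\right)=J_{p}\left(\delta\right)$ itself satisfies (M0), (A0), (C0), (N0); without this the axiom system could be vacuous and the uniqueness statement empty, which is why the paper treats it as part of the proof. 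Checking (C0) for Fisher information uses the variational characterization $J_{p}\left(\delta\right)=\max_{T}\left\vert \mathbb{E}_{p}LT\right\vert^{2}/\mathbb{E}_{p}T^{2}$ with the optimizer $T^{n}=\left(nJ\right)^{-1}L^{\left(n\right)}$ truncated at level $a$; the truncation error is bounded by a Chebyshev-type estimate involving $\mathbb{E}_{p^{\otimes n}}\left(n^{-1/2}L^{\left(n\right)}\right)^{4}$, and this is the one place the finiteness of $\mathbb{E}_{p}\left(L\right)^{4}$ is needed. You should either supply this verification or note explicitly that your argument proves only the conditional uniqueness implication, in which case the fourth-moment hypothesis does no work in your proof and your appeal to it is spurious.
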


\begin{proof}
Let $\left\{  q^{n},\delta^{\prime n}\right\}  :=\left\{  \mathrm{N}\left(
0,1\right)  ,\delta\mathrm{N}\left(  0,1\right)  \right\}  ^{\otimes n\left(
J+\varepsilon\right)  }=\left\{  \mathrm{N}\left(  0,1\right)  ,\sqrt{n\left(
J+\varepsilon\right)  }\delta\mathrm{N}\left(  0,1\right)  \right\}  $
($\varepsilon>0$). Then by Proposition\thinspace\ref{prop:g-shift},
\[
g_{q^{n}}\left(  \delta^{\prime n}\right)  =n\left(  J+\varepsilon\right)  .
\]
Due to Theorem\thinspace\ref{th:finite-sim-prob} and Theorem\thinspace
\ref{th:cont-prob-sim}, there is $\Lambda^{n}$ with
(\ref{simulation-tangent-prob-1}) and (\ref{simulation-tangent-prob-2}).
Therefore, by (C0) and (M0),
\begin{align*}
0  &  \leq\varliminf_{n\rightarrow\infty}\frac{1}{n}\left(  g_{\Lambda\left(
q^{n}\right)  }\left(  \Lambda\left(  \delta^{\prime n}\right)  \right)
-g_{p^{\otimes n}}\left(  \delta^{\left(  n\right)  }\right)  \right) \\
&  \leq\varliminf_{n\rightarrow\infty}\frac{1}{n}\left(  g_{q^{n}}\left(
\delta^{\prime n}\right)  -g_{p^{\otimes n}}\left(  \delta^{\left(  n\right)
}\right)  \right) \\
&  =J_{p}\left(  \delta\right)  +\varepsilon-\varlimsup_{n\rightarrow\infty
}\frac{1}{n}g_{p^{\otimes n}}\left(  \delta^{\left(  n\right)  }\right)  .
\end{align*}
Similarly, by the argument in Subsection\thinspace
\ref{subsec:g-shift-sim-by-prob}, we have,
\begin{align*}
0  &  \leq\varliminf_{n\rightarrow\infty}\frac{1}{n}\left(  g_{\Lambda\left(
p^{\otimes n}\right)  }\left(  \Lambda\left(  \delta^{\left(  n\right)
}\right)  \right)  -g_{\mathrm{N}\left(  0,1\right)  }\left(  ,\sqrt{n\left(
J-\varepsilon\right)  }\delta\mathrm{N}\left(  0,1\right)  \right)  \right) \\
&  \leq\varliminf_{n\rightarrow\infty}\frac{1}{n}\left(  g_{p^{\otimes n}%
}\left(  \delta^{\left(  n\right)  }\right)  -\left(  J_{p}\left(
\delta\right)  -\varepsilon\right)  \right)
\end{align*}
Therefore,%
\[
J_{p}\left(  \delta\right)  -\varepsilon\leq\varliminf_{n\rightarrow\infty
}\frac{1}{n}g_{p^{\otimes n}}\left(  \delta^{\left(  n\right)  }\right)
\leq\varlimsup_{n\rightarrow\infty}\frac{1}{n}g_{p^{\otimes n}}\left(
\delta^{\left(  n\right)  }\right)  \leq J_{p}\left(  \delta\right)
+\varepsilon.
\]
Since $\varepsilon>0$ is arbitrary, we have
\[
\lim_{n\rightarrow\infty}\frac{1}{n}g_{p^{\otimes n}}\left(  \delta^{\left(
n\right)  }\right)  =J_{p}\left(  \delta\right)  ,
\]
which, combined with (A0) implies
\[
g_{p}\left(  \delta\right)  =J_{p}\left(  \delta\right)  .
\]

We have to check $g_{p}\left(  \delta\right)  =J_{p}\left(  \delta\right)  $
satisfies (M0), (A0), (C0), and (N0). (A0) and (N0) are checked by easy
computation. (M0) is well-known. Hence, (C0) is shown in the sequel. We use
the following characterization of Fisher information\thinspace(see Chap. 9 of
\cite{AmariNagaoka}):
\[
J_{p}\left(  \delta\right)  =\max_{T}\frac{\left\vert \mathbb{E}%
_{p}LT\right\vert ^{2}}{\mathbb{E}_{p}T^{2}},
\]
where the maximum is achieved by $T=J^{-1}\cdot L$, with $J=J_{p}\left(
\delta\right)  $. Define $T^{n}:=\left(  nJ\right)  ^{-1}\cdot L^{\left(
n\right)  }$ , and
\begin{align*}
T_{a}\left(  x\right)   &  :=\left\{
\begin{array}
[c]{cc}%
T\left(  x\right)  , & \left(  \left\vert T\left(  x\right)  \right\vert \leq
a\right)  ,\\
0, & \left(  \left\vert T\left(  x\right)  \right\vert >a\right)  ,
\end{array}
\right. \\
T_{a}^{n}\left(  x\right)   &  :=\left\{
\begin{array}
[c]{cc}%
T^{n}\left(  x^{n}\right)  , & \left(  T^{n}\left(  x^{n}\right)  \leq
a\right)  ,\\
0, & \left(  T^{n}\left(  x^{n}\right)  >a\right)  .
\end{array}
\right.
\end{align*}
Observe
\[
\frac{1}{n}J_{q^{n}}\left(  \delta^{\prime n}\right)  \geq\frac{\left\vert
\frac{1}{\sqrt{n}}\mathbb{E}_{q^{n}}L^{\prime n}T_{a}^{n}\right\vert ^{2}%
}{\mathbb{E}_{q^{n}}\left(  T_{a}^{n}\right)  ^{2}}=\frac{\left\vert \frac
{1}{\sqrt{n}}\mathbb{E}_{p^{\otimes n}}L^{\left(  n\right)  }T_{a}%
^{n}\right\vert ^{2}}{\mathbb{E}_{p^{\otimes n}}\left(  T_{a}^{n}\right)
^{2}}+o\left(  1\right)  ,
\]
where the last identity is due to $\left\Vert q^{n}-p^{\otimes n}\right\Vert
_{1}\rightarrow0$ and $\frac{1}{\sqrt{n}}\left\Vert \delta^{\prime n}%
-\delta^{\left(  n\right)  }\right\Vert _{1}\rightarrow0$. Observe also%
\begin{align*}
\left\vert \mathbb{E}_{p^{\otimes n}}\left(  T_{a}^{n}\right)  ^{2}%
-\mathbb{E}_{p^{\otimes n}}\left(  T^{n}\right)  ^{2}\right\vert  &
=\left\vert \mathbb{E}_{p^{\otimes n}}\left(  T^{n}\right)  ^{2}\chi_{t\geq
a}\left(  T^{n}\right)  \right\vert \\
&  =\left\vert \frac{1}{J^{2}}\mathbb{E}_{p^{\otimes n}}\left(  \frac{1}%
{\sqrt{n}}L^{\left(  n\right)  }\right)  ^{2}\chi_{t\geq\sqrt{n}Ja}\left(
\frac{1}{\sqrt{n}}L^{\left(  n\right)  }\right)  \right\vert \\
&  \leq\frac{1}{J^{2}}\frac{1}{n\left(  Ja\right)  ^{2}}\mathbb{E}_{p^{\otimes
n}}\left(  \frac{1}{\sqrt{n}}L^{\left(  n\right)  }\right)  ^{4}\\
&  =\frac{1}{J^{2}}\frac{1}{n\left(  Ja\right)  ^{2}}\left(  \frac{n-1}%
{n}J+\frac{1}{n}\mathbb{E}_{p}\left(  L\right)  ^{4}\right)  =o\left(
1\right)  .
\end{align*}
Similarly,
\begin{align*}
\left\vert \frac{1}{\sqrt{n}}\mathbb{E}_{p^{\otimes n}}L^{\left(  n\right)
}T_{a}^{n}-\frac{1}{\sqrt{n}}\mathbb{E}_{p^{\otimes n}}L^{\left(  n\right)
}T^{n}\right\vert  &  =\left\vert \frac{1}{\sqrt{n}}\mathbb{E}_{p^{\otimes n}%
}L^{\left(  n\right)  }T^{n}\chi_{t\geq a}\left(  T^{n}\right)  \right\vert \\
&  =\left\vert \frac{1}{\sqrt{n}J}\mathbb{E}_{p^{\otimes n}}\left(  \frac
{1}{\sqrt{n}}L^{\left(  n\right)  }\right)  ^{2}\chi_{t\geq\sqrt{n}Ja}\left(
\frac{1}{\sqrt{n}}L^{\left(  n\right)  }\right)  \right\vert =o\left(
1\right)  .
\end{align*}
Therefore,
\begin{align*}
\frac{1}{n}J_{q^{n}}\left(  \delta^{\prime n}\right)   &  \geq\frac{\left\vert
\frac{1}{\sqrt{n}}\mathbb{E}_{p^{\otimes n}}L^{\left(  n\right)  }%
T^{n}\right\vert ^{2}}{\mathbb{E}_{p^{\otimes n}}\left(  T^{n}\right)  ^{2}%
}+o\left(  1\right)  =\frac{1}{n}\frac{\left\vert \mathbb{E}_{p^{\otimes n}%
}L^{\left(  n\right)  }T^{n}\right\vert ^{2}}{\mathbb{E}_{p^{\otimes n}%
}\left(  T^{n}\right)  ^{2}}+o\left(  1\right) \\
&  =J_{p}\left(  \delta\right)  +o\left(  1\right)  ,
\end{align*}
which is (C0).
\end{proof}

\subsection{On asymptotic continuity}

If for any $\left\{  q^{n},\delta^{\prime n}\right\}  $ with $\left\Vert
q^{n}-p^{\otimes n}\right\Vert _{1}\rightarrow0$ and $\frac{1}{\sqrt{n}%
}\left\Vert \delta^{^{\prime}n}-\delta^{\left(  n\right)  }\right\Vert
_{1}\rightarrow0$,
\[
\varliminf_{n\rightarrow\infty}\frac{1}{n}\left\vert g_{q^{n}}\left(
\delta^{\prime n}\right)  -g_{p^{\otimes n}}\left(  \delta^{\left(  n\right)
}\right)  \right\vert =0
\]
holds, we say $g$ is \textit{asymptotically continuous} at $\left\{
p^{\otimes n},\delta^{\left(  n\right)  }\right\}  $. Analogous conditions are
used in study of entanglement measures etc. In our case, Fisher information
satisfies `$\geq$', or weak asymptotic continuity, as stated in
Theorem\thinspace\ref{th:uniquenss}. However, the other side of inequality,
and thus asymptotic continuity, is false. Let $p:=\mathrm{Bin}\left(
1,t\right)  $, and
\begin{align*}
q^{n}\left(  x^{n}\right)   &  :=\left\{
\begin{array}
[c]{cc}%
\frac{t}{2}^{n}, & \left(  x^{n}=0^{n}\right) \\
\left(  1-t\right)  ^{n}+\frac{t}{2}^{n}, & \left(  x^{n}=1^{n}\right) \\
p^{\otimes n}\left(  x^{n}\right)  , & \text{otherwise}%
\end{array}
\right. \\
\delta^{^{\prime}n}\left(  x^{n}\right)   &  :=\,\delta^{\left(  n\right)
}\left(  x^{n}\right)  ,\\
\delta\left(  0\right)   &  =-\delta\left(  1\right)  =1>0,
\end{align*}
then we have $\left\Vert q^{n}-p^{\otimes n}\right\Vert _{1}=\frac{t}{2}%
^{n}+\frac{t}{2}^{n}\rightarrow0$ , $\frac{1}{\sqrt{n}}\left\Vert
\delta^{^{\prime}n}-\delta^{\left(  n\right)  }\right\Vert _{1}=0$, and%

\begin{align*}
\frac{1}{n}\left\vert J_{p^{\otimes n}}\left(  \delta^{\left(  n\right)
}\right)  -J_{q^{n}}\left(  \delta^{\prime n}\right)  \right\vert  &
=\frac{1}{n}\left\vert \left(  \frac{1}{t^{n}}-\frac{2}{t^{n}}\right)
n^{2}+\left(  \frac{1}{\left(  1-t\right)  ^{n}}-\frac{1}{\left(  1-t\right)
^{n}+\frac{t}{2}^{n}}\right)  \left(  -n\right)  ^{2}\right\vert \\
&  =\frac{1}{n}\cdot n^{2}\left\vert \frac{-1}{t^{n}}+\frac{\frac{t}{2}^{n}%
}{\left(  1-t\right)  ^{n}\left\{  \left(  1-t\right)  ^{n}+\frac{t}{2}%
^{n}\right\}  }\right\vert \rightarrow\infty.
\end{align*}

\section{Classical Channels: Non-asymptotic theory}

\subsection{Axioms}

Other than being square of a norm, $G_{\Phi}\left(  \Delta\right)  $ should satisfy:

\begin{description}
\item[(M1)] \textit{(monotonicity 1)} $G_{\Phi}\left(  \Delta\right)  \geq
G_{\Phi\circ\Psi}\left(  \Delta\circ\Psi\right)  $

\item[(M2)] \textit{(monotonicity 2)} $G_{\Phi}\left(  \Delta\right)  \geq
G_{\Psi\circ\Phi}\left(  \Psi\circ\Delta\right)  $

\item[(E)] $G_{\Phi\otimes\mathbf{I}}\left(  \Delta\otimes\mathbf{I}\right)
=G_{\Phi}\left(  \Delta\right)  $

\item[(N)] $G_{p}\left(  \delta\right)  =J_{p}\left(  \delta\right)  $
\end{description}

\subsection{Simulation of channel families}

Suppose we have to fabricate a channel $\Phi_{\theta}$, which is drawn from a
family $\left\{  \Phi_{\theta}\right\}  $, without knowing the value of
$\theta$ but with a probability distribution $q_{\theta\text{ }}$or a channel
$\Psi_{\theta}$, drawn from a family $\left\{  q_{\theta}\right\}  $ or
$\left\{  \Psi_{\theta}\right\}  $. More specifically, we need a channel
$\Lambda$ with
\begin{equation}
\Phi_{\theta}=\Lambda\circ\left(  \mathbf{I}\otimes q_{\theta}\right)  ,
\label{simulation-1}%
\end{equation}
Here, note that $\Lambda$ should not vary with the parameter $\theta$. Giving
the value of $\theta$ with infinite precision corresponds to the case where
$q_{\theta}$ is delta distribution centered at $\theta$.

Differentiating the both ends of (\ref{simulation-1}) and letting
$\Phi_{\theta}=\Phi$ and $q_{\theta}=q$, we obtain%

\begin{equation}
\Delta=\Lambda\circ\left(  \mathbf{I}\otimes\delta^{\prime}\right)  ,\text{ }
\label{simulation-tangent-1}%
\end{equation}
where $\Delta\in\mathcal{T}_{\Phi}\left(  \mathcal{C}\right)  $ and
$\delta^{\prime}\in\mathcal{T}_{q}\left(  \mathcal{P}^{\prime}\right)  $.

In the manuscript, we consider \textit{tangent simulation}, or the operations
satisfying (\ref{simulation-1}) and (\ref{simulation-tangent-1}), at the point
$\Phi_{\theta}=\Phi$ only. Note that simulation of $\left\{  \Phi
,\Delta\right\}  $ is equivalent to the one of the channel family $\left\{
\Phi_{\theta+t}=\Phi+t\Delta\right\}  _{t}$.

\subsection{Relation between $J$ and $G$}

In this section, we review quickly the properties of norms with (M1), (M2),
(E), and (N). For the proof, see\thinspace\ \cite{Matsumoto:2010-1}.

\begin{theorem}
\label{th:G>Gmin}Suppose (M1) and (N) hold. Then, $\ $%
\[
G_{\Phi}\left(  \Delta\right)  \geq G_{\Phi}^{\min}\left(  \Delta\right)
:=\sup_{p\in\mathcal{P}_{\mathrm{in}}}J_{\Phi\left(  p\right)  }\left(
\Delta\left(  p\right)  \right)  =\sup_{x\in\mathcal{\Omega}_{\mathrm{in}}%
}J_{\Phi\left(  \cdot|x\right)  }\left(  \Delta\left(  \cdot|x\right)
\right)  .
\]
Trivially, $G_{\Phi}^{\min}\left(  \Delta\right)  $ satisfies (M1), (M2), (E),
and (N).
\end{theorem}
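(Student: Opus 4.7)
The plan is to derive the lower bound from a single application of (M1) with a cleverly chosen ``constant'' channel, and then verify the four axioms for $G^{\min}$ by routine convexity and monotonicity arguments.

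For the inequality $G_{\Phi}(\Delta) \ge G_{\Phi}^{\min}(\Delta)$, I would fix $p \in \mathcal{P}_{\mathrm{in}}$ and let $\Psi_{p} \in \mathcal{C}$ denote the Markov map sending every input distribution to $p$ (identified with $p$ viewed as a channel, as explained in the Notations section). Then $\Phi \circ \Psi_{p}$ is the constant channel whose output is $\Phi(p)$, which is identified with the distribution $\Phi(p) \in \mathcal{P}_{\mathrm{out}}$; differentiating, $\Delta \circ \Psi_{p}$ is identified with the tangent vector $\Delta(p) \in \mathcal{T}_{\Phi(p)}(\mathcal{P}_{\mathrm{out}})$. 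Axiom (M1) then gives
\[
G_{\Phi}(\Delta) \;\ge\; G_{\Phi \circ \Psi_{p}}(\Delta \circ \Psi_{p}) \;=\; G_{\Phi(p)}(\Delta(p)),
\]
and (N) rewrites the right-hand side as $J_{\Phi(p)}(\Delta(p))$. Taking the supremum over $p$ yields $G_{\Phi}(\Delta) \ge \sup_{p} J_{\Phi(p)}(\Delta(p))$.

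To identify the two suprema, I would use that the map $p \mapsto (\Phi(p),\Delta(p))$ is affine on the simplex $\mathcal{P}_{\mathrm{in}}$, while $J_{q}(\eta) = \int \eta^{2}/q \, \mathrm{d}\mu$ is jointly convex in $(q,\eta)$ (it is the integral of the perspective of $\eta \mapsto \eta^{2}$). Hence $p \mapsto J_{\Phi(p)}(\Delta(p))$ is convex, so its supremum over the simplex is attained at an extreme point, i.e.\ at a point mass $\delta_{x}$. Since $\Phi(\delta_{x}) = \Phi(\cdot|x)$ and $\Delta(\delta_{x}) = \Delta(\cdot|x)$, this identifies the two formulas.

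Finally, I verify that $G^{\min}$ satisfies (M1), (M2), (E), (N). For (N), if $\Phi = p$ is a constant channel then $\Phi(\cdot|x) = p$ and $\Delta(\cdot|x) = \delta$ for every $x$, so $G^{\min}_{p}(\delta) = J_{p}(\delta)$. For (M1), $G^{\min}_{\Phi\circ\Psi}(\Delta\circ\Psi) = \sup_{p} J_{\Phi(\Psi(p))}(\Delta(\Psi(p))) \le \sup_{p'} J_{\Phi(p')}(\Delta(p'))$ since $\Psi(p)$ ranges over a subset of $\mathcal{P}_{\mathrm{in}}$. For (M2), apply the classical monotonicity $J_{\Psi(q)}(\Psi(\eta)) \le J_{q}(\eta)$ pointwise and take sup. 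For (E), expand $(\Phi\otimes\mathbf{I})(\cdot|x,y) = \Phi(\cdot|x)\otimes \delta_{y}$ and similarly for $\Delta\otimes\mathbf{I}$; tensoring with a deterministic factor does not change Fisher information, so $G^{\min}_{\Phi\otimes\mathbf{I}}(\Delta\otimes\mathbf{I}) = G^{\min}_{\Phi}(\Delta)$. I do not expect a serious obstacle: the crucial idea is only that ``evaluating $\{\Phi,\Delta\}$ at an input $p$'' is itself realized as a composition with the constant channel $\Psi_{p}$, so that (M1) plus (N) automatically produces the Fisher-information lower bound; the remaining verifications are standard convexity and data-processing facts.
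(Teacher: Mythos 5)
The paper itself gives no proof of this theorem (it defers to \cite{Matsumoto:2010-1}), but your argument is correct and is the standard one: evaluating $\left\{\Phi,\Delta\right\}$ at a fixed input $p$ is exactly pre-composition with the constant channel $\Psi_{p}$, so (M1) together with (N) immediately yields $G_{\Phi}\left(\Delta\right)\geq J_{\Phi\left(p\right)}\left(\Delta\left(p\right)\right)$, and your verifications of (M1), (M2), (E), (N) for $G_{\Phi}^{\min}$ are routine and sound. The only caveat worth flagging is the identification $\sup_{p}=\sup_{x}$: the extreme-point argument is clean when $\Omega_{\mathrm{in}}$ is finite, but for continuous inputs the point masses $\delta_{x}$ are not elements of $\mathcal{P}_{\mathrm{in}}$ (which consists of densities), so one should instead combine Jensen's inequality for the jointly convex $\left(q,\eta\right)\mapsto J_{q}\left(\eta\right)$ (giving $\sup_{p}\leq\sup_{x}$) with an approximation of $\delta_{x}$ by narrow densities for the reverse inequality --- a detail the paper also leaves implicit.
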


\begin{theorem}
\label{th:G<Gmax}Suppose (M2), (E) and (N) hold. Then
\[
G_{\Phi}\left(  \Delta\right)  \leq G_{\Phi}^{\max}\left(  \Delta\right)
:=\inf_{\Lambda,q,\delta}\left\{  J_{q}\left(  \delta\right)  ;\,\,\Lambda
\circ\left(  \mathbf{I}\otimes q\right)  =\Phi,\,\Lambda\circ\left(
\mathbf{I}\otimes\delta\right)  =\Delta\text{ }\right\}  .
\]
Also, $G_{\Phi}^{\max}\left(  \Delta\right)  $ satisfies (M1), (M2), (E), and (N).
\end{theorem}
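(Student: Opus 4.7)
The plan is to prove the inequality by considering, for each legal simulation $(\Lambda,q,\delta')$ of $(\Phi,\Delta)$, the chain of inequalities that monotonicity, extension-invariance, and normalization force; and then separately to check that the functional $G^{\max}$ itself obeys all four axioms.

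For the inequality, I would view $\mathbf{I}\otimes q$ as the Markov map that appends the fixed state $q$ to its input, with tangent vector $\mathbf{I}\otimes\delta'$. The simulation identities $\Phi=\Lambda\circ(\mathbf{I}\otimes q)$ and $\Delta=\Lambda\circ(\mathbf{I}\otimes\delta')$ let me apply (M2) with the post-composed map $\Lambda$ to obtain
\[
G_{\Phi}(\Delta)=G_{\Lambda\circ(\mathbf{I}\otimes q)}(\Lambda\circ(\mathbf{I}\otimes\delta'))\le G_{\mathbf{I}\otimes q}(\mathbf{I}\otimes\delta').
\]
Next, (E), together with the swap symmetry between the two tensor factors (the swap is Markov with a Markov inverse, so (M1) and (M2) force invariance under it), identifies $G_{\mathbf{I}\otimes q}(\mathbf{I}\otimes\delta')$ with $G_{q}(\delta')$, where $q$ is viewed as the constant channel. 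Finally, (N) evaluates this to $J_{q}(\delta')$. Taking the infimum over all simulations delivers $G_{\Phi}(\Delta)\le G_{\Phi}^{\max}(\Delta)$.

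For the second claim, I would verify each axiom by manipulating simulations. For (M2), post-composing any simulation of $(\Phi,\Delta)$ by $\Psi$ yields a simulation of $(\Psi\circ\Phi,\Psi\circ\Delta)$ with the same $(q,\delta')$, so the infimum can only decrease. For (M1), I would use the identity $(\mathbf{I}\otimes q)\circ\Psi=(\Psi\otimes\mathbf{I})\circ(\mathbf{I}\otimes q)$ to absorb a pre-composed $\Psi$ into $\Lambda$, again preserving $(q,\delta')$. For (E), one direction comes from tensoring a simulation of $\Phi$ with the trivial identity simulation of $\mathbf{I}$; the other from the observation that any simulation of $\Phi\otimes\mathbf{I}$ can be restricted to one of $\Phi$ by fixing an arbitrary state on the auxiliary factor. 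For (N), the trivial simulation (second projection as $\Lambda$, $q=p$, $\delta'=\delta$) gives $G_{p}^{\max}(\delta)\le J_{p}(\delta)$, and the reverse is just the monotonicity of Fisher information itself: $J_{p}(\delta)=J_{\Lambda(\mathbf{I}\otimes q)(p)}(\Lambda(\mathbf{I}\otimes\delta')(p))\le J_{q}(\delta')$.

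The main obstacle is not any deep analytic step but rather tensor-factor bookkeeping: one must carefully check that (E), stated as $G_{\Phi\otimes\mathbf{I}}(\Delta\otimes\mathbf{I})=G_{\Phi}(\Delta)$, is genuinely symmetric in its two factors, and that identifications such as $(\mathbf{I}\otimes q)\circ\Psi=(\Psi\otimes\mathbf{I})\circ(\mathbf{I}\otimes q)$ are done with the correct labelling of input/output spaces. Once these routine identifications are in place, both halves of the statement reduce to transparent applications of the axioms.
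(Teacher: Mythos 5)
Your argument is correct: the inequality follows exactly as you say from (M2) applied to the post-composition by $\Lambda$, swap-invariance (forced by (M1) and (M2) since the swap is Markov with Markov inverse) together with (E) to reduce $G_{\mathbf{I}\otimes q}(\mathbf{I}\otimes\delta')$ to $G_q(\delta')$, and (N); and your verifications of the four axioms for $G^{\max}$ (absorbing pre- and post-compositions into $\Lambda$ via $(\mathbf{I}\otimes q)\circ\Psi=(\Psi\otimes\mathbf{I})\circ(\mathbf{I}\otimes q)$, the two-sided simulation bookkeeping for (E), and data-processing of Fisher information for (N)) are all sound. The paper itself omits the proof, deferring to the author's earlier manuscript, but your route is the standard one that theorem is based on, so there is nothing to flag.
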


Obviously, $G_{\Phi}^{\min}\left(  \Delta\right)  $ and $G_{\Phi}^{\max
}\left(  \Delta\right)  $ are not induced from any metric, i.e., they cannot
be written as $S\left(  \Delta,\Delta\right)  $, where $S$ is a positive real
bilinear form. Indeed, we can show the following :

\begin{theorem}
\label{th:no-inner-product}Suppose (M1), (N), and (E) hold. Then, $G_{\Phi
}\left(  \Delta\right)  $ cannot written as $S_{\Phi}\left(  \Delta
,\Delta\right)  $, where $S$ is a positive bilinear form.
\end{theorem}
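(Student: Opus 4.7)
The plan is to suppose, for contradiction, that $G_{\Phi}(\Delta) = S_{\Phi}(\Delta, \Delta)$ for some positive symmetric bilinear form $S_{\Phi}$, and to exhibit a single, very simple channel at which axioms (N) and (M1) overdetermine the resulting $2 \times 2$ Gram matrix. I would take $\Omega_{\mathrm{in}} = \Omega_{\mathrm{out}} = \{0, 1\}$, let $\Phi$ be the constant channel sending every input to the uniform distribution $p = (1/2, 1/2)$, and let $\Delta_0, \Delta_1 \in \mathcal{T}_{\Phi}(\mathcal{C})$ be the two channel tangents defined by $\Delta_i(\cdot|x) = (1, -1)$ if $x = i$ and $(0, 0)$ otherwise. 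These span all of $\mathcal{T}_{\Phi}(\mathcal{C})$, which is two-dimensional, so under the bilinearity hypothesis $G_{\Phi}$ is described on this basis by three real numbers via $G_{\Phi}(\alpha \Delta_0 + \beta \Delta_1) = a\alpha^2 + 2b\alpha\beta + c\beta^2$.

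Three constraints would then be extracted. The first uses (N): since $\Delta_0 + \Delta_1$ is the constant-in-input tangent with fiber $(1,-1)$, the identification of constant channels with distributions and axiom (N) give $G_{\Phi}(\Delta_0 + \Delta_1) = J_p((1,-1)) = 4$, hence $a + 2b + c = 4$. The second and third use (M1) applied to the two constant pre-composition Markov maps $\Psi_i : \{0,1\} \to \{0,1\}$ that send every input to $i \in \{0,1\}$: these leave $\Phi$ invariant, while a direct computation gives $(\alpha \Delta_0 + \beta \Delta_1) \circ \Psi_0 = \alpha(\Delta_0 + \Delta_1)$ and $(\alpha \Delta_0 + \beta \Delta_1) \circ \Psi_1 = \beta(\Delta_0 + \Delta_1)$, so (M1) combined with the first constraint yields
\[
a\alpha^2 + 2b\alpha\beta + c\beta^2 \;\geq\; 4\alpha^2 \quad\text{and}\quad a\alpha^2 + 2b\alpha\beta + c\beta^2 \;\geq\; 4\beta^2
\]
for all real $\alpha, \beta$. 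These are equivalent to positive semidefiniteness of the quadratic forms $(a-4)\alpha^2 + 2b\alpha\beta + c\beta^2$ and $a\alpha^2 + 2b\alpha\beta + (c-4)\beta^2$, so $a \geq 4$, $c \geq 4$, $b^2 \leq (a-4)c$, and $b^2 \leq a(c-4)$.

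The final step is an algebraic collision. Using $a + 2b + c = 4$ to substitute $b = (4 - a - c)/2$, the inequality $b^2 \leq (a-4)c$ collapses to $(a - c + 4)^2 \leq 0$, forcing $a - c = -4$; the parallel substitution into $b^2 \leq a(c-4)$ forces $a - c = +4$. These are incompatible, which is the desired contradiction. The key conceptual step is recognizing that constant pre-compositions, via bilinearity, promote the mere monotonicity inequality (M1) into full PSD constraints on explicit quadratic forms; once this is in place, the rest is short algebra. Note that axiom (E) is not actually used in this two-dimensional counterexample, although it is of course consistent with the hypotheses of the theorem.
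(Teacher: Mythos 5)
Your argument is correct and self-contained, but note first that this paper does not actually contain a proof of Theorem~\ref{th:no-inner-product}: it is stated in the review section with the remark ``for the proof, see \cite{Matsumoto:2010-1}'', so there is no in-paper proof to compare against. Your route is exactly in the spirit of the paper's framework: the constant pre-compositions $\Psi_0,\Psi_1$ are the mechanism behind Theorem~\ref{th:G>Gmin}, i.e.\ behind $G_{\Phi}(\Delta)\geq\sup_{x}J_{\Phi(\cdot|x)}(\Delta(\cdot|x))$, and your observation that bilinearity upgrades these monotonicity inequalities to positive-semidefiniteness of explicit $2\times2$ quadratic forms is the right key step. I verified the details: $\Delta_{0},\Delta_{1}$ are legitimate elements of the two-dimensional tangent space at the constant channel, $\Phi\circ\Psi_{i}=\Phi$, $(\alpha\Delta_{0}+\beta\Delta_{1})\circ\Psi_{i}$ is the constant tangent scaled by $\alpha$ (resp.\ $\beta$), and (N) gives $G_{\Phi}(\Delta_{0}+\Delta_{1})=J_{(1/2,1/2)}((1,-1))=4$, so the three constraints $a+2b+c=4$, $b^{2}\leq(a-4)c$, $b^{2}\leq a(c-4)$ are all justified. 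The only slip is in the final algebra: substituting $b=(4-a-c)/2$ into $b^{2}\leq(a-4)c$ gives $\bigl((a-4)-c\bigr)^{2}\leq0$, i.e.\ $a-c=+4$, while $b^{2}\leq a(c-4)$ gives $a-c=-4$; you have attached the two signs to the wrong inequalities. Since the conclusion you need is only that the two constraints are mutually incompatible, the contradiction is unaffected, but you should fix the labels. You are also right that (E) is never invoked; the theorem's hypothesis list simply includes it.
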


\ \ 

\section{ Classical Channels: Asymptotic Theory}

\subsection{Asymptotic Theory: additional axioms}

\begin{description}
\item[(A)] \textit{(asymptotic weak additivity)} $\lim_{n\rightarrow\infty
}\frac{1}{n}G_{\Phi^{\otimes n}}\left(  \Delta^{\left(  n\right)  }\right)
=G_{\Phi}\left(  \Delta\right)  $

\item[(C)] \textit{(weak asymptotic continuity)} If $\left\Vert \Phi^{n}%
-\Phi^{\otimes n}\right\Vert _{\mathrm{cb}}\rightarrow0$ and $\frac{1}%
{\sqrt{n}}\left\Vert \Delta^{n}-\Delta^{\left(  n\right)  }\right\Vert
_{\mathrm{cb}}\rightarrow0$ then
\[
\varliminf_{n\rightarrow\infty}\frac{1}{n}\left(  G_{\Phi^{n}}\left(
\Delta^{n}\right)  -G_{\Phi^{\otimes n}}\left(  \Delta^{\left(  n\right)
}\right)  \right)  \geq0.
\]

\end{description}

\subsection{Asymptotic tangent simulation:definition}

We consider asymptotic version of approximate version of (\ref{simulation-1}%
)-(\ref{simulation-tangent-2}):
\begin{equation}
\lim_{n\rightarrow\infty}\left\Vert \Phi^{\otimes n}\left(  p\right)
-\Lambda^{n}\left(  p\otimes q^{n}\right)  \right\Vert _{\mathrm{cb}%
}=0\,,\,\forall p, \label{e-simulation-2}%
\end{equation}
and%

\begin{equation}
\lim_{n\rightarrow\infty}\frac{1}{\sqrt{n}}\left\Vert \Delta^{\left(
n\right)  }\left(  p\right)  -\Lambda^{n}\left(  p\otimes\delta^{n}\right)
\right\Vert _{\mathrm{cb}}=0,\,\forall p\text{ },
\label{e-simulation-tangent-2}%
\end{equation}
with "program" $\left\{  q^{n},\,\delta^{n}\right\}  $. Here, the larger one
of $\left\Vert \Phi^{\otimes n}\left(  p\right)  -\Lambda^{n}\left(  p\otimes
q^{n}\right)  \right\Vert _{\mathrm{cb}}$ and $\frac{1}{\sqrt{n}}\left\Vert
\Delta^{\left(  n\right)  }\left(  p\right)  -\Lambda^{n}\left(
p\otimes\delta^{n}\right)  \right\Vert _{\mathrm{cb}}$ is called the
\textit{error} of the asymtotic tangent simulation.

\subsection{Finite inputs}

In this sutbsection, $\Omega_{\mathrm{in}}$ $=\left\{  1,\cdots,k\right\}  $.

\begin{theorem}
\label{th:channel-sim-finite}Suppose $\left\{  \Phi\left(  \cdot|x\right)
,\Delta\left(  \cdot|x\right)  \right\}  $ satisfies all the conditions
imposed on $\left\{  p,\delta\right\}  $ in Theorem\thinspace
\ref{th:uniquenss}. Let us define $\left\{  q_{\varepsilon}^{n},\delta
_{\varepsilon}^{n}\right\}  :=$ $\{\mathrm{N}\left(  0,1\right)
,\delta\mathrm{N}\left(  0,1\right)  \}^{\otimes n\left(  1+k\varepsilon
\right)  \left(  J+c\right)  }$ where $J=G_{\Phi}^{\min}\left(  \Delta\right)
=\max_{1\leq x\leq k}J_{\Phi\left(  \cdot|x\right)  }\left(  \Delta\left(
\cdot|x\right)  \right)  $ and $\varepsilon>0$, $c>0$ are arbitrary. Then,
there is $\Lambda^{n}$ such that
\begin{align}
\left\Vert \Phi^{\otimes n}\left(  p\right)  -\Lambda^{n}\left(  p\otimes
q_{\varepsilon}^{n}\right)  \right\Vert _{\mathrm{cb}}  &  \leq\frac
{C}{\left(  \varepsilon n\right)  ^{1/4}},\nonumber\\
\frac{1}{\sqrt{n}}\left\Vert \Delta^{\left(  n\right)  }\left(  p\right)
-\Lambda^{n}\left(  p\otimes\delta_{\varepsilon}^{n}\right)  \right\Vert
_{\mathrm{cb}}  &  \leq\frac{C}{\left(  \varepsilon n\right)  ^{1/4}},
\label{sim-channel-error}%
\end{align}
where $C$ is a function of $\left\{  \Phi\left(  y|x\right)  ,\Delta\left(
y|x\right)  ;x\in\Omega_{\mathrm{in}},y\in\Omega_{\mathrm{out}}\right\}  $.
Especially, if $\left\vert \Omega_{\mathrm{out}}\right\vert <\infty$, this
function is countinuous and bounded.
\end{theorem}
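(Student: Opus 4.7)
The plan is to reduce the cb-norm estimate to a worst-case over pure inputs $x^n$, and then, conditioned on $x^n$, assemble a simulation of $\prod_x \Phi(\cdot|x)^{\otimes N_x(x^n)}$ by running the $k$ probability-distribution simulations from Theorem~\ref{th:finite-sim-prob} (or Theorem~\ref{th:cont-prob-sim}) in parallel, one per input symbol $x$. Since $\Lambda^n$ is a classical Markov map, $\|\Phi^{\otimes n}-\Lambda^n\circ(\mathbf{I}\otimes q^n_\varepsilon)\|_{\mathrm{cb}}=\sup_{x^n}\|\Phi^{\otimes n}(\delta_{x^n})-\Lambda^n(\delta_{x^n}\otimes q^n_\varepsilon)\|_1$, and an analogous reduction applies to the tangent. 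So it is enough to build, for each fixed sequence $x^n$ with empirical counts $\{N_x\}$, an approximate simulation of the product channel and its tangent.

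For each $x\in\{1,\dots,k\}$ let $J_x:=J_{\Phi(\cdot|x)}(\Delta(\cdot|x))$, so $J=\max_x J_x$, and set $\widetilde{N}_x:=\max\{N_x,\lceil n\varepsilon\rceil\}$. Apply Theorem~\ref{th:finite-sim-prob}/\ref{th:cont-prob-sim} to the local data $\{\Phi(\cdot|x),\Delta(\cdot|x)\}$ to obtain a Markov map $\Lambda^{\widetilde{N}_x}_x$ that asymptotically tangent-simulates $\{\Phi(\cdot|x),\Delta(\cdot|x)\}^{\otimes\widetilde{N}_x}$ from a Gaussian program of size $\widetilde{N}_x(J_x+c)$, with error bounded by $C_x\cdot\widetilde{N}_x^{-1/4}\le C_x(n\varepsilon)^{-1/4}$ where $C_x$ depends only on $\{\Phi(\cdot|x),\Delta(\cdot|x)\}$. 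The total Gaussian budget used is
\[
\sum_x\widetilde{N}_x(J_x+c)\le\sum_x N_x(J_x+c)+\sum_x\lceil n\varepsilon\rceil(J_x+c)\le n(J+c)+nk\varepsilon(J+c)=n(1+k\varepsilon)(J+c),
\]
which exactly fits the available program $\{q^n_\varepsilon,\delta^n_\varepsilon\}$. The definition of $\Lambda^n$ then reads: given $x^n$, compute the counts $N_x$, split the $n(1+k\varepsilon)(J+c)$ Gaussian coordinates into $k$ disjoint blocks of sizes $\widetilde{N}_x(J_x+c)$, feed the $x$-th block into $\Lambda^{\widetilde{N}_x}_x$, marginalize out the last $\widetilde{N}_x-N_x$ coordinates of its output, and place the remaining $N_x$ outputs in the positions of $x^n$ that carry symbol $x$.

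For the zeroth-order simulation, since each Gaussian block is independent of the others and of the input, the output $\Lambda^n(\delta_{x^n}\otimes q^n_\varepsilon)$ is a product over $x$ of the marginalized simulations, whose aggregate $L^1$ error against $\prod_x\Phi(\cdot|x)^{\otimes N_x}$ is at most $\sum_x\|\Phi(\cdot|x)^{\otimes\widetilde{N}_x}-\Lambda^{\widetilde{N}_x}_x(q^{\widetilde{N}_x})\|_1\le\sum_x C_x(n\varepsilon)^{-1/4}$. For the tangent, the Leibniz rule for $\delta^n_\varepsilon$ splits it over the $k$ blocks: $\delta^n_\varepsilon=\sum_x \delta^{(|\text{block }x|)}_{\varepsilon,x}\otimes\bigotimes_{y\ne x}q^{|\text{block }y|}$. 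The key compatibility is that marginalizing the last $\widetilde{N}_x-N_x$ outputs of $\{\Phi(\cdot|x),\Delta(\cdot|x)\}^{\otimes\widetilde{N}_x}$ returns $\{\Phi(\cdot|x),\Delta(\cdot|x)\}^{\otimes N_x}$, because $\int\Delta(\cdot|x)\,\mathrm{d}\mu=0$ kills all tangent contributions from positions that are integrated out. Combined with the above splitting, $\Lambda^n(\delta_{x^n}\otimes\delta^n_\varepsilon)$ equals $\sum_x\bigl[\widetilde\Delta_x^{(N_x)}\otimes\prod_{y\ne x}\widetilde\Phi_y^{\otimes N_y}\bigr]$, where each tilde-object is the marginalized simulation of its counterpart. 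Expanding $\Delta^{(n)}(\delta_{x^n})=\sum_x\Delta(\cdot|x)^{(N_x)}\otimes\prod_{y\ne x}\Phi(\cdot|y)^{\otimes N_y}$ and comparing term-by-term, the error $\frac{1}{\sqrt n}\|\Delta^{(n)}(\delta_{x^n})-\Lambda^n(\delta_{x^n}\otimes\delta^n_\varepsilon)\|_1$ is again bounded by $\sum_x C_x(n\varepsilon)^{-1/4}$ after absorbing the $\sqrt n$ into the $\widetilde{N}_x\ge n\varepsilon$ normalization already present in the tangent error of the single-symbol simulations.

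Taking the supremum over $x^n$ turns $\sum_x C_x$ into a function $C$ of the channel data $\{\Phi(y|x),\Delta(y|x)\}_{x,y}$, which is continuous and bounded on compacts when $|\Omega_{\mathrm{out}}|<\infty$ because each $C_x$ is. The chief obstacle in carrying this out is the careful bookkeeping that links (i) the splitting of the Gaussian tangent $\delta^n_\varepsilon$ over $k$ blocks, (ii) the Leibniz decomposition of $\Delta^{(n)}(\delta_{x^n})$ over $k$ symbol classes, and (iii) the marginalization of surplus coordinates $\widetilde{N}_x-N_x$ without leaking any residual tangent; once these three decompositions are matched term-for-term, the per-symbol error estimates from Theorems~\ref{th:finite-sim-prob} and~\ref{th:cont-prob-sim} plug in and yield the claimed $O((n\varepsilon)^{-1/4})$ bound.
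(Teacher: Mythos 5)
Your proposal is correct and follows essentially the same route as the paper: reduce the cb-norm to a supremum over input sequences $x^{n}$, partition by symbol, simulate each block $\left\{ \Phi\left( \cdot|x\right) ,\Delta\left( \cdot|x\right) \right\} ^{\otimes N_{x}}$ via Theorem\thinspace\ref{th:finite-sim-prob}/\ref{th:cont-prob-sim}, and pad any block with $N_{x}<\varepsilon n$ up to size $\varepsilon n$ (marginalizing the surplus) so that every per-block error is $O\left( \left( \varepsilon n\right) ^{-1/4}\right) $, with the Gaussian budget $n\left( 1+k\varepsilon\right) \left( J+c\right) $ covering the total. Your write-up merely makes explicit the budget arithmetic and the Leibniz/marginalization bookkeeping for the tangent that the paper leaves implicit.
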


\begin{proof}
Given the input sequence $x^{n}=x_{1}\cdots x_{n}$, denote the number of $x$
in $x^{n}$ by $N_{x}$. Suppose $N_{x}\geq\varepsilon n$. Then, we use
$\{\mathrm{N}\left(  0,1\right)  ,\delta\mathrm{N}\left(  0,1\right)
\}^{\otimes N_{x}\left(  J+c\right)  }$ for simulation of $\left\{
\Phi\left(  \cdot|x\right)  ,\Delta\left(  \cdot|x\right)  \right\}  ^{\otimes
N_{x}}$. On the other hand, if $N_{x}<\varepsilon n$, we first fabricate
$\left\{  \Phi\left(  \cdot|x\right)  ,\Delta\left(  \cdot|x\right)  \right\}
^{\otimes\varepsilon n}$ using $\{\mathrm{N}\left(  0,1\right)  ,\delta
\mathrm{N}\left(  0,1\right)  \}^{\otimes n\varepsilon\left(  J+c\right)  }$,
and takes marginal. We do this for all $x=1,\cdots,k$. Since
\[
\bigotimes_{x=1}^{k}\{\mathrm{N}\left(  0,1\right)  ,\delta\mathrm{N}\left(
0,1\right)  \}^{\otimes N_{x}\left(  1+\varepsilon\right)  \left(
J_{x}+c\right)  }\equiv\{\mathrm{N}\left(  0,1\right)  ,\delta\mathrm{N}%
\left(  0,1\right)  \}^{\otimes n\left(  1+k\varepsilon\right)  \left(
J+c\right)  },
\]
by Theorem\thinspace\ref{th:finite-sim-prob} and Theorem\thinspace
\ref{th:cont-prob-sim}, we have (\ref{sim-channel-error}) and the proof is complete.
\end{proof}

\begin{theorem}
\label{channel-cont-unique}Suppose $\left\{  \Phi\left(  \cdot|x\right)
,\Delta\left(  \cdot|x\right)  \right\}  $ satisfies all the conditions
imposed on $\left\{  p,\delta\right\}  $ in Theorem\thinspace
\ref{th:uniquenss} for all $x\in\Omega_{\mathrm{in}}$. Then, if a metric $G$
satisfies (M1), (M2), (E), (A), (C), and (N0). Then,
\[
G_{\Phi}\left(  \Delta\right)  =G_{\Phi}^{\min}\left(  \Delta\right)  .
\]

\end{theorem}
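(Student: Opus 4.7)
The plan is to prove $G_{\Phi}(\Delta)\ge G_{\Phi}^{\min}(\Delta)$ and $G_{\Phi}(\Delta)\le G_{\Phi}^{\min}(\Delta)$ separately. The lower bound is Theorem~\ref{th:G>Gmin}, which however needs the full normalization (N) and not just (N0). So the first step is to bootstrap (N) out of (N0) by restricting to probability distributions. Identifying each $p$ with the constant Markov map sending every input to $p$, the channel axioms collapse: (M2) specializes to (M0) while (M1) becomes trivial; (E) together with (A) yields (A0); and (C) yields (C0), since for constant channels $\|\Phi_{p}-\Phi_{q}\|_{\mathrm{cb}}=\|p-q\|_{1}$. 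Combined with (N0), Theorem~\ref{th:uniquenss} forces $g_{p}(\delta)=J_{p}(\delta)$, which is (N). Theorem~\ref{th:G>Gmin} then yields the lower bound.

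For the upper bound, write $J:=G_{\Phi}^{\min}(\Delta)$ and fix $\varepsilon,c>0$. Theorem~\ref{th:channel-sim-finite} supplies, for each $n$, a Markov map $\Lambda^{n}$ and a program $\{q_{\varepsilon}^{n},\delta_{\varepsilon}^{n}\}=\{\mathrm{N}(0,1),\delta\mathrm{N}(0,1)\}^{\otimes n(1+k\varepsilon)(J+c)}$ such that the simulated channel $\tilde{\Phi}^{n}(p):=\Lambda^{n}(p\otimes q_{\varepsilon}^{n})$ and tangent $\tilde{\Delta}^{n}(p):=\Lambda^{n}(p\otimes\delta_{\varepsilon}^{n})$ satisfy $\|\tilde{\Phi}^{n}-\Phi^{\otimes n}\|_{\mathrm{cb}}\to 0$ and $\tfrac{1}{\sqrt{n}}\|\tilde{\Delta}^{n}-\Delta^{(n)}\|_{\mathrm{cb}}\to 0$. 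Applying (M2) to post-composition by $\Lambda^{n}$ and then (E) to strip the identity factor yields $G_{\tilde{\Phi}^{n}}(\tilde{\Delta}^{n})\le G_{q_{\varepsilon}^{n}}(\delta_{\varepsilon}^{n})$. Viewing $q_{\varepsilon}^{n}$ as a constant channel and combining the bootstrapped (N) with the Gaussian equivalence in the remark following Proposition~\ref{prop:g-shift}, the right-hand side equals $n(1+k\varepsilon)(J+c)$.

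To close the loop I apply (C) with $\Phi^{n}:=\tilde{\Phi}^{n}$ and $\Delta^{n}:=\tilde{\Delta}^{n}$, which gives
\[
\tfrac{1}{n}G_{\Phi^{\otimes n}}(\Delta^{(n)})\le\tfrac{1}{n}G_{\tilde{\Phi}^{n}}(\tilde{\Delta}^{n})+o(1)\le(1+k\varepsilon)(J+c)+o(1).
\]
Sending $n\to\infty$ and using (A) on the left yields $G_{\Phi}(\Delta)\le(1+k\varepsilon)(J+c)$; letting $\varepsilon,c\to 0$ gives the matching upper bound. The principal obstacle is the one-sided character of (C): it is only a $\varliminf$ inequality, so one must place the approximant $\tilde{\Phi}^{n}$ on the larger side of the bound and $\Phi^{\otimes n}$ on the smaller side. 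The Gaussian-program construction of Theorem~\ref{th:channel-sim-finite} does exactly this, producing an approximant whose metric is computable via the bootstrapped (N); a minor additional subtlety is the reduction of the channel axioms to their distributional counterparts in the first step, which hinges on the identification of distributions with constant Markov maps.
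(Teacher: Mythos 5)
Your proof is correct and follows essentially the same route as the paper: simulate the channel by the Gaussian program of Theorem~\ref{th:channel-sim-finite}, bound the simulated channel's metric by $n(1+k\varepsilon)(J+c)$ via (M2) and (E), transfer this to $G_{\Phi^{\otimes n}}(\Delta^{(n)})$ using the one-sided continuity (C), conclude with (A), and get the matching lower bound from Theorem~\ref{th:G>Gmin}. Your explicit bootstrap of (N) out of (N0) by restricting the channel axioms to constant Markov maps and invoking Theorem~\ref{th:uniquenss} is a detail the paper glosses over, and the only piece you omit is the paper's closing verification that $G^{\min}$ itself satisfies (C), which is a consistency check rather than part of the stated uniqueness claim.
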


\begin{proof}
Due to Theorem\thinspace\ref{th:G>Gmin}, we only have to show $G_{\Phi}\left(
\Delta\right)  \leq G_{\Phi}^{\min}\left(  \Delta\right)  $. Consider the
simulation of $\left\{  \Phi,\Delta\right\}  $ by $\left\{  q_{\varepsilon
}^{n},\delta_{\varepsilon}^{n}\right\}  $ as of Theorem\thinspace
\ref{th:channel-sim-finite}. Due to Theorem\thinspace\ref{th:uniquenss},
\[
G_{q_{\varepsilon}^{n}}\left(  \delta_{\varepsilon}^{n}\right)
=J_{q_{\varepsilon}^{n}}\left(  \delta_{\varepsilon}^{n}\right)  .
\]
Therefore, due to (\ref{sim-channel-error}), we have
\begin{align*}
&  0\underset{\text{(C)}}{\leq}\varliminf_{n\rightarrow\infty}\frac{1}%
{n}\left(  G_{\Lambda\circ\left(  \mathbf{I}\otimes q_{\varepsilon}%
^{n}\right)  }\left(  \Lambda\circ\left(  \mathbf{I}\otimes\delta
_{\varepsilon}^{n}\right)  \right)  -G_{\Phi^{\otimes n}}\left(
\Delta^{\left(  n\right)  }\right)  \right) \\
&  \underset{\text{(M)}}{\leq}\varliminf_{n\rightarrow\infty}\frac{1}%
{n}\left(  G_{\mathbf{I}\otimes q_{\varepsilon}^{n}}\left(  \mathbf{I}%
\otimes\delta_{\varepsilon}^{n}\right)  -G_{\Phi^{\otimes n}}\left(
\Delta^{\left(  n\right)  }\right)  \right) \\
&  \underset{\text{(E)}}{=}\varliminf_{n\rightarrow\infty}\frac{1}{n}\left(
G_{q_{\varepsilon}^{n}}\left(  \delta_{\varepsilon}^{n}\right)  -G_{\Phi
^{\otimes n}}\left(  \Delta^{\left(  n\right)  }\right)  \right) \\
&  =\varliminf_{n\rightarrow\infty}\frac{1}{n}\left(  J_{q_{\varepsilon}^{n}%
}\left(  \delta_{\varepsilon}^{n}\right)  -G_{\Phi^{\otimes n}}\left(
\Delta^{\left(  n\right)  }\right)  \right) \\
&  \leq\left(  1+\varepsilon k\right)  \left(  G_{\Phi}^{\min}\left(
\Delta\right)  +c\right)  -\varlimsup_{n\rightarrow\infty}\frac{1}{n}%
G_{\Phi^{\otimes n}}\left(  \Delta^{\left(  n\right)  }\right) \\
&  \underset{\text{(A)}}{=}\left(  1+\varepsilon k\right)  \left(  G_{\Phi
}^{\min}\left(  \Delta\right)  +c\right)  -G_{\Phi}\left(  \Delta\right)  .
\end{align*}
Since $\varepsilon>0$ and $c>0$ are arbitrary, we have
\[
G_{\Phi}\left(  \Delta\right)  \leq G_{\Phi}^{\min}\left(  \Delta\right)  .
\]
Finally, we show $G_{\Phi}^{\min}\left(  \Delta\right)  $ satisfies (C). Let
$x_{\ast}\in\Omega_{\mathrm{in}}$ with $J_{\Phi\left(  \cdot|x\right)
}\left(  \Delta\left(  \cdot|x_{\ast}\right)  \right)  =G_{\Phi}^{\min}\left(
\Delta\right)  $, and $x_{\ast}^{n}=x_{\ast}x_{\ast}\cdots x_{\ast}$. Then,
since $G_{\Psi^{n}}^{\min}\left(  \Delta^{\prime n}\right)  \geq J_{\Psi
^{n}\left(  \cdot|x_{\ast}^{n}\right)  }\left(  \Delta^{\prime n}\left(
\cdot|x_{\ast}^{n}\right)  \right)  $, we have
\[
\varliminf_{n\rightarrow\infty}\frac{1}{n}\left(  G_{\Psi^{n}}^{\min}\left(
\Delta^{\prime n}\right)  -G_{\Phi^{\otimes n}}^{\min}\left(  \Delta^{\left(
n\right)  }\right)  \right)  \geq\varliminf_{n\rightarrow\infty}\frac{1}%
{n}\left\{  J_{\Psi^{n}\left(  \cdot|x\right)  }\left(  \Delta^{\prime
n}\left(  \cdot|x_{\ast}^{n}\right)  \right)  -J_{\Phi\left(  \cdot|x\right)
^{\otimes n}}\left(  \Delta\left(  \cdot|x_{\ast}\right)  ^{^{\left(
n\right)  }}\right)  \right\}  .
\]
The LHS of this is non-negative due to Theorem\thinspace\ref{th:uniquenss},
since
\begin{align*}
\left\Vert \Psi^{n}\left(  \cdot|x_{\ast}^{n}\right)  -\Phi\left(
\cdot|x_{\ast}\right)  ^{\otimes n}\right\Vert _{\mathrm{1}}  &
\leq\left\Vert \Psi^{n}-\Phi^{\otimes n}\right\Vert _{\mathrm{cb}}=o\left(
1\right)  ,\\
\frac{1}{\sqrt{n}}\left\Vert \Delta^{\prime n}\left(  \cdot|x_{\ast}%
^{n}\right)  -\Delta\left(  \cdot|x_{\ast}\right)  ^{\left(  n\right)
}\right\Vert _{\mathrm{1}}  &  \leq\frac{1}{\sqrt{n}}\left\Vert \Delta^{\prime
n}-\Delta^{\left(  n\right)  }\right\Vert _{\mathrm{cb}}=o\left(  1\right)  .
\end{align*}

\end{proof}

\subsection{Continuous inputs}

In this subsection, $\Omega_{\mathrm{in}}$ is a compact set in $%
\mathbb{R}
^{d}$. \ Also, $\left\Vert x\right\Vert $ is usual 2-norm. 

\begin{theorem}
\label{th:sim-channel-cont}Suppose $\left\vert \Omega_{\mathrm{out}%
}\right\vert <\infty$ and
\[
\max\left\{  \left\Vert \Phi\left(  \cdot|x\right)  -\Phi\left(
\cdot|x^{\prime}\right)  \right\Vert _{1},\left\Vert \Delta\left(
\cdot|x\right)  -\Delta\left(  \cdot|x^{\prime}\right)  \right\Vert \right\}
\leq f\left(  \left\Vert x-x^{\prime}\right\Vert \right)
\]
holds for some $\lim_{t\rightarrow0}f\left(  t\right)  =0$. Let us define
$\left\{  q_{\varepsilon}^{n},\delta_{\varepsilon}^{n}\right\}  :=$
$\{\mathrm{N}\left(  0,1\right)  ,\delta\mathrm{N}\left(  0,1\right)
\}^{\otimes n\left(  1+k\varepsilon\right)  \left(  J+c\right)  }$ where
$J=G_{\Phi}^{\min}\left(  \Delta\right)  =\max_{1\leq x\leq k}J_{\Phi\left(
\cdot|x\right)  }\left(  \Delta\left(  \cdot|x\right)  \right)  $ and
$\varepsilon>0$, $c>0$ are arbitrary. Then, there is a family $\left\{
\Phi_{t},\Delta_{t}\right\}  _{t\geq0}$ and $\left\{  \Lambda_{t}%
^{n},q_{\varepsilon,t}^{n},\delta_{\varepsilon,t}^{^{\prime}n}\right\}
_{t\geq0}$ such that%
\begin{align}
\left\Vert \Phi_{t}^{\otimes n}\left(  p\right)  -\Lambda_{t}^{n}\left(
p\otimes q_{\varepsilon,t}^{n}\right)  \right\Vert _{\mathrm{cb}} &  \leq
\frac{C_{t}}{\sqrt{\varepsilon n}}\label{cont-sim-1}\\
\frac{1}{\sqrt{n}}\left\Vert \Delta_{t}^{\left(  n\right)  }\left(  p\right)
-\Lambda_{t}^{n}\left(  p\otimes\delta_{\varepsilon,t\text{ }}^{^{\prime}%
n}\right)  \right\Vert _{\mathrm{cb}} &  \leq\frac{C_{t}}{\left(  \varepsilon
n\right)  ^{1/4}}\,\,\,\,(\lim_{t\rightarrow0}C_{t}<\infty),\label{cont-sim-2}%
\\
\lim_{t\rightarrow0}\left\Vert \Phi_{t}-\Phi\right\Vert _{\mathrm{cb}} &
=\lim_{t\rightarrow0}\left\Vert \Delta_{t}-\Delta\right\Vert _{\mathrm{cb}%
}=0.\label{cont-sim-3}%
\end{align}

\end{theorem}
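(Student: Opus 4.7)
The strategy is to reduce to the finite-input setting of Theorem~\ref{th:channel-sim-finite} by approximating $\Phi$ with a piecewise-constant channel obtained from a fine partition of the compact input space $\Omega_{\mathrm{in}}$.

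For each $t>0$, I exploit compactness of $\Omega_{\mathrm{in}}\subset\mathbb{R}^d$ to fix a finite measurable partition $\{U_i^t\}_{i=1}^{k_t}$ into cells of diameter at most $t$, and I pick representatives $x_i^t\in U_i^t$. Define the piecewise-constant approximations
\[
\Phi_t(\cdot|x):=\Phi(\cdot|x_i^t),\qquad \Delta_t(\cdot|x):=\Delta(\cdot|x_i^t),\qquad x\in U_i^t.
\]
Since the cb-norm of a classical channel difference is $\sup_x\|\cdot\|_1$, the hypothesized modulus $f$ immediately yields $\|\Phi_t-\Phi\|_{\mathrm{cb}}\leq f(t)$ and $\|\Delta_t-\Delta\|_{\mathrm{cb}}\leq f(t)$, giving (\ref{cont-sim-3}).

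Now factor $\Phi_t=\bar{\Phi}_t\circ\pi_t$, where $\pi_t:\Omega_{\mathrm{in}}\to\{1,\ldots,k_t\}$ is the deterministic partition map and $\bar{\Phi}_t(\cdot|i):=\Phi(\cdot|x_i^t)$; analogously $\Delta_t=\bar{\Delta}_t\circ\pi_t$. Because $\bar{\Phi}_t$ has a finite input alphabet and $\Omega_{\mathrm{out}}$ is finite, Theorem~\ref{th:channel-sim-finite} (with the same $\varepsilon$ and $c$, applied with $k=k_t$) produces a Markov map $\bar{\Lambda}_t^n$ and the Gaussian-shift program $\{q_{\varepsilon,t}^n,\delta_{\varepsilon,t}^{\prime n}\}=\{\mathrm{N}(0,1),\delta\mathrm{N}(0,1)\}^{\otimes n(1+k_t\varepsilon)(J+c)}$ satisfying the finite-input analogues of (\ref{cont-sim-1})--(\ref{cont-sim-2}). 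Setting $\Lambda_t^n:=\bar{\Lambda}_t^n\circ(\pi_t^{\otimes n}\otimes\mathbf{I})$ and using the identities $\Phi_t^{\otimes n}(p)=\bar{\Phi}_t^{\otimes n}(\pi_t^{\otimes n}(p))$ and $\Lambda_t^n(p\otimes q_{\varepsilon,t}^n)=\bar{\Lambda}_t^n(\pi_t^{\otimes n}(p)\otimes q_{\varepsilon,t}^n)$, the finite-input bounds transfer directly to $\Phi_t$ and $\Delta_t$.

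The main obstacle is securing $\lim_{t\to 0}C_t<\infty$ while $k_t\to\infty$. A naive application of Theorem~\ref{th:channel-sim-finite} gives a constant scaling with $k_t$, because each cell receives its own Gaussian-shift reservoir of size $\max(N_i,\varepsilon n)(J+c)$ and the cb-errors combine additively. The remedy is a marginal-weighted accounting: cells whose theoretical input weight falls below $\varepsilon$ are pooled into a single ``rare'' type, and for each of the $O(1/\varepsilon)$ remaining dominant cells Hoeffding's inequality controls the deviation of the empirical type $N_i/n$ from the marginal at rate $1/\sqrt{n}$. This explains the stronger $1/\sqrt{\varepsilon n}$ rate in (\ref{cont-sim-1}), which matches concentration of the type, while the tangent bound (\ref{cont-sim-2}) retains the slower $1/(\varepsilon n)^{1/4}$ rate inherited from the Stein/zero-bias estimates behind Theorem~\ref{th:CLTforBin}. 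The per-cell constant is a continuous function of $\{\Phi(\cdot|x),\Delta(\cdot|x):x\in\Omega_{\mathrm{in}}\}$, which is uniformly bounded since $\Omega_{\mathrm{in}}$ is compact and these maps are continuous; combined with the pooling argument this yields $\lim_{t\to 0}C_t<\infty$.
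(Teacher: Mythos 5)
Your core construction is exactly the paper's: discretize the compact input set by a scale-$t$ lattice (finitely many points by compactness), define $\Phi_{t},\Delta_{t}$ by rounding each input to the nearest lattice point so that (\ref{cont-sim-3}) follows from the modulus $f$, and invoke Theorem~\ref{th:channel-sim-finite} for the resulting finite-input channel; the paper's proof consists of precisely these two steps and nothing more. One caution about your added argument for $\lim_{t\rightarrow0}C_{t}<\infty$: the errors in (\ref{cont-sim-1})--(\ref{cont-sim-2}) are measured in $\left\Vert \cdot\right\Vert _{\mathrm{cb}}$, i.e., as a supremum over all input distributions (equivalently, a worst case over input sequences $x^{n}$), so there is no ``theoretical input weight'' or marginal for Hoeffding's inequality to concentrate the empirical type around --- an adversary may load all of $x^{n}$ onto any single cell. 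The pooling of low-count cells must therefore be done per input sequence, based on the empirical counts $N_{i}$, which is exactly what the proof of Theorem~\ref{th:channel-sim-finite} already does, and which still leaves the constant scaling additively with the number of cells $k_{t}$. So the difficulty you flag is genuine but your remedy does not close it; to be fair, the paper's own proof does not address the $k_{t}$-dependence of $C_{t}$ either (it simply cites Theorem~\ref{th:channel-sim-finite}), and the condition $\lim_{t\rightarrow0}C_{t}<\infty$ is not actually used in the subsequent uniqueness argument, which only requires the simulation for each fixed $t$.
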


\begin{proof}
Let \ $\mathcal{A}_{t}\subset\Omega_{\mathrm{in}}=%
\mathbb{R}
^{d}$ be the totality of lattice points such that \ $\min_{x,y\in
\mathcal{A}_{t}}\left\Vert x-y\right\Vert =t$. Define
\[
\Phi_{t}\left(  \cdot|x\right)  :=\Phi\left(  \cdot|y\right)  ,\quad\Delta
_{t}\left(  \cdot|x\right)  :=\Delta\left(  \cdot|y\right)  ,
\]
where $y$ is the closest point in $\mathcal{A}_{t}$ to $x$. By assumption,
$\left\{  \Phi_{t},\Delta_{t}\right\}  $ satisfies (\ref{cont-sim-3}). By
Theorem\thinspace\ref{th:channel-sim-finite},  we can compose $\Lambda_{t}%
^{n}$ \ with (\ref{cont-sim-1}) and (\ref{cont-sim-2}).
\end{proof}

\begin{description}
\item[(C2)] If $\lim_{t\rightarrow0}\left\Vert \Phi_{t}-\Phi\right\Vert
_{\mathrm{cb}}=\lim_{t\rightarrow0}\left\Vert \Delta_{t}-\Delta\right\Vert
_{\mathrm{cb}}=0$, $\lim_{t\rightarrow0}G_{\Phi_{t}}\left(  \Delta_{t}\right)
=G_{\Phi}\left(  \Delta\right)  .$
\end{description}

\begin{theorem}
Suppose $\left\{  \Phi,\Delta\right\}  $ satisfies all the asumptions of
Theorem\thinspace\ref{th:sim-channel-cont}. Then, if $G$ satisfies (M1), (M2),
(E), (A), (C), (N0), and (C2), $\ $
\[
G_{\Phi}\left(  \Delta\right)  =G_{\Phi}^{\min}\left(  \Delta\right)  .
\]

\end{theorem}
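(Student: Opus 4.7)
The plan is to reduce to the finite-input uniqueness of Theorem \ref{channel-cont-unique} by discretizing $\{\Phi,\Delta\}$ to the lattice-supported $\{\Phi_t,\Delta_t\}$ furnished by Theorem \ref{th:sim-channel-cont}, running the same chain of inequalities at each scale $t>0$, and finally passing to the limit $t\to 0$ via the new axiom (C2). The lower bound $G_\Phi(\Delta)\geq G_\Phi^{\min}(\Delta)$ is supplied for free by Theorem \ref{th:G>Gmin}, so only $G_\Phi(\Delta)\leq G_\Phi^{\min}(\Delta)$ needs work.

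Fix $t>0$. Because $\Omega_{\mathrm{in}}$ is compact and $\mathcal{A}_t$ is a lattice, $\{\Phi_t,\Delta_t\}$ effectively has input set of some finite cardinality $k_t<\infty$. Theorem \ref{th:sim-channel-cont} provides a tangent simulation $\{\Lambda_t^n,q_{\varepsilon,t}^n,\delta_{\varepsilon,t}^{\prime n}\}$ of $\{\Phi_t,\Delta_t\}^{\otimes n}$ whose cb-error vanishes at rate $O(n^{-1/4})$ and whose program is the Gaussian family with Fisher information $n(1+k_t\varepsilon)(J+c)$, where $J=G_\Phi^{\min}(\Delta)$. I would then replay the proof of Theorem \ref{channel-cont-unique} verbatim with this simulation in place:
\begin{align*}
0 & \underset{(\mathrm{C})}{\leq}\varliminf_{n\to\infty}\frac{1}{n}\bigl(G_{\Lambda_t^n\circ(\mathbf{I}\otimes q_{\varepsilon,t}^n)}(\Lambda_t^n\circ(\mathbf{I}\otimes\delta_{\varepsilon,t}^{\prime n}))-G_{\Phi_t^{\otimes n}}(\Delta_t^{(n)})\bigr)\\
 & \underset{(\mathrm{M1}),(\mathrm{E})}{\leq}\varliminf_{n\to\infty}\frac{1}{n}\bigl(G_{q_{\varepsilon,t}^n}(\delta_{\varepsilon,t}^{\prime n})-G_{\Phi_t^{\otimes n}}(\Delta_t^{(n)})\bigr)\\
 & \underset{\text{Thm.\,\ref{th:uniquenss}}}{=}\varliminf_{n\to\infty}\frac{1}{n}\bigl(J_{q_{\varepsilon,t}^n}(\delta_{\varepsilon,t}^{\prime n})-G_{\Phi_t^{\otimes n}}(\Delta_t^{(n)})\bigr)\\
 & \underset{(\mathrm{A})}{\leq}(1+k_t\varepsilon)(J+c)-G_{\Phi_t}(\Delta_t).
\end{align*}
Since $\varepsilon,c>0$ are arbitrary (with $k_t$ held fixed), this yields $G_{\Phi_t}(\Delta_t)\leq J=G_\Phi^{\min}(\Delta)$ for every $t>0$.

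Applying (C2) then gives $G_\Phi(\Delta)=\lim_{t\to 0}G_{\Phi_t}(\Delta_t)\leq G_\Phi^{\min}(\Delta)$, and combining with the Theorem \ref{th:G>Gmin} bound closes the argument. The chief subtlety is the order of limits: $k_t\to\infty$ as $t\to 0$, so $k_t\varepsilon$ cannot be made small uniformly in $t$, and a naive single limit would fail. This is precisely what (C2) is designed to cope with---one discretizes first, exploits the finite-input Gaussian-simulation bound by letting $\varepsilon,c\to 0$ at fixed $t$, and only afterwards lets the mesh tend to zero. A minor bookkeeping point is that Theorem \ref{th:uniquenss} applies to the Gaussian program because (M0), (A0), (C0) descend from (M1), (A), (C) to the embedded sub-space of constant-output channels, and (N0) is assumed directly; and that $C_t<\infty$ for each fixed $t$ is guaranteed by Theorem \ref{th:sim-channel-cont}, so the cb-error hypothesis needed to apply (C) is indeed satisfied for every fixed $t$.
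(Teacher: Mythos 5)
Your proposal is correct and follows essentially the same route as the paper: discretize to the lattice channel $\left\{\Phi_{t},\Delta_{t}\right\}$, establish $G_{\Phi_{t}}\left(\Delta_{t}\right)\leq G_{\Phi}^{\min}\left(\Delta\right)$ for each fixed $t$ by the finite-input simulation argument of Theorem\thinspace\ref{channel-cont-unique}, and then pass to the limit $t\rightarrow0$ via (C2). The only cosmetic difference is that you replay the chain of inequalities explicitly and fold the comparison $G_{\Phi_{t}}^{\min}\left(\Delta_{t}\right)\leq G_{\Phi}^{\min}\left(\Delta\right)$ into the choice of Gaussian program, whereas the paper states these as two separate steps.
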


\begin{proof}
Again, we only have to show $G_{\Phi}\left(  \Delta\right)  \leq G_{\Phi
}^{\min}\left(  \Delta\right)  $ . By Theorem \ref{th:sim-channel-cont},
$G_{\Phi_{t}}\left(  \Delta_{t}\right)  =G_{\Phi_{t}}^{\min}\left(  \Delta
_{t}\right)  $. Therefore, due to (C2),
\[
G_{\Phi}\left(  \Delta\right)  =\lim_{t\rightarrow0}G_{\Phi_{t}}\left(
\Delta_{t}\right)  =\lim_{t\rightarrow0}G_{\Phi_{t}}^{\min}\left(  \Delta
_{t}\right)  .
\]
On the other hand, by construction of $\left\{  \Phi_{t},\Delta_{t}\right\}
$,
\[
G_{\Phi_{t}}^{\min}\left(  \Delta_{t}\right)  \leq\sup_{x}J_{\Phi\left(
\cdot|x\right)  }\left(  \Delta\left(  \cdot|x\right)  \right)  =G_{\Phi
}^{\min}\left(  \Delta\right)  .
\]
Hence, we have the assertion.
\end{proof}

\subsection{Quantum states as a classical channel}

A quantum state can be viewed as a channel which takes a measurement as an
input, and outputs measurement result. Hence, if we restrict the measurements
to separable measurements, the asymptotic theory discussed in this paper is
applicable to quantum states also, proving the uniqueness of the metric. On
the other hand, there are variety of monotone metrics, and lower asymptotic
continuity is proven for some of them, e.g., SLD and RLD metric. This
appearent contradiction can be circumvented by recalling that the theory of
this paper is not applicable to the case of collective measurement.

\bigskip
\end{document}